\newtheoremstyle{mydefinition}
  {0.75em} {0.75em} {} {} {\bfseries} {.} {1em} {}
\theoremstyle{mydefinition}
\newtheorem{definition}{Definition}
\theoremstyle{mydefinition}
\newtheorem{example}{Example}
\theoremstyle{remark}
\newtheoremstyle{mytheorem}
  {0.75em} {0.75em} {\itshape} {} {\bfseries} {.} {1em} {}
\theoremstyle{mytheorem}
\newtheorem{theorem}{Theorem}
\theoremstyle{mytheorem}
\newtheorem{lemma}{Lemma}
\theoremstyle{mytheorem}
\newtheorem{corollary}{Corollary}
\theoremstyle{mytheorem}
\newtheorem{proposition}{Proposition}
\definecolor{brass}{rgb}{0.71, 0.65, 0.26}
\definecolor{blue-green}{rgb}{0.0, 0.87, 0.87}
\definecolor{blue-green-v1}{rgb}{0.0, 0.77, 0.9}
\definecolor{blue-green-v2}{rgb}{0.0, 0.9, 0.77}
\newlength{\YTboxsize}
\newcommand{\halfgraybox}{%
  \tikz[baseline=($(box.south)!0.5!(box.center)$)]{%
    \node[inner sep=0pt, minimum size=\YTboxsize] (box) {};
    \fill[gray!50] (box.south west) rectangle ($(box.south east)!0.5!(box.north east)$);
  }%
}
\newcommand{\dashedbox}{%
  \tikz[baseline=($(box.south)!0.5!(box.center)$)]{%
    \node[inner sep=0pt, minimum size=\YTboxsize] (box) {};
    \draw[dashed,gray] (box.south west) rectangle (box.north east);
  }%
}
\newcommand{\halfgraydashedbox}{%
  \tikz[baseline=($(box.south)!0.5!(box.center)$)]{%
    \node[inner sep=0pt, minimum size=\YTboxsize] (box) {};
    \fill[gray] (box.north west) rectangle ($(box.south east)!0.5!(box.north east)$);
    \draw ($(box.south west)!0.5!(box.north west)$) -- (box.north west) -- (box.north east) -- ($(box.south east)!0.5!(box.north east)$) -- cycle;
    \draw[dashed,gray] ($(box.south west)!0.5!(box.north west)$) -- (box.south west) -- (box.south east) -- ($(box.south east)!0.5!(box.north east)$);
  }%
}
\newcommand{\foldarrowvertical}[2][]{%
  \tikz[baseline=(C.base),#1]{%
    \coordinate (C) at (0,0);
    \draw[->,>=stealth,thick] (0,-0.6) to[out=80,in=-80] (0,0.6);
    \if\relax\detokenize{#2}\relax\else
      \node[above, font=\footnotesize] at (0,0.6) {#2};
    \fi
  }%
}
\begin{document}

\title[Extendibility of Brauer states]{Extendibility of Brauer states}


\author[1,2]{\fnm{Adrian} \sur{Solymos}}\email{solymos.adrian@wigner.hun-ren.hu}

\author[3]{\fnm{Dávid} \sur{Jakab}}\email{david.jakab@duke.edu}

\author[1,4,5,6]{\fnm{Zoltán} \sur{Zimborás}}\email{zoltan.zimboras@helsinki.fi}

\affil[1]{\small \orgdiv{Quantum Computing and Quantum Information Research Group}, \orgname{HUN-REN Wigner Research Centre for Physics}, \orgaddress{\street{Konkoly–Thege Miklós út 29-33}, \city{Budapest}, \postcode{1525}, \country{Hungary}}}

\affil[2]{\small \orgdiv{Department of Physics of Complex Systems}, \orgname{E\"otv\"os Lor\'and University}, \orgaddress{\street{Pázmány Péter sétány 1/A}, \city{Budapest}, \postcode{1117}, \country{Hungary}}}

\affil[3]{\small \orgdiv{Pratt School of Engineering}, \orgname{Duke University}, \orgaddress{\street{305 Teer Engineering Building Box 90271}, \city{Durham}, \postcode{27708}, \state{North Carolina}, \country{USA}}}

\affil[4]{\small \orgdiv{Department of Programming Languages and Compilers}, \orgname{E\"otv\"os Lor\'and University}, \orgaddress{\street{Pázmány Péter sétány 1/C}, \city{Budapest}, \postcode{1117}, \country{Hungary}}}

\affil[5]{\small \orgname{Algorithmiq Ltd}, \orgaddress{\street{Kanavakatu 3C}, \city{Helsinki}, \postcode{00160}, \country{Finland}}}

\affil[6]{\small \orgname{Department of Physics, University of Helsinki}, \orgaddress{\street{Gustaf Hällströmin katu 2}, \city{Helsinki}, \postcode{00014}, \country{Finland}}}


\abstract{
    We investigate the extendibility problem for Brauer states, focusing on the symmetric two-sided extendibility and the de~Finetti extendibility. By employing the representation theory of the unitary and orthogonal groups, we provide a general recipe for determining the set of $(n,m)$-extendible and $n$-de~Finetti-extendible Brauer states. From the concrete form of the commutant of the diagonal action of the orthogonal group, we explicitly determine the set of parameters for which the Brauer states are \mbox{$(1,2)$-,} $(1,3)$- and $(2,2)$-extendible in any dimension $d$ and find that Brauer states extend with a non-trivial trade-off in $n$ and $m$. Using the same recipe we also provide an estimate of the set of $(1,m)$-extendible Brauer states for any $m$ and dimension $d$. Finally, using the branching rules from $\U(d)$ to $\Ort(d)$, we obtain the set of $n$-de~Finetti-extendible Brauer states in any dimension, and also analytically describe the $n\to\infty$ limiting shape which turns out not to be a polygon for odd dimensions.
}

\keywords{Quantum information theory, Entanglement theory, Extendibility, Brauer states, Representation theory, Orthogonal group}

\maketitle

\section{Introduction}\label{sec:intro}

State extension problems have played a significant role in quantum information theory from its earliest stages, with various versions found in the literature today \cite{werner1989application, terhal2003symmetric, doherty2014entanglement, christandl2004squashed, allerstorfer2023monogamy}. In general, the idea in every state extension scheme is to share one (or more) given state(s) between more parties than they originally belonged to in such a way that the original state(s) appear(s) in the shared version. More concretely, in the particular case of the so-called \textit{two-sided extendibility}, a bipartite state of Alice and Bob is said to be $(n,m)$-extendible if there exists a state shared between $n$ number of Alices and $m$ number of Bobs such that the reduced state of each pair is the original one \cite{johnson2013compatible}.

Unlike classical states, quantum states cannot necessarily be arbitrarily extended to multiple parties in such a way that the two-particle reduced states are all identical to the original state. In fact, only separable states can be arbitrarily extended in this manner \cite{fannes1988symmetric,raggio1989quantum, schumacher2002conjecture, doherty2004complete}. On the other end of the spectrum, pure states that are entangled are not extendible at all. These ideas are behind the \textit{monogamy of entanglement}, which states, in loose terms, that if two systems are highly entangled, they cannot be entangled with any other system \cite{terhal2004entanglement,koashi2004monogamy}.

As a consequence of these properties, extension problems have significance in entanglement and non-locality theory \cite{werner1989application, werner1990remarks, doherty2002distinguishing,terhal2003symmetric, christandl2004squashed,doherty2004complete}. The degree of extendibility, or extendibility number of a state, describes the maximal number of parties the state can be extended to. It can be thought of as an entanglement measure as it is 1-LOCC monotone \cite{nowakowski2016symmetric}. It is also related to the family of measures called unextendible entanglement \cite{wang2024quantifying}, or the squashed entanglement for which it serves as a lower bound \cite{brandao2011faithful, li2018squashed}.

Applications beyond entanglement theory that build on the idea of state extensions include quantum data hiding \cite{brandao2011faithful} and quantum key distribution \cite{moroder2006one,myhr2009symmetric,khatri2016symmetric}. A complete resource theory of extendibility has also been worked out \cite{kaur2019extendibility}.

However, concrete calculations of extendibility numbers have only been done for a few types of states \cite{Ranade2009symmetric, lami2019extendibility, krumnow2024extendibility,negari2025extendibility}. As is often the case, it has been useful to make the problem more tractable by focusing on states with symmetry, such as Werner, isotropic, or Brauer states \cite{werner1989states, keyl2002fundamentals, horodecki1999reduction}. These are families of states described by one or two parameters, invariant to some unitary or orthogonal actions on their Hilbert space. The extension problem for these has been first generally examined by Johnson and Viola in \cite{johnson2013compatible}. Since then, there have been many advancements in this field, e.g.~by the authors of the present work in \cite{jakab2022extendibility} and by Allerstorfer et al. in \cite{allerstorfer2023monogamy}.

Other than being the test subject for many schemes in entanglement theory \cite{vollbrecth2001entanglement}, states with symmetry, more particularly extendible ones, have applications in the quantum Max-Cut problem \cite{anshu2020beyond, parekh2021application, king2023improved}, the quantum cloning problem \cite{werner1998optimal,keyl1999optimal, nechita2023asymmetic}, quantum position verification \cite{buhrman2014position, kent2011quantum, unruh2014quantum} and many body physics \cite{jakab2021phase}.

The structure of this article is as follows: Section \ref{sec:ext} introduces the basic definitions and properties of extendibility; Section \ref{sec:brauer} describes the set of Brauer states; Sections \ref{sec:recipe} and \ref{sec:defi} provide the recipes to determine the set of $(n,m)$-extendible and $n$-de~Finetti-extendible Brauer states, respectively. Sections \ref{sec:result1} and \ref{sec:estimate} present concrete results by applying the first recipe: the sets of $(1,2)$-, $(2,2)$- and $(1,3)$-extendible Brauer states for any dimension and an estimate of the set of $(1,m)$-extendible Brauer states for any dimension is determined. Finally, Section \ref{sec:result2} presents concrete results by applying the second recipe: the sets of $n$-de~Finetti-extendible Brauer states for any dimension and the set of $\infty$-de~Finetti-extendible Brauer states for any dimension are determined. See also Reference \cite{allerstorfer2023monogamy} for their results in this latter topic, which are in line with our findings.

Note that throughout the article $\hil$ denotes a finite-dimensional complex Hilbert space and $\states{\hil}$ denotes the set of quantum states on $\hil$.

\section{Extendibility}\label{sec:ext}

There exist many notions of extendibility (also called shareability) all closely related to the quantum marginal problem \cite{ruskai1969representability,haapasalo2021quantum} which asks the following: Given reduced states, is there a global state such that these reduced states are its marginals? In the case of extendibility, the reduced states are usually assumed to be the same bipartite state. In the following, we present two versions of extendibility: two-sided extendibility and de~Finetti extendibility. In the language of Reference \cite{allerstorfer2023monogamy} these are, respectively, extendibility on the bipartite complete graph and the complete graph.

\subsection{Two-sided extendibility}

In general, when talking about extendibility, the literature usually refers to the notion of one-sided extendibility \cite{terhal2004entanglement, doherty2004complete}, as it is the most well known. However, two-sided extendibility has also gained attention \cite{terhal2003symmetric, johnson2013compatible, jakab2022extendibility} and is a generalisation of the former.

\begin{definition}
    A bipartite quantum state $\rho\in\states{\hil_A\ot\hil_B}$ on the Hilbert-space $\hil_A\ot\hil_B$ is said to be $(n,m)$-extendible if there exists a quantum state $\hat{\rho}\in\states{\hil_A^{\ot n}\ot \hil_B^{\ot m}}$ such that restricting it to any $\hil_A\ot\hil_B$ pair gives the original state. That is, for all $i\in[n]$ and $j\in[n+1,n+m]$\footnote{The notation $[n]$ denotes the set of positive natural numbers up to $n$, $[n]\coloneqq\{1,2,\ldots, n\}$. Similarly, $[n+1,n+m]\coloneqq\{n+1, n+2, \ldots, n+ m\}$.}:
    \begin{equation}
        \underset{\substack{\mathcal{I}\setminus \{i,j\}}}{\tr} \left (\hat{\rho} \right )=\rho\text{,}
    \end{equation}
    where $\mathcal{I}$ denotes the set of indices for the Hilbert spaces, and thus the trace denotes a partial trace to all but the $i$-th and $j$-th Hilbert spaces.
\end{definition}

A state $\hat{\rho}\in\states{\hil_A^{\ot n}\ot \hil_B^{\ot m}}$ will either be referred to as an extended form or an extending state of $\rho$, and it is usually not unique. Note that unlike the original definition of Reference \cite{terhal2003symmetric} for \textit{two-sided symmetric extendibility}, here permutation invariance is not required. However, if one is given an $(n,m)$-extending state which does not have permutation invariance for the two types of Hilbert spaces (Alice's and Bob's), then it can be made permutation invariant by simply twirling with all permutations from the group $\sym_n\times\sym_m$ \cite{johnson2013compatible}, where $\sym_n$ and $\sym_m$ denote the symmetric groups of degree $n$ and $m$, respectively.
 
One should think of two-sided extendibility like this: Given a bipartite state, can one find a state on a larger Hilbert space (with $n+m$ parts) such that its appropriate pairwise reductions reproduce the original state? If yes, the original state is said to be $(n,m)$-extendible.

The degree of a state's extendibility is closely related to entanglement, as attested by the following theorem \cite{fannes1988symmetric,raggio1989quantum,schumacher2002conjecture,doherty2004complete}:

\begin{theorem}
    A bipartite quantum state $\rho\in\states{\hil_A\ot\hil_B}$ is separable if and only if it is $(1,m)$-extendible for any $m$.
\end{theorem}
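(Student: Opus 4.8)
The plan is to prove the two implications separately; the forward direction is an elementary construction, while the converse requires a de~Finetti-type limiting argument. For the ``only if'' direction, suppose $\rho$ is separable, so that $\rho = \sum_k p_k\, \sigma_k \ot \tau_k$ with $\{p_k\}$ a probability distribution and $\sigma_k \in \states{\hil_A}$, $\tau_k \in \states{\hil_B}$. For any $m$ I would simply define
\begin{equation}
\hat\rho = \sum_k p_k\, \sigma_k \ot \tau_k^{\ot m} \in \states{\hil_A \ot \hil_B^{\ot m}},
\end{equation}
and verify directly that tracing out all Bob systems except the $j$-th returns $\sum_k p_k\, \sigma_k \ot \tau_k = \rho$ for every $j$ (using $\tr\tau_k = 1$). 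Hence $\rho$ is $(1,m)$-extendible for every $m$, which settles this implication.

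For the ``if'' direction, assume $\rho$ is $(1,m)$-extendible for all $m$, and for each $m$ fix an extending state $\hat\rho_m \in \states{\hil_A \ot \hil_B^{\ot m}}$. Using the twirling observation already noted in the text, I would first symmetrize each $\hat\rho_m$ over $\sym_m$ acting on the Bob factors, so that without loss of generality $\hat\rho_m$ is invariant under permutations of the $m$ copies of $\hil_B$; this does not change its pairwise marginals, which all remain equal to $\rho$. The goal is then to show that a bipartite state admitting such symmetric extensions for arbitrarily large $m$ must lie in the closed convex hull of product states, i.e.\ be separable.

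The core of the argument is the quantum de~Finetti theorem applied to the Bob systems while keeping $\hil_A$ as a fixed reference system. Concretely, a state that is permutation symmetric on $m$ subsystems has a two-party (here, $A$ together with a single $B$) marginal that converges, as $m\to\infty$, to a state of the form $\int p(d\sigma)\, \omega_\sigma \ot \sigma$, where $p$ is a probability measure on $\states{\hil_B}$ and each $\omega_\sigma \in \states{\hil_A}$, with an approximation error controlled by a bound vanishing like $1/m$. Since the marginal of $\hat\rho_m$ on $\hil_A \ot \hil_B^{(j)}$ equals the fixed state $\rho$ for every $m$, passing to the limit forces $\rho = \int p(d\sigma)\, \omega_\sigma \ot \sigma$, a continuous convex combination of product states, hence separable.

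I expect the main obstacle to be the limiting step itself: making rigorous the passage from the sequence of finite symmetric extensions to a genuine separable decomposition of $\rho$. This requires either invoking the finite quantum de~Finetti bound of \cite{doherty2004complete} together with compactness of the (finite-dimensional) state space to extract a convergent subsequence of the associated measures, or appealing directly to the infinite de~Finetti / symmetric-states theorem underlying \cite{fannes1988symmetric, raggio1989quantum, schumacher2002conjecture}. Care is needed to track the reference system $\hil_A$ through the de~Finetti estimate and to argue that the weak limit of the resulting measures yields a bona fide separable state; the closedness of the set of separable states in finite dimension is what finally guarantees $\rho$ is separable rather than merely a limit of separable states.
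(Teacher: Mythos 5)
Your proposal is correct, but note that the paper does not prove this theorem at all: it is quoted as a known result with citations to \cite{fannes1988symmetric,raggio1989quantum,schumacher2002conjecture,doherty2004complete}. Your argument is a faithful reconstruction of the standard proof in those references --- the explicit extension $\sum_k p_k\,\sigma_k\ot\tau_k^{\ot m}$ for the separable direction, and, for the converse, symmetrization over $\sym_m$ followed by the quantum de~Finetti theorem with a reference system (the version with the $O(d_B^2/m)$ bound, cf.\ \cite{christandl2007oneandahalf}) together with closedness of the separable set in finite dimension --- so it matches the approach the paper implicitly relies on.
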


Note that obviously the theorem also holds true for `left-extensions', as $\hil_B$ does not have a distinguished role. The theorem also implies that, in the case of non-separable states (i.e., entangled states), there exist maximal extendibility numbers above which the state can be shared no more. This could be thought of as an embodiment of the monogamy of entanglement \cite{terhal2004entanglement, koashi2004monogamy}. In particular, a pure entangled state cannot be shared at all, i.e., they are only $(1,1)$-extendible.

Note that, naturally, if a state is $(n,m)$-extendible, it is also $(n',m')$-extendible for all $n'\leq n$ and $m'\leq m$. Thus, a partial order can be introduced on the extendibility numbers, and this is what leads to the maximal extendibility numbers that characterise a state's entanglement. Furthermore, if a family of bipartite states is symmetric, i.e., invariant to the flip (swap) operator, it is always enough to consider $(n,m)$-extendibility with $n\leq m$, since the $(n,m)$-extendible subset of states is also the $(m,n)$-extendible subset of states.

It is conjectured that LOCC transformations cannot decrease the maximal extendibility numbers of a state, however, so far this has only been rigorously proven for 1-LOCC (i.e., local operations with only a single round of classical communication) transformations in Reference \cite{nowakowski2016symmetric}. This fact makes maximal extendibility numbers a candidate to be an entanglement monotone.

A state's maximal extendibility numbers also serve as an upper bound for its distance from the set of separable states \cite{navascues2009complete,brandao2011faithful, brandao2017quantum}. References \cite{brandao2017quantum, doherty2014entanglement} actually argue that it may be more natural to consider the 1-LOCC norm when discussing the distance between the set of $(1,m)$-extendible states and the set of separable states. This norm is linked to the optimal probability for distinguishing two equiprobable states with a quantum measurement that can be realised as 1-LOCC operations, with the communication from $B$ to $A$ \cite{brandao2011faithful,brandao2017quantum}:

\begin{theorem}
    Let $\rho\in\states{\hil_A\ot\hil_B}$ be a $(1,m)$-extendible quantum state. Then there exists a separable state $\sigma\in\states{\hil_A\ot\hil_B}$ such that
    \begin{equation}
        \norm{\rho-\sigma}_{\mathrm{LOCC}^{\leftarrow}} \leq \sqrt{\frac{2 \log (d_A)}{m}}\text{,}
    \end{equation}
    where $\norm{.}_{\mathrm{LOCC}^{\leftarrow}}$ denotes the 1-LOCC norm with communication going from B to A, and $d_A$ is the dimension of the Hilbert space $\hil_A$.
\end{theorem}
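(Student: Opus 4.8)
The plan is to reduce the statement to the faithfulness of squashed entanglement combined with a quantum Pinsker-type inequality, and to extract the conditional mutual information from extendibility by a chain-rule averaging argument. First I would exploit $(1,k)$-extendibility: let $\hat{\rho}\in\states{\hil_A\otimes\hil_B^{\otimes k}}$ be an extension, which after twirling with $\sym_k$ on the $B$-systems may be taken invariant under permutations of Bob's copies without altering any pairwise marginal $\hat{\rho}_{AB_j}=\rho$. Since Alice's entropy obeys $S(A)\le\log d_A$, subadditivity together with the Araki--Lieb inequality gives the total-correlation bound
\[
I(A:B_1\cdots B_k)_{\hat{\rho}}\;\le\;2\,S(A)\;\le\;2\log d_A .
\]
Expanding the left-hand side by the chain rule for the quantum mutual information,
\[
I(A:B_1\cdots B_k)\;=\;\sum_{j=1}^{k} I\!\left(A:B_j\,\middle|\,B_1\cdots B_{j-1}\right),
\]
and observing that the $k$ nonnegative summands add up to at most $2\log d_A$, at least one index $j$ must satisfy
\[
I\!\left(A:B_j\,\middle|\,B_1\cdots B_{j-1}\right)\;\le\;\frac{2\log d_A}{k}.
\]

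Second, I would convert this small conditional mutual information into a genuinely separable approximation of $\rho$. Writing $E:=B_1\cdots B_{j-1}$, the smallness of $I(A:B_j\mid E)$ expresses that, conditioned on a measurement of $E$, the systems $A$ and $B_j$ are nearly uncorrelated. Concretely, measuring $E$, discarding the correlations, and replacing each conditional state on $AB_j$ by the product of its two marginals produces a state $\sigma$ that is separable across the $A$--$B_j$ cut; because $\hat{\rho}_{AB_j}=\rho$, this $\sigma$ is precisely the candidate separable state against which we compare the original $\rho$.

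Third --- and this is the crux --- I would bound the one-way distinguishing distance $\norm{\rho-\sigma}_{1-\mathrm{LOCC}}$ by the conditional mutual information found above. The key observation is that the $1-\mathrm{LOCC}$ norm only sees correlations accessible through a single round of local measurements, so it is controlled by a \emph{measured} relative entropy; a measured quantum Pinsker inequality then yields $\norm{\rho-\sigma}_{1-\mathrm{LOCC}}^2\le C\,D^{\mathrm{meas}}(\rho\,\|\,\sigma)$, while monotonicity of relative entropy under the measure-and-prepare channel on $E$ bounds $D^{\mathrm{meas}}(\rho\,\|\,\sigma)$ by $I(A:B_j\mid E)$. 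Chaining these inequalities with the estimate from the first step gives
\[
\norm{\rho-\sigma}_{1-\mathrm{LOCC}}^2\;\le\;\frac{2\log d_A}{k},
\]
which is the asserted bound after taking square roots.

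The main obstacle is this third step: establishing rigorously that the $1-\mathrm{LOCC}$ norm is upper bounded by (the square root of) the conditional mutual information, and with the sharp constant. This is exactly the faithfulness-of-squashed-entanglement mechanism, and it requires both the measured quantum Pinsker inequality and a careful choice of the distinguishing local measurement on the $d_A$-dimensional system $A$ (which is why $d_A$, rather than $d_B$, enters the bound). The delicate point is tracking the numerical prefactors so that they collapse to exactly $\sqrt{2\log d_A/k}$ without stray $\ln 2$ factors; by contrast, the symmetrization, the chain rule, and the averaging that isolate a small conditional mutual information term are routine.
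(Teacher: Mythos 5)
You should first note that the paper does not prove this theorem at all: it is quoted verbatim from Brandão--Christandl--Yard \cite{brandao2011faithful}, so your proposal has to be measured against that known proof.

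Your first two steps (symmetrising the extension, the bound $I(A\!:\!B_1\cdots B_k)\le 2S(A)\le 2\log d_A$, the chain rule and pigeonhole) are fine. The genuine gap is in your third step. You claim that ``monotonicity of relative entropy under the measure-and-prepare channel on $E$'' lets you bound the post-measurement correlations of $A$ and $B_j$ by $I(A\!:\!B_j|E)$. No such monotonicity exists: conditional mutual information is \emph{not} monotone under channels, and in particular not under measurements, applied to the conditioning system $E$. The standard counterexample is entanglement swapping: take $E=E_1E_2$ with $A$ maximally entangled with $E_1$ and $B_j$ maximally entangled with $E_2$. Then $A$--$E$--$B_j$ is a quantum Markov chain, so $I(A\!:\!B_j|E)=0$, yet after a Bell measurement on $E$ every conditional state of $AB_j$ is maximally entangled, so the measured conditional mutual information equals $2$ bits and each conditional state is maximally far from the product of its marginals. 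Hence the inequality your Pinsker step relies on (average conditional relative entropy to product states $\le I(A\!:\!B_j|E)$ with quantum $E$) is simply false. This failure is precisely why faithfulness of squashed entanglement was a long-standing open problem rather than a two-line consequence of data processing, and your proposal inherits the problem rather than solving it.

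The actual argument of \cite{brandao2011faithful} (their one-way-LOCC de Finetti theorem, which is the statement quoted here) avoids this by reversing the order of operations: fix an arbitrary $1$-LOCC measurement, apply Bob's POVM to \emph{all} $k$ systems $B_1,\dots,B_k$ of the extension \emph{first}, obtaining classical outcomes $Y_1,\dots,Y_k$, and only then run the chain rule. With classical conditioning one gets $\sum_{j} I(A\!:\!Y_j|Y_{<j}) = I(A\!:\!Y_1\cdots Y_k)\le S(A)\le \log d_A$ (note: no factor $2$, since $S(A|Y)\ge 0$ for classical $Y$ --- this is exactly where the clean constant $\sqrt{2\log d_A/k}$ comes from; your factor-$2$ bound would lose a $\sqrt{2}$), and each term $I(A\!:\!Y_j|Y_{<j})$ now decomposes as an average over classical histories, so Pinsker applies legitimately. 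The price is that the resulting separable state $\sigma=\sum_{y}p_y\,\rho_A^{y}\otimes\rho_{B_j}^{y}$ depends on the chosen measurement, and a minimax (Sion-type) argument over the two convex sets --- separable states and $1$-LOCC measurements --- is needed to produce a single $\sigma$ that works for every measurement. Both ingredients (measure-first ordering and the minimax swap) are missing from your proposal, and the second cannot be dispensed with, since your construction of $\sigma$ as written is not tied to the measurements appearing in the $1$-LOCC norm. Also, chaining through faithfulness of squashed entanglement, as you suggest, would in any case yield worse constants than the stated $\sqrt{2\log d_A/k}$; the sharp form is obtained only by the direct measured argument.
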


It is worth noting that the 1-LOCC norm is more forgiving than the trace norm, i.e., more states can be regarded as approximately separable if one uses the 1-LOCC norm.

We have argued that a state's maximal extendibility numbers give relevant information about the state's entangledness. It is well known that calculating any entanglement measure for a mixed state is difficult and requires a convex optimization process. In a similar fashion, calculating the maximal extendibility numbers for an arbitrary state is difficult and can only be done numerically \cite{doherty2002distinguishing, doherty2004complete, navascues2009complete, boyd2004convex, benson2000solving}. However, restricting the analysis to states with symmetry allows us to use representation-theoretic techniques to derive analytic results. This in turn also means that the maximal extendibility numbers are calculable even for mixed states with such symmetries.

\subsection{de~Finetti extendibility}

While two-sided extendibility can be defined for any bipartite state, de~Finetti extendibility (also called exchangeability) can only be defined for symmetric bipartite states, i.e., states invariant to the flip operator, such as Werner states, for example. In the case of de~Finetti extendibility \cite{konig2005deFinetti,christandl2007oneandahalf}, only one number describes the extension and it is required that every bipartite reduction gives the original state.
\begin{definition}
    A symmetric bipartite quantum state $\rho\in\states{\hil\ot\hil}$ on the Hilbert space $\hil\ot\hil$ is said to be $n$-de~Finetti-extendible ($n\geq 2$) if there exists a quantum state $\hat{\rho}\in\states{\hil^{\ot n}}$ such that any pairwise reduction gives the original state. That is, for all $i,j\in[n]$, $i\neq j$:
    \begin{equation}
        \underset{\substack{\mathcal{I}\setminus \{i,j\}}}{\tr} \left (\hat{\rho} \right )=\rho\text{,}
    \end{equation}
    where $\mathcal{I}$ denotes the set of indices for the Hilbert spaces, and thus the trace denotes a partial trace to all but the $i$-th and $j$-th Hilbert spaces.
\end{definition}

A state $\hat{\rho}\in\states{\hil^{\ot n}}$ will either be referred to as a de~Finetti extended form or a de~Finetti extending state of $\rho$, and it is usually not unique. Note that permutation invariance is not required. However, if one is given an $n$-de~Finetti-extending state which does not have permutation invariance for the Hilbert spaces, then it can be made permutation invariant by simply twirling with all permutations from the symmetric group $\sym_n$.

Unlike one-sided or two-sided extendibility, de~Finetti extendibility is such a strong requirement that not even every separable state is de~Finetti extendible for any $n$. Indeed, only the convex combinations of flip-invariant product states (i.e., of the form $\sigma \otimes \sigma $) are de~Finetti extendible for any $n$, which is a direct consequence of the quantum de~Finetti theorem \cite{christandl2007oneandahalf}.

\section{Brauer states}\label{sec:brauer}

Brauer states (also called orthogonal or OO-states) have come about as a generalisation of Werner and isotropic states \cite{vollbrecth2001entanglement, keyl2002fundamentals, allerstorfer2023monogamy}. Famously, Werner states are bipartite states that are invariant to local unitary transformations of the form $U \ot U$. On the other hand, isotropic states are invariant to local unitaries with the second unitary elementwise complex conjugated with respect to a fixed basis, that is $U \ot \overline{U}$. Brauer states are those states which are invariant to the intersection of the above-introduced two sets of transformations, which is nothing else but the set of local orthogonal transformation, i.e., unitaries of the form $O\ot O$. More formally:
\begin{definition}
    A bipartite quantum state $\brauer \in \states{\hil\ot \hil}$ (where $\dim \hil=d>1$) is called a Brauer state if it is invariant under the diagonal action of the defining representation of the $\Ort(d)$ group on $\hil\ot\hil$. That is,
    \begin{equation}
        \ytableausetup{mathmode, boxsize=\Yt}
        \comm{\brauer}{\Repp_{\ydiagram{1}}(g)\ot\Repp_{\ydiagram{1}}(g)}=0\text{,} \qquad \forall g\in\Ort(d)\text{,}
\end{equation}
    where $\Repp_{\ydiagram{1}}:\Ort(d)\to \GL(d,\comp)$ is the defining representation of the real orthogonal group of degree $d$ given a fixed real structure.
\end{definition}

Note that a fixed real structure is needed to be able to define transposes on $\hil$. Fixing a real structure can be done by fixing a basis with respect to which the transposition is being done. This is exactly the same basis with respect to which the transformations of isotropic states are complex conjugated.

Observe that every Werner and every isotropic state is automatically a Brauer state: Since the Werner and isotropic states are invariant to a larger set of group elements that also contain every orthogonal group element, they are indeed Brauer states as well.

Unlike Werner and isotropic states, Brauer states are described by two parameters. Using the representation theory of orthogonal groups and Schur's lemma, Brauer states can be parametrised in the following way:
\begin{equation}
    \brauer(\mu_1,\mu_2)=(1-\mu_1-\mu_2)\PROJ_0+\mu_1\frac{\PROJ_{\ydiagram{1,1}}}{\dim\left(\ytableausetup{mathmode, boxsize=0.5em, aligntableaux=center} \ydiagram{1,1}\right)}+\mu_2\frac{\PROJ_{\ytableausetup{mathmode, boxsize=\Yt} \ydiagram{2}}}{\dim(\ytableausetup{mathmode,boxsize=0.5em} \ydiagram{2})}\text{,}
\end{equation}
where $\dim\left(\ydiagram{1,1}\right)=\frac{d(d-1)}{2}$, $\dim(\ydiagram{2})=\frac{d(d+1)}{2}-1$, $0\leq \mu_1,\mu_2$, and $\mu_1+\mu_2\leq 1$. The $\PROJ$ operators denote projectors onto the three invariant subspaces under all $\Repp_{\ytableausetup{mathmode, boxsize=\Yt} \ydiagram{1}}(g)\ot\Repp_{\ydiagram{1}}(g)\cong \Repp_0(g) \OP \Repp_{\ydiagram{1,1}}(g) \OP \Repp_{\ydiagram{2}}(g)$ transformations. To see this, one can rely on representation theory (see Appendix \ref{sec:app_rep}).

However, instead of the above parametrisation, this article uses that of Keyl \cite{keyl2002fundamentals}, which is based on the Brauer-Schur-Weyl duality for the orthogonal group \cite{brauer1937algebras}. It states that the commutant of the diagonal action of two instances of the defining representation of the orthogonal group is the Brauer algebra $\mathcal{B}_2$, which is generated by the set $\{\id_{\hil}\ot \id_{\hil}, \flip, \fflip\}$, where $\id_{\hil}\ot\id_{\hil}$ is the identity on $\hil\ot\hil$, $\flip$ is the flip (or swap) operator and $\fflip=\flip^{t_2}$ is the partially transposed flip operator. These last two operators are nothing else than the operators that generate, along with the identity, the Werner and the isotropic states, respectively. Together, all three can be used to describe Brauer states. The relationship between these operators and the projectors from above is the following:
\begin{equation}
    \PROJ_0= \frac{\fflip}{d}\eqqcolon \bb \text{,} \qquad \qquad
    \PROJ_{\ydiagram{1,1}}=\frac{\id_{\hil}\ot\id_{\hil}-\flip}{2}\text{,} \qquad \qquad
    \PROJ_{\ydiagram{2}}=\frac{\id_{\hil}\ot\id_{\hil}+\flip}{2}-\frac{\fflip}{d}\text{.}
\end{equation}
Here, the notation $\bb$ has been introduced. Instead of $\mu_1$ and $\mu_2$, it is more convenient to parametrise Brauer states with $f\coloneqq \tr(\brauer \flip)$ and $b\coloneqq \tr(\brauer \bb)$. It is easy to see that these parameters lie in the following range:
\begin{align}
    f&=1-2\mu_1\in[-1,1]\text{,}\\
    b&=1-\mu_1-\mu_2\in[0,1]\text{,}
\end{align}
with the following restriction:
\begin{equation}
    b\leq \frac{1}{2}(f+1)\text{.}
\end{equation}

\vspace{0.5em}

The advantage of using $\tr(\brauer\bb)$ instead of $\tr(\brauer\fflip)$ is that the latter has a range between $[0,d]$, therefore using $\bb$ normalises this range. In this parametrisation the set of separable Brauer states is well known to be the rectangle with vertices $(0,0) - (0,\frac{1}{d}) - (1,\frac{1}{d}) - (1,0)$ \cite{keyl2002fundamentals}. Note that these are exactly the PPT states; thus, for Brauer states, the Peres-Horodecki criterion \cite{peres1996separability, horodecki1996separability} suffices to find the set of separable states.

Figure \ref{fig:brauer} illustrates the set of Brauer states.

\begin{figure}[H]
    \centering
    \includegraphics[width=0.5\textwidth]{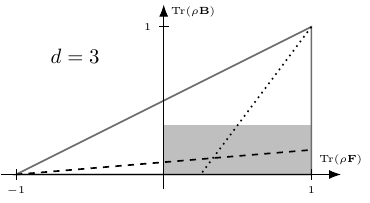}
    \caption{\small The set of Brauer states at $d=3$ in the $\tr(\rho\flip) - \tr(\rho\bb)$ parametrisation appearing as the grey triangle. The set of Werner states is illustrated as the dashed line, while the set of isotropic states as the dotted line. The set of separable Brauer states constitute the grey rectangle.}
    \label{fig:brauer}
\end{figure}

\vspace{-0.5cm}

Note that while the set is the grey triangle for any dimension $d$, the exact lines of Werner and isotropic states change depending on $d$, as well as the set of separable Brauer states.

\section{General recipe for determining the set of $(n,m)$-extendible Brauer states} \label{sec:recipe}

\subsection{The $n\neq m$ case}

Our goal is to describe the general procedure to find the subset of parameters for each dimension $d$ which describe the set of $(n,m)$-extendible Brauer states. Naturally, this is a convex subset, as any convex sum of two $(n,m)$-extendible states is also $(n,m)$-extendible.

Obviously, if one is somehow given the set of $n+m$ partite states that are such that they extend a Brauer state, one could obtain the set of $(n,m)$-extendible Brauer states by partial trace. More straightforwardly, given the properties of marginals, one would only have to calculate the trace of each of them with $\flip_{ij}$ and $\bb_{kl}$ for any $i,k\in[n]$ and $j,l\in[n+1,n+m]$ to get the set of parameters describing the $(n,m)$-extendible Brauer states.\footnote{$\flip_{ij}$ describes the $\flip$ operator acting between the $i$-th and $j$-th Hilbert space while acting as identity everywhere else. Similarly for $\bb_{kl}$.}

It turns out that, using representation theory, it is possible to obtain a \textit{relevant} subset of $(n,m)$-extending states in a nicely parametrised manner as the commutant of $O^{\ot (n+m)}$ transformations. More precisely, the following requirements are prescribed for a \textit{relevant} $(n,m)$-extending state $\hat{\rho}\in\states{\hil^{\ot n}\ot\hil^{\ot m}}$:
\begin{enumerate}
    \item $\comm{\hat{\rho}}{\Repp_{\ydiagram{1}}^{\ot (n+m)}(g)}\overset{!}{=}0$ for all $g\in\Ort(d)$,
    \item $\comm{\hat{\rho}}{\hat{\Repppp}^{\mathrm{L}}_{n}(h)\ot\hat{\Repppp}^{\mathrm{R}}_{m}(k)}\overset{!}{=}0$ for all $h\in \sym_n$ and $k\in \sym_m$,
\end{enumerate}
where $\Repp_{\ydiagram{1}}:\Ort(d)\to \GL(d,\comp)$ is the defining representation of the orthogonal group and $\hat{\Repppp}_n:\sym_n\to \GL(\hil^{\ot n})$ is the natural representation of the symmetric group on $\hil^{\ot n}$ that permutes the Hilbert spaces:
\begin{equation}
    \hat{\Repppp}_n(h) (v_1 \ot v_2 \ot \cdots \ot v_n )\coloneqq v_{h^{-1}(1)} \ot v_{h^{-1}(2)} \ot \cdots \ot v_{h^{-1}(n)}\text{,}
\end{equation}
where $h\in\sym_n$, $v_i\in\hil$ and the `hat' symbol over the representation highlights the fact that the representation is, in general, not irreducible. The upper index L or R is there to distinguish the representation of $\sym_n$ and $\sym_m$ acting on the Hilbert spaces of the Alices and Bobs, respectively. Note that even for the case of $n=m$, these representations cannot be fused as they are representations of $\sym_n \times \sym_m$ and not simply $\sym_{n+m}$. However, there is a special extra symmetry for $(n,n)$-extendible states that is described in Section \ref{sec:nn}. Note also that the second requirement is nothing more than that of the two-sided symmetric extendibility.

These are sufficient conditions for $\hat{\rho}$ to be an $(n,m)$-extending state of Brauer state, however, at first sight they might not be necessary. Indeed, requirement 1 guarantees that each appropriate bipartite reduction of $\hat{\rho}$ is a Brauer state, and requirement 2 guarantees that these appropriate bipartite reductions are the same, however, these might exclude some $(n,m)$-extending states. The following lemma, based on Reference \cite{vollbrecth2001entanglement}, guarantees that no information is lost, and therefore it is indeed enough to examine only the above set of \textit{relevant} $(n,m)$-extending states. The idea behind it is that given a general extending state, it can be made to obey the above requirements by twirling, without changing the Brauer state it extends:

\begin{lemma}\label{lm:proj}
    Let $\brauer\in\states{\hil\ot\hil}$ be an $(n,m)$-extendible Brauer state, and let $\hat{\rho}'\in\states{\hil^{\ot n}\ot\hil^{\ot m}}$ be any $(n,m)$-extending state of $\brauer$. Let a twirl operation $\twirl:\states{\hil^{\ot n}\ot \hil^{\ot m}}\to \states{\hil^{\ot n}\ot \hil^{\ot m}}$ act on $\hat{\rho}'$ as described by the above requirements:
    \begin{equation}
        \hat{\rho}\coloneqq \twirl(\hat{\rho}')\coloneqq\frac{1}{n! m!} \! \! \! \sum_{\substack {h\in \sym_n \\ k \in \sym_m}} \! \! \! \int_{\Ort(d)} \! \! \! (\Repp_{\ydiagram{1}}(g)^{\ot (n+m)}) (\hat{\Repppp}^{\mathrm{L}}_{n}(h)\ot\hat{\Repppp}^{\mathrm{R}}_{m}(k)) \hat{\rho}' (\hat{\Repppp}^{\mathrm{L}}_{n}(h)\ot\hat{\Repppp}^{\mathrm{R}}_{m}(k))^{\dagger} (\Repp_{\ydiagram{1}}(g)^{\ot (n+m)})^{\dagger}\mathrm{d}g\text{,}
    \end{equation}
    where the integral is with respect to the normalised Haar measure over $\Ort(d)$. The resulting state $\hat{\rho}$ is also an $(n,m)$-extending state of the same $\brauer$ and is invariant to the groups $\Ort(d)$ and $\sym_n\times\sym_m$ as described by the above requirements.
\end{lemma}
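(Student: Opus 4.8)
The plan is to verify the two assertions of the lemma separately: first that $\hat{\rho}$ satisfies the two invariance requirements, and second that $\hat{\rho}$ reproduces the same $\brauer$ upon every pairwise reduction. The first assertion is the routine part. Since $\twirl$ averages over the compact group $\Ort(d)$ with its normalized Haar measure and over the finite group $\sym_n\times\sym_m$ with the uniform average $\tfrac{1}{n!m!}\sum$, and since the diagonal orthogonal action $\Repp_{\ydiagram{1}}(g)^{\ot(n+m)}$ commutes with the permutation action $\hat{\Repppp}^{\mathrm{L}}_{n}(h)\ot\hat{\Repppp}^{\mathrm{R}}_{m}(k)$ (applying the same $g$ to every tensor factor is unaffected by reordering the factors), I would argue as usual: conjugating $\hat{\rho}$ by a fixed $\Repp_{\ydiagram{1}}(g_0)^{\ot(n+m)}$ amounts to the substitution $g\mapsto g_0 g$ under the Haar integral, which leaves the integral invariant, while conjugating by a fixed $\hat{\Repppp}^{\mathrm{L}}_{n}(h_0)\ot\hat{\Repppp}^{\mathrm{R}}_{m}(k_0)$ amounts to a relabeling of the summation index in the finite average. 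Hence $\hat{\rho}$ commutes with both group actions, which is exactly requirements 1 and 2.

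Second, I would show that $\hat{\rho}$ extends the same $\brauer$. Because the partial trace is linear, it may be moved inside the Haar integral and the finite sum, so it suffices to show that for each fixed $g\in\Ort(d)$, $h\in\sym_n$, $k\in\sym_m$ the single conjugated term reduces to $\brauer$ on the pair $(A_i,B_j)$. I would peel off the two group actions one at a time. For the orthogonal part, $\Repp_{\ydiagram{1}}(g)$ is a real orthogonal and hence unitary matrix, so conjugation by $U_g\coloneqq\Repp_{\ydiagram{1}}(g)^{\ot(n+m)}$ leaves the partial trace over the $n+m-2$ discarded factors unchanged (there it contributes a trivial $\Repp_{\ydiagram{1}}(g)^{\dagger}\Repp_{\ydiagram{1}}(g)=\id$), while on the retained pair it factors out as $V_g\coloneqq\Repp_{\ydiagram{1}}(g)\ot\Repp_{\ydiagram{1}}(g)$; concretely, for any operator $Y$,
\begin{equation}
\tr_{\mathcal I\setminus\{A_i,B_j\}}\!\big[U_g\,Y\,U_g^{\dagger}\big]=V_g\,\tr_{\mathcal I\setminus\{A_i,B_j\}}[Y]\,V_g^{\dagger}.
\end{equation}
For the symmetric part, conjugating $\hat{\rho}'$ by $\hat{\Repppp}^{\mathrm{L}}_{n}(h)\ot\hat{\Repppp}^{\mathrm{R}}_{m}(k)$ and then retaining the slots $A_i,B_j$ is, by the definition of the natural permutation representation, the same as retaining the slots $A_{h^{-1}(i)},B_{k^{-1}(j)}$ of $\hat{\rho}'$ itself, so $\tr_{\mathcal I\setminus\{A_i,B_j\}}$ of the permuted state equals a pairwise reduction of $\hat{\rho}'$. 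Since $\hat{\rho}'$ is by hypothesis an $(n,m)$-extending state, every such pairwise reduction equals $\brauer$. Feeding this into the identity above yields $V_g\,\brauer\,V_g^{\dagger}=\brauer$, where the last equality is precisely the defining $\Ort(d)$-invariance of the Brauer state. Thus each term of the average equals $\brauer$, and averaging a constant leaves it unchanged, so $\tr_{\mathcal I\setminus\{A_i,B_j\}}(\hat{\rho})=\brauer$ for all $i\in[n]$, $j\in[m]$.

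The two places that deserve care — and where I expect the only genuine work to lie — are the bookkeeping of the partial trace under the permutation action (making sure that conjugation by $\hat{\Repppp}^{\mathrm{L}}_{n}(h)\ot\hat{\Repppp}^{\mathrm{R}}_{m}(k)$ really sends the reduction onto $(A_i,B_j)$ to the reduction onto the permuted pair $(A_{h^{-1}(i)},B_{k^{-1}(j)})$, including the fact that once a single Alice and a single Bob slot are singled out the surviving permutation acts only as a relabeling), and the factorization of the partial trace through the orthogonal conjugation, which hinges on the unitarity of $\Repp_{\ydiagram{1}}(g)$. Everything else reduces to the standard observations that averaging a state over a group action renders it invariant and that the Haar integral, the finite permutation sum, and the partial trace all commute by linearity.
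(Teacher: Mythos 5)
Your proof is correct, but it takes a genuinely different route from the paper's. You argue in the Schrödinger picture, term by term: for each fixed $(g,h,k)$ the permutation conjugation merely relabels which pair $(A_{h^{-1}(i)},B_{k^{-1}(j)})$ the reduction lands on (legitimate because $h$ and $k$ never mix Alice and Bob slots), the orthogonal conjugation factors through the partial trace as $\Repp_{\ydiagram{1}}(g)\ot\Repp_{\ydiagram{1}}(g)$ on the retained pair, and this is absorbed by the defining invariance of $\brauer$; hence every single conjugated state is itself an $(n,m)$-extending state of the same $\brauer$, and the average inherits this by linearity. The paper instead works in the dual (Heisenberg) picture: it takes for granted, from the discussion preceding the lemma, that the twirled state extends \emph{some} Brauer state, and then shows the parameters agree by exploiting the self-adjointness of $\twirl$ with respect to the Hilbert--Schmidt inner product, moving the twirl onto the observables $\flip_{ij}$ and $\bb_{kl}$ (which are $\Ort(d)$-invariant, so only the permutation average survives) and using that all pairwise expectations of $\hat{\rho}'$ coincide. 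Your version is more self-contained and slightly stronger — it shows the group action preserves the convex set of extending states of the fixed $\brauer$, without invoking the parametrization of Brauer states by $(f,b)$ or the sufficiency of the two invariance requirements — while the paper's version is shorter given its prior setup and previews the duality trick (twirl acting on observables) that it reuses later when computing the extendible parameter sets via the operators $\smf_{n,m}$ and $\smb_{n,m}$.
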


\begin{proof}
    By twirling with the group $\sym_n \times \sym_m \times \Ort(d)$ represented in the above way, we guarantee that $\hat{\rho}$ is the $(n,m)$-extended form of some Brauer state. The only thing that needs to be proven is that it is the same Brauer state as $\brauer$. We shall prove that it gives the same parameters when traced with the appropriate $\flip$ and $\bb$ operators. Let $\tr(\brauer \flip)\eqqcolon f$ and $\tr(\brauer \bb) \eqqcolon b$. As $\hat{\rho}'$ is an extending state of $\brauer$ the following is true for any $i,k\in[n]$ and $j,l\in[n+1, n+ m]$:
    \begin{equation}
        \tr(\hat{\rho}' \flip_{ij})=f\text{,} \qquad \qquad \tr(\hat{\rho}' \bb_{kl})=b\text{.}
    \end{equation}

    Given that quantum states are self-adjoint, the above traces can be thought of as Hilbert-Schmidt inner products, e.g., $\tr(\hat{\rho}' \flip_{ij})=\inn{\hat{\rho}'}{\flip_{ij}}_{\text{HS}}$. As the twirl superoperator $\twirl$ is self-adjoint and the operators $\flip_{ij}$ and $\bb_{kl}$ are invariant to orthogonal transformations (given that they are part of the Brauer algebra), the following is true:
    \begin{equation}
    \begin{split}
        \tr(\hat{\rho}\flip_{ij})&=\tr(\twirl(\hat{\rho}')\flip_{ij})=\inn{\twirl(\hat{\rho}')}{\flip_{ij}}_{\text{HS}}=\inn{\hat{\rho}'}{\twirl(\flip_{ij})}_{\text{HS}}\\
        &=\Biggl\langle \hat{\rho}'\;,\; \frac{1}{n!m!} \!\!\!\!\!\!\sum_{\substack{\alpha\in[n] \\ \beta \in [n+1,n+m]}}\!\!\!\!\!\! \flip_{\alpha\beta}\Biggr \rangle_{\text{HS}}=\tr(\hat{\rho}' \flip_{ij})\text{,}
    \end{split}
    \end{equation}
    where, in the last step, the property of $\hat{\rho}'$ being an extending state has been used. The proof is similar for $\bb_{kl}$. This proves that $\hat{\rho}=\twirl(\hat{\rho}')$ is an extending state of the same Brauer state $\brauer$.
\end{proof}

Therefore, no information is lost if the above requirements are prescribed, and it is enough to find every extending state satisfying them to characterise the set of $(n,m)$-extending Brauer states.

The commutant of the set of the above-mentioned operators can be calculated using representation theory, see Appendix \ref{sec:app_sw}. Of course, as $n$ and $m$ grow larger, it becomes increasingly difficult to give the exact form of a commuting operator. Nevertheless, the general form of a commuting state can be given as:
\begin{equation}\label{eq:commutant}
    \hat{\rho} \cong \bigoplus_{\Lambda} \mu_{\Lambda} \rho_{\Lambda} \ot \frac{\id_{\lil_{\Lambda}}}{\dim(\lil_{\Lambda})} \text{,}
\end{equation}
where the right-hand side reflects the canonical decomposition of the tensor product of representations into the direct sum of isotypic components: $\hil^{\ot (n+m)}\cong \oplus_{\Lambda} \kil_{\Lambda}\ot\lil_{\Lambda}$. Given an isotypic component, the irreducible representation (irrep) of the orthogonal and symmetric groups labelled by $\Lambda$ acts on the right Hilbert space ($\lil_{\Lambda}$) of the tensor product, while the left Hilbert space ($\kil_{\Lambda}$) describes the multiplicity space. In accordance with Schur's lemma, the commutant is non-trivial only on the multiplicity spaces: $\rho_{\Lambda}\in\states{\kil_{\Lambda}}$ are arbitrary quantum states, $\mu_{\Lambda}\geq 0$ and $\sum_{\Lambda} \mu_{\Lambda}=1$. In other words, $\hat{\rho}$ can be written as a convex sum of special quantum states with respect to the isotypic decomposition of the given representation. This gives rise to a rather nice parametrisation of the relevant subset of $(n,m)$-extending Brauer states.

Note that the label $\Lambda$ can be split into three sublabels, corresponding to the tensor product of three irreps, one for each group in the direct product $\sym_n \times \sym_m \times \Ort(d)$, see Appendices \ref{sec:app_rep} and \ref{sec:app_sw}. Note also that if $\rho_{\Lambda}$ is a state on a 1-dimensional Hilbert space, that is, the irrep labelled by $\Lambda$ has multiplicity one, then simply $\rho_{\Lambda}=1$. In this case, one is left with a projector onto the isotypic component.

If one can calculate the relevant traces of these constituent special quantum states with $\flip_{ij}$ and $\bb_{kl}$, then one gets a subset of the parameter space of Brauer states whose convex hull describes the set of $(n,m)$-extendible Brauer states. Inspired by Reference \cite{jakab2022extendibility}, the traces with the following operators will be calculated:
\begin{equation}
    \smf_{n,m} \coloneqq \frac{1}{nm}\sum_{i=1}^{n}\sum_{j=n+1}^{n+m} \flip_{ij}\text{,} \qquad \qquad \smb_{n,m} \coloneqq \frac{1}{nm}\sum_{i=1}^{n}\sum_{j=n+1}^{n+m} \bb_{ij}\text{.}
\end{equation}

The reason for using these new averaged operators instead of any particular $\flip_{ij}$ or $\bb_{kl}$ operator is that one can take advantage of some representation-theoretic tools, namely Casimir operators, to express them in a calculable manner. In terms of Casimir operators, these averaged operators take the following form (see Appendices \ref{sec:app_cas} and \ref{sec:app_calcs} for the derivation):
\begin{align}
\begin{split}\label{eq:smf}
    \smf_{n,m}=&\frac{2}{nm(\chi(\ydiagram{2})-\chi(\ydiagram{1,1}))}\left[\rep_{\ydiagram{1}}^{\times (n+m)}(C^{\ualg(d)}) - \rep_{\ydiagram{1}}^{\times n}(C^{\ualg(d)}) \ot \id^{\ot m}_\hil - \id^{\ot n}_\hil \ot \rep_{\ydiagram{1}}^{\times m}(C^{\ualg(d)}) \right]\text{,}
\end{split}
\end{align}
\begin{align}
\begin{split}\label{eq:smb}
    \smb_{n,m}=&\frac{\xi(\ydiagram{2})-\xi(\ydiagram{1,1})}{2\xi(\ydiagram{2})}\smf_{n,m} -\frac{1}{nm\xi(\ydiagram{2})}\left[\repp_{\ydiagram{1}}^{\times (n+m)}(C^{\So(d)}) - \repp_{\ydiagram{1}}^{\times n}(C^{\So(d)}) \ot \id^{\ot m}_\hil - \id^{\ot n}_\hil \ot \repp_{\ydiagram{1}}^{\times m}(C^{\So(d)})\right]\text{,}
\end{split}
\end{align}
where $\chi(\lambda)$ and $\xi(\lambda)$ denote the quadratic Casimir eigenvalues of $\ualg(d)$ or $\So(d)$, respectively, in the irreducible representation labelled by the Young diagram $\lambda$, and $\rep_{\ydiagram{1}}^{\times n}(C^{\ualg(d)})$ and $\repp_{\ydiagram{1}}^{\times n}(C^{\So(d)})$ denote the quadratic Casimir operator in the $n$-fold tensor product of the defining representation of the unitary and the special orthogonal Lie algebra, respectively.

Given the above forms of $\smf_{n,m}$ and $\smb_{n,m}$, to do the calculations one needs to know the trace of a given constituent quantum state ($\rho_{\Lambda} \ot \id_{\lil_{\Lambda}}/ \dim(\lil_{\Lambda})$) and the Casimir operators in the above representations (e.g., $\repp_{\ydiagram{1}}^{\times n}(C^{\So(d)}) \ot \id^{\ot m}_\hil$). The case of $\repp_{\ydiagram{1}}^{\times (n+m)}(C^{\So(d)})$ is trivial, as the operator can be brought to the same form as the one in Equation \eqref{eq:commutant}: $\repp_{\ydiagram{1}}^{\times (n+m)}(C^{\So(d)})\cong \bigoplus_{\Lambda} \id_{\kil_{\Lambda}} \ot \xi(\lambda)\id_{\lil_{\Lambda}}$, where $\lambda$ is the label of the $\So(d)$ irrep associated with $\Lambda$. However, it is less straightforward for the Casimir operators in the other representations as not all of them will have a constant eigenvalue on an isotypic component like the operator $\repp_{\ydiagram{1}}^{\times (n+m)}(C^{\So(d)})$ does.

In practice, one has to fix distinguished bases on $\kil_{\Lambda}$ that make it so that most representations of the Casimir operators are diagonal in them, although they might not have the same eigenvalue for every basis vector. There are two notable bases to talk of. Each of them starts by diagonalising the Casimir operator represented as $\rep_{\ydiagram{1}}^{\times n}(C^{\ualg(d)}) \ot \id^{\ot m}_\hil$ and $\id^{\ot n}_\hil \ot \rep_{\ydiagram{1}}^{\times m}(C^{\ualg(d)})$.

The first one comes about by fusing together the irreps of $\ualg(d)$ so that the represented Casimir operator $\rep_{\ydiagram{1}}^{\times (n+m)}(C^{\ualg(d)})$ can also be diagonalised. As the last step, the $\ualg(d)$ irreps are restricted to $\So(d)$ irreps giving rise to the diagonalised Casimir operator represented as $\repp_{\ydiagram{1}}^{\times (n+m)}(C^{\So(d)})$. However, in this basis, the Casimir operator represented as $\repp_{\ydiagram{1}}^{\times n}(C^{\So(d)}) \ot \id^{\ot m}_\hil$ and $\id^{\ot n}_\hil \ot \repp_{\ydiagram{1}}^{\times m}(C^{\So(d)})$ is generally not diagonal.

The second comes about by restricting the Casimir operator represented as $\rep_{\ydiagram{1}}^{\times n}(C^{\ualg(d)}) \ot \id^{\ot m}_\hil$ and $\id^{\ot n}_\hil \ot \rep_{\ydiagram{1}}^{\times m}(C^{\ualg(d)})$ to $\So(d)$ irreps giving rise to the diagonalised Casimir operator represented as $\repp_{\ydiagram{1}}^{\times n}(C^{\So(d)}) \ot \id^{\ot m}_\hil$ and $\id^{\ot n}_\hil \ot \repp_{\ydiagram{1}}^{\times m}(C^{\So(d)})$. Finally, the $\So(d)$ irreps are fused together so that the Casimir operator represented as $\repp_{\ydiagram{1}}^{\times (n+m)}(C^{\So(d)})$ is also diagonal. Naturally, in this basis, the Casimir operator represented as $\rep_{\ydiagram{1}}^{\times (n+m)}(C^{\ualg(d)})$ is generally not diagonal. The two processes are illustrated in Figure \ref{fig:casimirs}.

\begin{figure}[H]
    \centering
	\begin{tikzpicture}[scale=1]

    \newcommand{\Widtha}{3.75}
    \newcommand{\Widthb}{5}
    \newcommand{\Heighta}{-1.75}
    \newcommand{\smalla}{0.5}

    \draw (\Widtha+\Widthb/2,\smalla) -- (\Widtha+\Widthb/2,2*\Heighta-\smalla);

    \node at (0,0) {$\rep_{\ydiagram{1}}^{\times n}(C^{\ualg(d)}) \ot \id^{\ot m}_\hil$};
    \node at (\Widtha,0) {$\id^{\ot n}_\hil \ot \rep_{\ydiagram{1}}^{\times m}(C^{\ualg(d)})$};

    \node at (\Widtha+\Widthb,0) {$\rep_{\ydiagram{1}}^{\times n}(C^{\ualg(d)}) \ot \id^{\ot m}_\hil$};
    \node at (2*\Widtha+\Widthb,0) {$\id^{\ot n}_\hil \ot \rep_{\ydiagram{1}}^{\times m}(C^{\ualg(d)})$};

    \draw[double distance=3pt, arrows = {-Latex[length=0pt 2.5 0]}] (\smalla,-\smalla)--(\Widtha/2-\smalla,\Heighta+\smalla);
    \draw[double distance=3pt, arrows = {-Latex[length=0pt 2.5 0]}] (\Widtha-\smalla,-\smalla)--(\Widtha/2+\smalla,\Heighta+\smalla);
    \node at (\Widtha/2,\Heighta+2*\smalla) {$\bigotimes$};

    \draw[double distance=3pt, arrows = {-Latex[length=0pt 2.5 0]}] (\Widtha+\Widthb,-\smalla/2)--(\Widtha+\Widthb,\Heighta+\smalla);
    \draw[double distance=3pt, arrows = {-Latex[length=0pt 2.5 0]}] (\Widtha+\Widtha+\Widthb,-\smalla/2)--(\Widtha+\Widtha+\Widthb,\Heighta+\smalla);
    \node at (\Widtha/2+\Widthb + \Widtha, \Heighta/2+\smalla/3) {\small $\ualg(d) \to \So(d)$};

    \node at (\Widtha/2,\Heighta) {$\rep_{\ydiagram{1}}^{\times (n+m)}(C^{\ualg(d)})$};

    \node at (\Widtha+\Widthb,\Heighta) {$\repp_{\ydiagram{1}}^{\times n}(C^{\So(d)}) \ot \id^{\ot m}_\hil$};
    \node at (2*\Widtha+\Widthb,\Heighta) {$\id^{\ot n}_\hil \ot \repp_{\ydiagram{1}}^{\times m}(C^{\So(d)})$};

    \draw[double distance=3pt, arrows = {-Latex[length=0pt 2.5 0]}] (\Widtha/2,\Heighta-\smalla/2)--(\Widtha/2,2*\Heighta+\smalla);
    \node at (\Widtha/2+2.5*\smalla, \Heighta+\Heighta/2+\smalla/3) {\small $\ualg(d) \to \So(d)$};

    \draw[double distance=3pt, arrows = {-Latex[length=0pt 2.5 0]}] (\Widtha+\Widthb+\smalla,\Heighta-\smalla)--(\Widtha+\Widthb+\Widtha/2-\smalla,2*\Heighta+\smalla);
    \draw[double distance=3pt, arrows = {-Latex[length=0pt 2.5 0]}] (\Widtha+\Widthb+\Widtha-\smalla,\Heighta-\smalla)--(\Widtha+\Widthb+\Widtha/2+\smalla,2*\Heighta+\smalla);
    \node at (\Widtha+\Widthb+\Widtha/2,2*\Heighta+2*\smalla) {$\bigotimes$};

    \node at (\Widtha/2,2*\Heighta) {$\repp_{\ydiagram{1}}^{\times (n+m)}(C^{\So(d)})$};

    \node at (\Widtha+\Widthb+\Widtha/2,2*\Heighta) {$\repp_{\ydiagram{1}}^{\times (n+m)}(C^{\So(d)})$};

    \node at (0,2.5*\Heighta) {};
    
	\end{tikzpicture}
    \caption{\small Illustration of the two types of diagonalisation processes of the represented Casimir operators, starting from the same point. On the left side the irreps are fused together and then a restriction happens to the subalgebra $\So(d)$, while on the right side the process is done the other way around.}
    \label{fig:casimirs}
\end{figure}

\vspace{-0.5cm}

Ultimately, the two different choices of basis can be equated to two different irreducible decompositions of the representation of $\sym_n \times \sym_m \times \Ort(d)$ in the sense that while both contain the same individual irreps, these irreps are labelled by either which $\ualg(d)$ irrep they came from or which $\So(d)$ irreps fused together to form them.

Given that not every Casimir representation will generally be diagonal in the two distinguished bases, the unitary operator that links them will be needed. From Schur's lemma, the unitary operator $\Vry$ describing the change of basis between these two bases has to be an intertwiner. This means that it has the following form:
\begin{equation}\label{eq:uni_intertwiner}
    \Vry = \bigoplus_{\Lambda} \Vry_{\Lambda} \ot \id_{\lil_{\Lambda}} \text{,}
\end{equation}
where $\Vry_{\Lambda}\in\bound{\kil_{\Lambda}}$ are unitary operators. Note that if $\Vry_{\Lambda}$ acts on a 1-dimensional Hilbert space (which means that the multiplicity is one), then it is simply a phase factor $e^{i\alpha}$, $\alpha\in[0,2\pi)$, which is irrelevant as it factors out. Therefore, if the multiplicity is one, the Casimir representations can always be diagonalised simultaneously.

To be able to perform the calculations, the relevant parameters of these unitaries have to be known. In general, this is a hard problem, but the relevant parameters can be calculated for low multiplicity cases using methods that build on Young symmetrisers. To this end, the first type of basis is favourable.

Appendix \ref{sec:app_res} shows a concrete calculation based on the principles presented here.

\subsection{The $(n,n)$-extendible Brauer states}\label{sec:nn}

In the case of $n=m$, that is, $(n,n)$-extendible states, there is an additional symmetry that can be exploited. Given that Brauer states are symmetric, meaning that they themselves are flip invariant, it can be required that their $(n,n)$-extensions are also invariant to the \textit{faction flip} in which the Alices and Bobs all change place. Note that this is also true for any flip invariant state, not just Brauer states.

This means that we are actually dealing with not simply a representation of $\sym_n\times\sym_n$ but a representation of a larger subgroup of $\sym_{2n}$. This larger subgroup has as its subgroup $\sym_n\times\sym_n$ with index 2, more precisely, it is a semidirect product of $\sym_n\times\sym_n$ and $\mathbb{Z}_2$. This is also referred to as the \textit{wreath product} in the literature: $\sym_n \wr \zahl_2$.

As a consequence, its representation theory can be inferred from that of $\sym_n\times\sym_n$ \cite{fulton2004representation}, similarly to how the representation theory of $\Ort(d)$ can be inferred from the representation theory of $\SO(d)$. There will be two cases:
\begin{enumerate}
    \item When the irreps of $\sym_n\times\sym_n$ are the same for the constituents (e.g., $\Repppp_\lambda \ot \Repppp_\lambda$) then they will lift up in two ways. This is related to whether the \textit{faction flip} is represented trivially or non-trivially.
    \item When the irreps of $\sym_n\times\sym_n$ are different on the constituents (e.g., $\Repppp_{\lambda_1} \ot \Repppp_{\lambda_2}$), then the two versions will lift up together to form a single irrep (concretely $\Repppp_{\lambda_1} \ot \Repppp_{\lambda_2}\OP \Repppp_{\lambda_2} \ot \Repppp_{\lambda_1}$).
\end{enumerate}

Naturally, to do concrete calculations, one has to go through the representation theory of the symmetric group for a given $n$. In this article, we present this for the $n=2$ case.

\section{General recipe for determining the set of {$n$-de~Finetti-extendible} Brauer states}\label{sec:defi}

Our goal is to describe the general procedure to find the subset of parameters for each dimension $d$ which describes the $n$-de~Finetti-extendible Brauer states. The ideas are very similar to those presented in the previous section, therefore, this section only presents the main steps in the procedure. The interested reader is directed to the previous section for more details.

The \textit{relevant} $n$-de~Finetti-extending states $\hat{\rho}\in\states{\hil^{\ot n}}$ obey the following requirements:
\begin{enumerate}
    \item $\comm{\hat{\rho}}{\Repp_{\ydiagram{1}}^{\ot n}(g)}\overset{!}{=}0$ for all $g\in\Ort(d)$, \vspace{0.2cm}
    \item $\comm{\hat{\rho}}{\hat{\Repppp}(h)}\overset{!}{=}0$ for all $h\in \sym_n$,
\end{enumerate}
where $\Repp_{\ydiagram{1}}:\Ort(d)\to \GL(d,\comp)$ is the defining representation of the orthogonal group and $\hat{\Repppp}_n:\sym_n\to \GL(\hil^{\ot n})$ is the natural representation of the symmetric group on $\hil^{\ot n}$ that permutes the Hilbert spaces.

Our previously presented Lemma \ref{lm:proj} (see also \cite{vollbrecth2001entanglement}) guarantees that it is enough to examine the above defined set of \textit{relevant} $n$-de~Finetti-extending states.

The commutant of the set of the above-mentioned operators can be calculated using representation theory. The general form of a commuting state can be calculated using Schur's lemma:
\begin{equation}
    \hat{\rho} \cong \bigoplus_{\Lambda} \mu_{\Lambda} \rho_{\Lambda} \ot \frac{\id_{\lil_{\Lambda}}}{\dim(\lil_{\Lambda})} \text{,}
\end{equation}
where the right-hand side reflects the canonical decomposition of the tensor product of representations into the direct sum of isotypic components: $\hil^{\ot n}\cong \oplus_{\Lambda} \kil_{\Lambda}\ot\lil_{\Lambda}$. Given an isotypic component, the irreducible representation of the orthogonal and symmetric groups labelled by $\Lambda$ acts on the right Hilbert space ($\lil_{\Lambda}$) of the tensor product, while the left Hilbert space ($\kil_{\Lambda}$) describes the multiplicity space. In accordance with Schur's lemma, the commutant is non-trivial only on the multiplicity spaces: $\rho_{\Lambda}\in\states{\kil_{\Lambda}}$ are arbitrary quantum states, $\mu_{\Lambda}\geq 0$ and $\sum_{\Lambda} \mu_{\Lambda}=1$. That is, $\hat{\rho}$ can be written as a convex sum of special quantum states with respect to the isotypic decomposition of the given representation. 

Inspired by \cite{jakab2022extendibility}, the traces with the following operators will be calculated:
\begin{equation}
    \smf_{n} \coloneqq \frac{2}{n(n-1)}\sum_{\substack{i,j=1 \\ i<j}}^{n}\flip_{ij}\text{,} \qquad \qquad \smb_{n} \coloneqq \frac{2}{n(n-1)} \sum_{\substack{i,j=1 \\ i<j}}^{n}\bb_{ij}\text{.}
\end{equation}

Based on the ideas presented in Appendices \ref{sec:app_cas} and \ref{sec:app_calcs}, the above operators can be expressed using Casimir operators:
\begin{align}
    \smf_{n}&=\frac{4}{n(n-1)(\chi(\ydiagram{2})-\chi(\ydiagram{1,1}))} \left[\rep_{\ydiagram{1}}^{\times n}(C^{\ualg(d)}) - n\chi(\ydiagram{1})\id_{\hil}^{\ot n} \right]\text{,}\\
    \smb_{n}&=\frac{\xi(\ydiagram{2})-\xi(\ydiagram{1,1})}{2\xi(\ydiagram{2})}\smf_{n}-\frac{2}{n(n-1)\xi(\ydiagram{2})}\left[\repp_{\ydiagram{1}}^{\times n}(C^{\So(d)}) - n\xi(\ydiagram{1})\id_{\hil}^{\ot n}\right]\text{,}
\end{align}
where $\chi(\lambda)$ and $\xi(\lambda)$ denote the quadratic Casimir eigenvalues of $\ualg(d)$ or $\So(d)$, respectively, in the irreducible representation labelled by the Young diagram $\lambda$, and $\rep_{\ydiagram{1}}^{\times n}(C^{\ualg(d)})$ and $\repp_{\ydiagram{1}}^{\times n}(C^{\So(d)})$ denote the quadratic Casimir operator in the $n$-fold tensor product of the defining representation of the unitary and the special orthogonal Lie algebra, respectively.

Given the above forms of $\smf_{n}$ and $\smb_{n}$, to do the calculations one needs to know the trace of a given constituent quantum state ($\rho_{\Lambda} \ot \id_{\lil_{\Lambda}}/ \dim(\lil_{\Lambda})$) and the represented Casimir operators above. In this case, this is trivial, as one can simply use Schur-Weyl duality to find the relevant irrep decomposition of $\U(d)$ and then restrict it to $\SO(d)$ (see Appendices \ref{sec:app_rep} and \ref{sec:app_sw}). This will ensure that the irreducible subspaces are exactly labelled by which $\ualg(d)$ and $\So(d)$ irrep they come from. Once the results for every constituent quantum state are known, one can simply take the convex hull of those points to get the result.

Note also that even if multiplicities appear, they are irrelevant as they do not give different results for $\smf_{n}$ and $\smb_{n}$.

In practice, for given extendibility number $n$ and dimension $d$ one goes through all projections onto the invariant subspaces of the appearing $\ualg(d)$ irreps (these are labelled by Young diagrams with $n$ boxes and at most $d$ rows) and calculates the trace of the projection with $\smf_n$ for them. Then, using Appendix \ref{sec:app_rep}, one can find the minimal and maximal $\So(d)$ irrep in the given $\ualg(d)$ irrep, which will give the minimal and maximal values when taking the trace of the projector with $\smb_n$. Thus, a polygon is obtained in the $\tr(\rho\flip)-\tr(\rho\bb)$ parameter space. However, this polygon is generally not convex and one has to take the convex hull of it. Because of this, and given that for large $n$ the number of calculations increases drastically, it is easiest to do the calculations numerically, nevertheless extracting exact results.

\section{Results for $(1,2)$-, $(1,3)$- and $(2,2)$-extendible Brauer states}\label{sec:result1}

As an application of the general recipe given in Section \ref{sec:recipe}, we have explicitly determined the sets of $(1,2)$-, $(1,3)$- and $(2,2)$-extendible Brauer states. In this Section, we present the qualitative results, while the quantitative results can be found in Appendix \ref{sec:app_res}, along with a detailed derivation for the case of $(1,2)$-extendible Brauer states.

First, we present the irreducible decompositions of the representations to which the extending states need to be invariant in the $(1,2)$-, $(1,3)$- and $(2,2)$-extendibility scenarios. This is best done when the dimension is large enough, more precisely when $2(n + m) \leq d$ as the fusion rules of $\U(d)$ and $\Ort(d)$ become independent of the specific dimension by that point.

For the $(1,2)$-extendibility scenario, we have
\begin{equation}
\begin{split}
    \left(\hat{\Repppp}^{\mathrm{L}}_1\Repp_{\ydiagram{1}} \right)\ot \left( \hat{\Repppp}^{\mathrm{R}}_2 \Repp_{\ydiagram{1}}^{\ot 2}\right) \cong \; & \Repppp^{\mathrm{L}}_{\ydiagram{1}} \ot \Repppp^{\mathrm{R}}_{\ydiagram{1,1}} \ot \left(\Repp_{\ydiagram{1}} \OP \Repp_{\ydiagram{1,1,1}} \OP \Repp_{\ydiagram{2,1}} \right) \OP \\
    & \Repppp^{\mathrm{L}}_{\ydiagram{1}} \ot \Repppp^{\mathrm{R}}_{\ydiagram{2}}\ot\left({\color{red}\Repp_{\ydiagram{1}}} \OP {\color{red}\Repp_{\ydiagram{1}}} \OP \Repp_{\ydiagram{2,1}}\OP \Repp_{\ydiagram{3}}\right) \text{,}
\end{split}
\end{equation}
and for the $(1,3)$-extendibility scenario, we have
\begin{equation}
\begin{split}
    \left(\hat{\Repppp}^{\mathrm{L}}_1\Repp_{\ydiagram{1}} \right)\ot \left( \hat{\Repppp}^{\mathrm{R}}_3 \Repp_{\ydiagram{1}}^{\ot 3}\right) \cong \; & \Repppp^{\mathrm{L}}_{\ydiagram{1}} \ot \Repppp^{\mathrm{R}}_{\ydiagram{1,1,1}}\ot \left(\Repp_{\ydiagram{1,1}} \OP \Repp_{\ydiagram{1,1,1,1}} \OP \Repp_{\ydiagram{2,1,1}} \right) \OP \\
    & \Repppp^{\mathrm{L}}_{\ydiagram{1}} \ot \Repppp^{\mathrm{R}}_{\ydiagram{2,1}}\ot \left(\Repp_{0} \OP {\color{orange}\Repp_{\ydiagram{1,1}}}\OP {\color{orange}\Repp_{\ydiagram{1,1}}}\OP {\color{brass}\Repp_{\ydiagram{2}}} \OP {\color{brass}\Repp_{\ydiagram{2}}} \OP \Repp_{\ydiagram{2,1,1}} \OP \Repp_{\ydiagram{2,2}} \OP \Repp_{\ydiagram{3,1}} \right) \OP \\
    & \Repppp^{\mathrm{L}}_{\ydiagram{1}} \ot \Repppp^{\mathrm{R}}_{\ydiagram{3}}\ot \left(\Repp_{0} \OP \Repp_{\ydiagram{1,1}} \OP {\color{brown}\Repp_{\ydiagram{2}}}\OP {\color{brown}\Repp_{\ydiagram{2}}} \OP \Repp_{\ydiagram{3,1}} \OP \Repp_{\ydiagram{4}} \right)\text{.}
\end{split}
\end{equation}
and finally, for the $(2,2)$-extendibility scenario, we have
\begin{equation}
\begin{split}
    \left(\hat{\Repppp}_4 \Rep_{\ydiagram{1}}^{\ot 4}\right)\Big\rvert_{\D_4\times \Ort(d)} \cong \; & \Repppp_{1}\ot\left[{\color{blue}\Repp_{0}}\OP {\color{green}\Repp_{\ydiagram{2}}} \OP \Repp_{\ydiagram{2,2}} \OP {\color{blue}\Repp_{0}} \OP {\color{green}\Repp_{\ydiagram{2}}} \OP \Repp_{\ydiagram{4}}\right] \OP \\
    &\Repppp_{1^{\! *}}\ot \left[\Repp_{\ydiagram{1,1}} \OP \Repp_{\ydiagram{2}} \OP \Repp_{\ydiagram{3,1}}\right] \OP \\
    &\Repppp_{\bar{1}} \ot \left[\Repp_{\ydiagram{1,1,1,1}} \OP \Repp_{0} \OP \Repp_{\ydiagram{2}} \OP \Repp_{\ydiagram{2,2}}\right] \OP \\
    & \Repppp_{\bar{1}^{\! *}}\ot\left[\Repp_{\ydiagram{1,1}} \OP \Repp_{\ydiagram{2,1,1}} \right] \OP \\
    &\Repppp_{2}\ot\left[{\color{blue-green}\Repp_{\ydiagram{1,1}}} \OP \Repp_{\ydiagram{2,1,1}} \OP {\color{blue-green}\Repp_{\ydiagram{1,1}}} \OP \Repp_{\ydiagram{2}} \OP \Repp_{\ydiagram{3,1}} \right]\text{,}
\end{split}
\end{equation}
where the coloured representations have multiplicity greater than one. Note also that in the $(2,2)$-extendibility case, the relevant subgroup of $\sym_4$ is $\D_4$ and its irreps are labelled accordingly, as explained in Section \ref{sec:nn}.

Each of these irreps is linked to a special quantum state (as described in Section \ref{sec:recipe}) with multiplicity one or two. The multiplicity-one irreps are easiest to deal with, and their projection onto the parameter space shows up as a point. On the other hand, multiplicity two irreps are linked to states on a 2-dimensional Hilbert space (qubits), and thus their projection onto the parameter space shows up as the shadow of a Bloch sphere: an ellipse.

Note that since the convex hull of these points and ellipses is needed to describe the subset of $(n,m)$-extendible Brauer states, not every point and ellipse will appear on the boundary, some may be internal points.

Figure \ref{fig:altogether} showcases the results for $d=2,3,4$ and $d\to\infty$. In most cases the sets of $(1,2)$- and $(2,2)$-extendible states coincide, while the set of $(1,3)$-extendible states is always different distinct. However, in the $d=3$ case, all three cases differ, and therefore the $(1,3)$- and the $(2,2)$-extendible subsets are presented next to each other in the appropriate figure. As it turns out, for dimensions $d\geq n+m=4$ the sets of $(1,2)$- and $(2,2)$-extendible Brauer states coincide again.

\begin{figure}[H]
    \begin{subfigure}{\textwidth}
        \centering
        \includegraphics[width=0.6\textwidth]{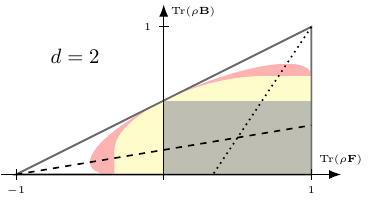}
    \end{subfigure}
    
    \begin{subfigure}{.5\textwidth}
        \centering
        \includegraphics[width=.9\linewidth]{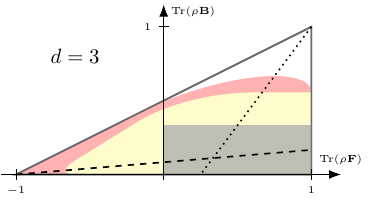}
    \end{subfigure}
    \begin{subfigure}{.5\textwidth}
        \centering
        \includegraphics[width=.9\linewidth]{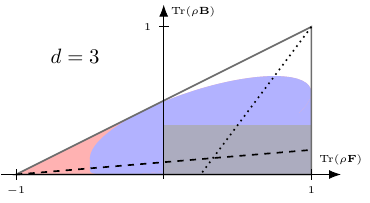}
    \end{subfigure}
    
    \begin{subfigure}{\textwidth}
        \centering
        \includegraphics[width=0.6\textwidth]{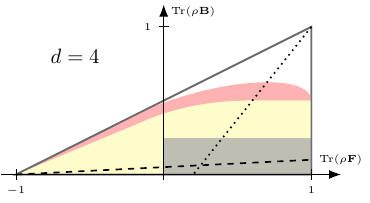}
    \end{subfigure}

    \begin{subfigure}{\textwidth}
        \centering
        \includegraphics[width=0.6\textwidth]{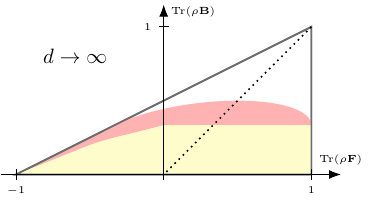}
    \end{subfigure}
    \caption{\small The set of Brauer states at $d=2,3,4$ and $d\to\infty$ in the $\tr(\rho\flip) - \tr(\rho\bb)$ parametrisation appearing as the grey triangle. In most figures subset of $(1,2)$- and $(2,2)$-extendible Brauer states appears in red in the background, and the subset of $(1,3)$-extendible Brauer states appears in yellow in the foreground. However, in the $d=3$ case the subset of $(2,2)$-extendible Brauer states appears in blue as it does not coincide with the subset of $(1,2)$-extendible Brauer states. The set of Werner states is illustrated as the dashed line, while the set of isotropic states as the dotted line. The set of separable Brauer states constitute the grey rectangle. However, in the $d\to\infty$ case the set of Werner states and the set of separable Brauer states have merged into the lower side.}
    \label{fig:altogether}
\end{figure}

In all figures, one should observe that, as expected, the $(1,3)$- and $(2,2)$-extendable Brauer states form a subset of the $(1,2)$-extendible Brauer states. What is more interesting is that the $(1,3)$-extendible Brauer states form a proper subset, while the $(2,2)$-extendible ones do not necessarily. Furthermore, the set of separable states, that is, the $(\infty,\infty)$-extendible Brauer states, is also a proper subset of $(1,3)$-extendible Brauer states, as expected. Note also that the behaviours of the Werner and isotropic states confirm the results of \cite{jakab2022extendibility}.

Observe that every Werner state (illustrated as the dashed line) is $(1,2)$-extendible for $d\geq 3$. Furthermore, they are $(1,3)$- and $(2,2)$-extendible for $d\geq 4$. In general \cite{jakab2022extendibility} confirms that every Werner state is $(n,m)$-extendible if $d\geq n+m$. In our method, this is apparent because the multiplicity-one projector linked to the $\U(d)$ irrep on the totally antisymmetric space appears first when $d=n+m$. This has a singular restriction to the orthogonal group and this irrep contributes the $(-1,0)$ point to the parameter space. Given that the upper-leftmost vertex of the set of separable states, namely $(0,1/d)$, is higher than the respective Werner state at the same $\tr(\rho\flip)=0$ coordinate, namely ($0,1/d(d+1)$), and that the set of $(n,m)$-extendible Brauer states form a convex set, we have that every Werner state is $(n,m)$-extendible for $d\geq n+m$. 

On the other hand, isotropic states have a subset that is not even $(1,2)$-extendible for any $d$, and indeed become less extendible as $d$ increases. This relates to the fact that they are a mixture of the maximally mixed and a maximally entangled state, this latter being unextendible. This is also confirmed by \cite{jakab2022extendibility}.

As a qualitative review of the result, it should also be observed in the last subfigure in Figure \ref{fig:altogether} that the sets do not change substantially as $d$ increases. Indeed, they get to a limiting shape as $d\to \infty$.

\section{Estimate of the set of $(1,m)$-extendible Brauer states}\label{sec:estimate}

As an other application of the general recipe in Section \ref{sec:recipe}, we have determined an estimate for the set of $(1,m)$-extendible Brauer states. The results are presented in this Section and the derivation in Appendix \ref{sec:app_estimate}.

The estimate hinges on identifying the most relevant multiplicity-one irreps in the irrep decomposition to which $(1,m)$-extending states need to be invariant to. This is done by finding a set of Young diagrams (labels) that give extremal contributions to the parameter space. Therefore, we have mostly focused on $\U(d)$ irreps such that their restriction to $\Ort(d)$ contains the trivial representation, as this can only have multiplicity one.

More precisely, if we consider a $\U(d)$ irrep $\Rep_{\nu}$ on $\hil^{\ot m}$ and a corresponding $\U(d)$ irrep $\Rep_{\lambda}$ on $\hil^{\ot (1+m)}$, then we wish to find $\Ort(d)$ irreps $\Repp_{\nu'}$ and $\Repp_{\lambda'}$ such that they appear in the restriction of the respective $\U(d)$ irreps and $\lambda'=[0]$ or $[0]^*$. With this in mind, we find the following set of extremal points:
\begin{align}
    \text{For $m\geq 1$:}\qquad &(1,0)\text{,}\\
    \text{For $m\geq 1$:}\qquad &\begin{cases} \text{if $m$ is odd:} & \left(1,\frac{m+d-1}{dm}\right)\text{,}\\ \text{if $m$ is even:} & \left(1,\frac{m+d}{d(m+1)}\right)\text{,}\end{cases}\\
    \text{For $m<d-1$:} \qquad &(-1,0)\text{,}\\
    \text{For $m\geq d-1$:}\qquad &\begin{cases} \text{if $m-d$ is odd:} & \left(\frac{1-d}{m},0\right)\text{,}\\ \text{if $m-d$ is even:} & \left(\frac{1-d}{m+1},0\right)\text{,}\end{cases}\\
    \text{For $m<2d-2$:}\qquad &\begin{cases}\text{if $m$ is odd:} & \left(\frac{3-m}{2m},\frac{2d-m+1}{2dm}\right)\text{,} \\ \text{if $m$ is even:} & \left(\frac{2-m}{2(m+1)},\frac{2d-m}{2d(m+1)}\right)\text{,} \end{cases}\\
    \text{For $m\geq 2d-2$:}\qquad &\begin{cases}\text{if $m$ is odd:} & \left(\frac{2-d}{m},\frac{1}{dm}\right)\text{,}\\ \text{if $m$ is even} & \left(\frac{2-d}{m+1},\frac{1}{d(m+1)}\right)\text{.} \end{cases}
\end{align}
Observe that in the cases of $m$ being even, or $m-d$ being even, the best estimates are the results for $m+1$. Thus, the sets of $(1,m)$-extendible and $(1,m+1)$-extendible Brauer states often coincide in our estimate for even $m$-s.

Figure \ref{fig:12_13_estimate} showcases the validity of the estimate by comparing it to the exact results for $(1,2)$- and $(1,3)$-extendible Brauer states. Observe that the estimates for $m=2$ and $m=3$ disagree only when $d=3$. The estimate perfectly describes where the sets of $(1,2)$- and $(1,3)$-extendible Brauer states meet the legs of the right triangle. However, note that for $d<4$ the estimate does not include the set of separable states in its entirety.

\vspace{-0.3cm}

\begin{figure}[H]
    \begin{subfigure}{.5\textwidth}
      \centering
      \includegraphics[width=\linewidth]{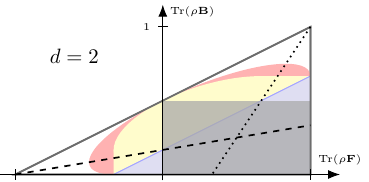}
    \end{subfigure}\hfill
    \begin{subfigure}{.5\textwidth}
      \centering
      \includegraphics[width=\linewidth]{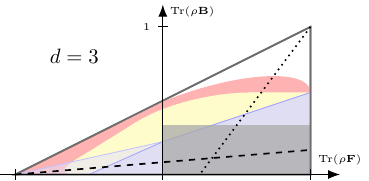}
    \end{subfigure}
    
    \begin{subfigure}{.5\textwidth}
      \centering
      \includegraphics[width=\linewidth]{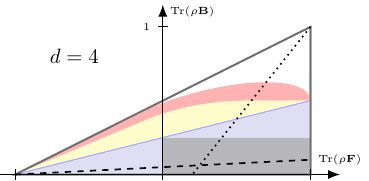}
    \end{subfigure}\hfill
    \begin{subfigure}{.5\textwidth}
      \centering
      \includegraphics[width=\linewidth]{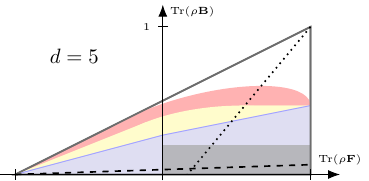}
    \end{subfigure}
    \caption{\small The set of Brauer states at $d=2,3,4$ and $5$ in the $\tr(\rho\flip) - \tr(\rho\bb)$ parametrisation appearing as the grey triangle. The subset of $(1,2)$-extendible Brauer states appears in red in the background, and the subset of $(1,3)$-extendible Brauer states appears in yellow in its foreground. The estimates for the sets of $(1,2)$- and $(1,3)$-extendible Brauer states, appear as the blue polygons with differing shades, contained in each other. The set of Werner states is illustrated as the dashed line, while the set of isotropic states as the dotted line. The set of separable Brauer states constitute the grey rectangle.}
  \label{fig:12_13_estimate}
\end{figure}

\vspace{-0.6cm}

A qualitative presentation of the results for $m\in[2,10]$ and dimensions $d=3$ and $10$, and $d\to\infty$ can be found in Figure \ref{fig:1m_d3_d10_dinfty}.

\vspace{-0.5cm}

\begin{figure}[H]
        \centering
        \includegraphics[width=0.65\textwidth]{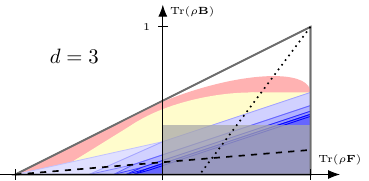} 
\end{figure}
    
\begin{figure}[H]
    \centering
    \includegraphics[width=0.65\textwidth]{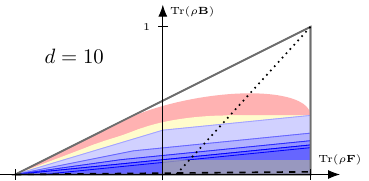}
\end{figure}

\begin{figure}[H]
    \centering
    \includegraphics[width=0.65\textwidth]{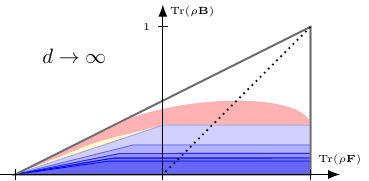} 
    \caption{\small The set of Brauer states at $d=3,10$ and $d\to\infty$ in the $\tr(\rho\flip) - \tr(\rho\bb)$ parametrisation appearing as the grey triangle. The subset of $(1,2)$-extendible Brauer states appears in red in the background, and the subset of $(1,3)$-extendible Brauer states appears in yellow in its foreground. The estimates of the sets of $(1,m)$-extendible Brauer states, $m\in[2,10]$, appear as the blue polygons with differing shades, contained in each other. The set of Werner states is illustrated as the dashed line, while the set of isotropic states as the dotted line. The set of separable Brauer states constitute the grey rectangle. However, in the $d\to\infty$ case the set of Werner states and the set of separable Brauer states have merged into the lower side.}
    \label{fig:1m_d3_d10_dinfty}
\end{figure}

\vspace{-0.5cm}

Note that, for fixed and odd $m$, in the $d\to\infty$ limit, we obtain a polygon with vertices $(-1,0)$, $\left(\frac{3-m}{2m},\frac{1}{m}\right)$, $\left(1,\frac{1}{m}\right)$, and $(1,0)$. On the other hand, for fixed dimension $d$, in the $m\to\infty$ limit, we obtain three of the four vertices of the rectangle of separable states: $(0,0)$, $(1,0)$, and $(1,1/d)$. However, the upper left corner, $(0,1/d)$, is elusive and most likely always corresponds to irreps with multiplicity greater than one, as attested by the exact results for $m=2$ and $m=3$.

Thus, the estimate could be improved by including the point corresponding to the upper left corner of the set of separable states and the points coming from the known results for $(1,m)$-extendible Werner and isotropic states \cite{jakab2022extendibility}. Furthermore, the estimate might also be improved by considering less obvious multiplicity-one irreps. Nevertheless, it gives a good first-order approximation to the set of $(1,m)$-extendible Brauer states.

\section{Results for $n$-de~Finetti-extendible Brauer states}\label{sec:result2}

\ytableausetup{mathmode, boxsize=1em, aligntableaux=center}

In the following, the exact results for the de~Finetti extendibility of Brauer states are presented.

\subsection{The finite $n$ case in dimension $d$}

As was noted in Section \ref{sec:defi}, for every $n$ and $d$ one has to take every Young diagram (taken as a label for a $\U(d)$ irrep), with $n$ boxes and at most $d$ rows, and determine the $\Ort(d)$ irrep(s) in its restriction, which are related to $\So(d)$ irreps with minimal and maximal quadratic Casimir eigenvalues. The following theorem contains our main result in relation to this issue:

\begin{restatable}[]{theorem}{minmax}\label{th:minmax}
    Let $\lambda$ be an at most $d$-row Young diagram labelling a $\U(d)$ irrep $\Rep_\lambda$, such that:
    \begin{equation}
    \lambda=[\lambda_1,\; \ldots,\; \lambda_d]=[2x_1+y_1,\; \ldots,\; 2x_d+y_d]\text{,}
    \end{equation}
    where, for all $i\in[d]$, the numbers $x_i$ are non-negative integers and $y_i\in\{0,1\}$.
    Then, the labels $\mu$ of the $\Ort(d)$ irreps in the restriction $\Rep_\lambda\big|_{\Ort(d)}$ corresponding to the $\So(d)$ irrep(s) with minimal or maximal quadratic Casimir eigenvalue, are the following:
    \begin{align}
    \text{Minimal: }\qquad & \mu=[1^{\sum_{i=1}^{d} y_i}]\text{,}\\
    \text{Maximal: }\qquad& \mu=\begin{cases}[\lambda_1-\lambda_d, \; \ldots, \; \lambda_{i}-\lambda_{d-i+1}, \; \ldots,\; \lambda_{k}-\lambda_{k+1}]^{a}\text{,} & \text{if $d=2k$,}\\
    [\lambda_1-\lambda_d, \; \ldots, \; \lambda_{i}-\lambda_{d-i+1}, \; \ldots, \; \lambda_{k}-\lambda_{k+2}]^{a}\text{,} & \text{if $d=2k+1$}\text{,}\end{cases}
    \end{align}
    where the notation $[1^{N}]$ describes an $N$-tall column of single boxes. Note that the label of the minimal irrep might need modification, while for the maximal case $a=\ast$, meaning that we are dealing with the associated representation, if $\lambda_{k+1}$ is odd, otherwise we are not dealing with the associated representation.
\end{restatable}

The proof of this theorem is lengthy and therefore can be found in Appendix \ref{sec:app_proof}. Illustratively, for the diagram labelling the irrep(s) with minimal quadratic Casimir eigenvalue, the theorem states that we should form a column using the single boxes in $\lambda$ that we find at the end of rows with an odd number of boxes in them. This relates to the fact that the $\Ort(d)$ irrep related to the $\So(d)$ irrep(s) with minimal quadratic Casimir eigenvalue is the one where in the restriction formula, Equation \eqref{eq:littlewood}, we use the maximal $2\kappa$ diagram to `fill out' $\lambda$ as much as possible. With a concrete Young diagram $\lambda=[5,4,4,2,1]$ we have the following:

\vspace{-0.1cm}

\begin{equation}
    \lambda=\ydiagram{5,4,4,2,1} \qquad \to \qquad \text{Minimal:} \quad
    \begin{ytableau}
    *(gray)\empty & *(gray)\empty & *(gray)\empty & *(gray)\empty & \empty\\
    *(gray)\empty & *(gray)\empty & *(gray)\empty & *(gray)\empty \\
    *(gray)\empty & *(gray)\empty & *(gray)\empty & *(gray)\empty \\
    *(gray)\empty & *(gray)\empty \\
    \empty \\
    \end{ytableau} \sim \; \ydiagram{1,1}=\mu\text{.}
\end{equation}

On the other hand, for the diagram labelling the irrep(s) with maximal quadratic Casimir eigenvalue, the theorem states that we should fold the Young diagram $\lambda$ in half such that the $d$-th row of $\lambda$ overlaps with its first row. Any overlaps cancel each other out, and the rows we are left with are the rows of the Young diagram labelling the desired irrep. Using the previous example, in dimension $d=5$ and $d=6$ we have:
\begin{equation}
    \lambda=\ydiagram{5,4,4,2,1} \qquad \to \qquad \text{Maximal:} \quad
    \begin{ytableau}
    \empty & \empty & \empty & \empty & \empty\\
    \empty & \empty & \empty & \empty \\
    \halfgraybox & \halfgraybox & \halfgraybox & \halfgraybox \\
    *(gray!50)\empty & *(gray!50)\empty \\
    *(gray!50)\empty \\
    \end{ytableau} \; \foldarrowvertical[black]{fold} \sim \;
    \begin{ytableau}
    *(gray)\empty & \empty & \empty & \empty & \empty\\
    *(gray)\empty & *(gray)\empty & \empty & \empty \\
    \none[\halfgraydashedbox] & \none[\halfgraydashedbox] & \none[\halfgraydashedbox] & \none[\halfgraydashedbox] \\
    \none[\dashedbox] & \none[\dashedbox] \\
    \none[\dashedbox] \\
    \end{ytableau} \sim \;
    \ydiagram{4,2}=\mu\text{,}
\end{equation}
\begin{equation}
    \lambda=\ydiagram{5,4,4,2,1} \qquad \to \qquad \text{Maximal:} \quad
    \begin{ytableau}
    \empty & \empty & \empty & \empty & \empty\\
    \empty & \empty & \empty & \empty \\
    \empty & \empty & \empty & \empty \\
    *(gray!50)\empty & *(gray!50)\empty \\
    *(gray!50)\empty \\
    \none[\dashedbox]\\
    \end{ytableau} \; \foldarrowvertical[black]{fold} \sim \;
    \begin{ytableau}
    \empty & \empty & \empty & \empty & \empty\\
    *(gray)\empty & \empty & \empty & \empty \\
    *(gray)\empty & *(gray)\empty & \empty & \empty \\
    \none[\dashedbox] & \none[\dashedbox] \\
    \none[\dashedbox]\\
    \none[\dashedbox]\\
    \end{ytableau} \sim \;
    \ydiagram{5,3,2}=\mu\text{.}
\end{equation}

In conclusion, for each $n$-box $\U(d)$ irrep label $\lambda$ one can determine the $\tr(\rho\flip)$ parameter and, the related minimal and maximal $\tr(\rho\bb)$ parameter. However, as was also stated in Section \ref{sec:defi}, the resulting polygon is generally not convex, and thus one needs to take its convex hull, which is easier to do numerically.

In Figure \ref{fig:definetti_d2_d5_d10} we illustrate the exact results for the sets of $n$-de~Finetti-extendible Brauer states for $n=3,4,5,20$ and dimensions $d=2,5,10$, although the calculations can be done for any $n$ and any $d$. One should notice that a limiting shape appears as $n$ grows larger, which is discussed in the next section.

\vspace{-0.3cm}

\begin{figure}[H]
    \begin{subfigure}{.5\textwidth}
      \centering
      \includegraphics[width=\linewidth]{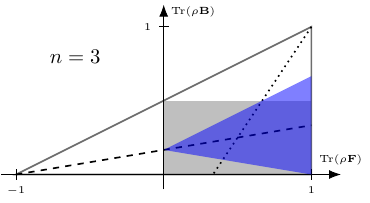}
    \end{subfigure}
    \hfill
    \begin{subfigure}{.5\textwidth}
      \centering
      \includegraphics[width=\linewidth]{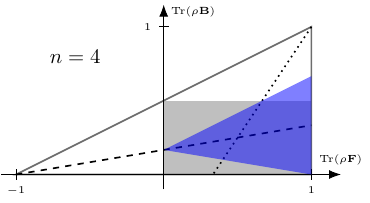}
    \end{subfigure}
    
    \begin{subfigure}{.5\textwidth}
      \centering
      \includegraphics[width=\linewidth]{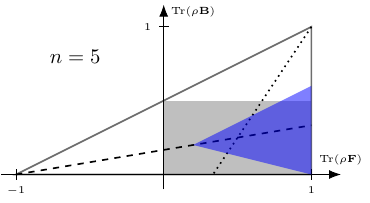}
    \end{subfigure}
    \hfill
    \begin{subfigure}{.5\textwidth}
      \centering
      \includegraphics[width=\linewidth]{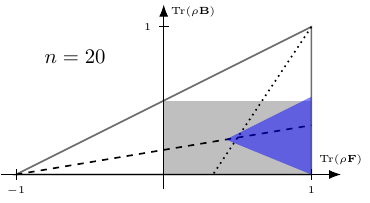}
    \end{subfigure}
\end{figure}

\begin{figure}[H]
    \begin{subfigure}{.5\textwidth}
      \centering
      \includegraphics[width=\linewidth]{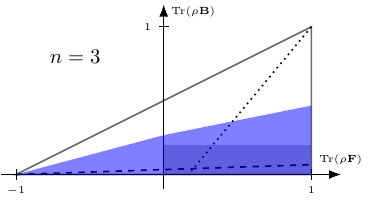}
    \end{subfigure}
    \begin{subfigure}{.5\textwidth}
      \centering
      \includegraphics[width=\linewidth]{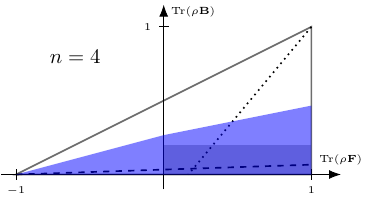}
    \end{subfigure}
    
    \begin{subfigure}{.5\textwidth}
      \centering
      \includegraphics[width=\linewidth]{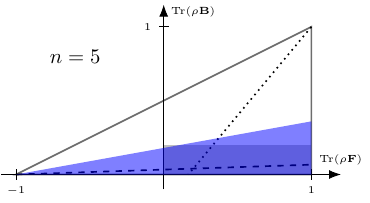}
    \end{subfigure}
    \begin{subfigure}{.5\textwidth}
      \centering
      \includegraphics[width=\linewidth]{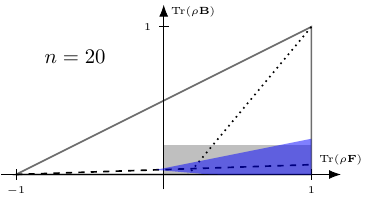}
    \end{subfigure}
\end{figure}

\begin{figure}[H]
    \begin{subfigure}{.5\textwidth}
      \centering
      \includegraphics[width=\linewidth]{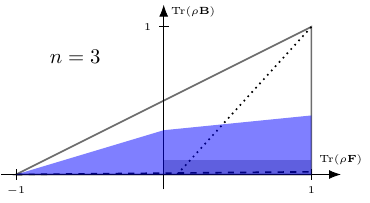}
    \end{subfigure}
    \begin{subfigure}{.5\textwidth}
      \centering
      \includegraphics[width=\linewidth]{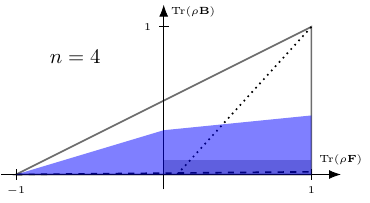}
    \end{subfigure}

    \begin{subfigure}{.5\textwidth}
      \centering
      \includegraphics[width=\linewidth]{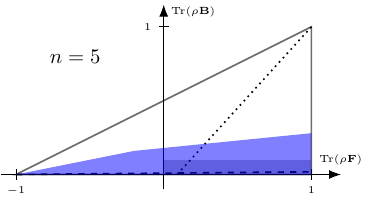}
    \end{subfigure}
    \begin{subfigure}{.5\textwidth}
      \centering
      \includegraphics[width=\linewidth]{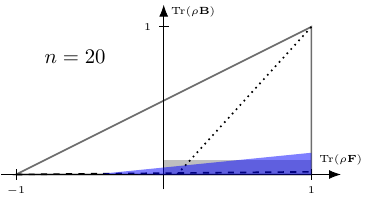}
    \end{subfigure}
    \caption{\small The set of Brauer states at $d=2,5,10$ in the $\tr(\rho\flip) - \tr(\rho\bb)$ parametrisation appearing as the grey triangle, grouped by dimension into sets of four. The subset of $n$-de~Finetti-extendible Brauer states is presented in blue for $n=3,4,5,20$. The set of Werner states is illustrated as the dashed line, while the set of isotropic states as the dotted line. The set of separable Brauer states constitute the grey rectangle.}
    \label{fig:definetti_d2_d5_d10}
\end{figure}

\vspace{-1cm}

\subsection{The $n\to\infty$ case in dimension $d$}

\ytableausetup{mathmode, boxsize=\Yt}

Based on the results of the previous section, one concludes that a limiting shape is reached as $n$ grows larger. For example, the limiting shape for $d=3$ is presented in Figure \ref{fig:definetti_limit}, circumscribed by three straight lines (in black, yellow, and red in the figure) and a non-trivial function (green in the figure).

\begin{figure}[H]
    \centering
    \includegraphics[width=0.6\textwidth]{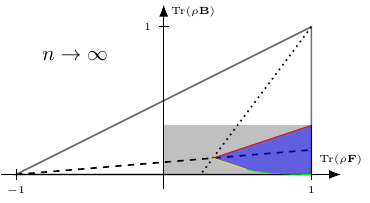}
    \caption{\small The set of Brauer states at $d=3$ in the $\tr(\rho\flip) - \tr(\rho\bb)$ parametrisation appearing as the grey triangle. The subset of $1000$-de~Finetti-extendible Brauer states is presented in blue. The set of Werner states is illustrated as the dashed line, while the set of isotropic states as the dotted line. The set of separable Brauer states constitute the grey rectangle. The area circumscribed by the red-black-green-yellow lines is the limiting shape of the $n\to\infty$ case.}
    \label{fig:definetti_limit}
\end{figure}

\vspace{-0.5cm}

The form of the limiting shape in any dimension can be obtained by the following argument: From the quantum de~Finetti theorem, we know that only Brauer states $\brauer$ that can be decomposed as convex combinations of flip-invariant product states $\brauer=\sum_i q_i \sigma_i \otimes \sigma_i $ (with $q_i \ge 0$ and $\sum_i q_i =1 $) can belong to this set. However, note that the $\Repp_{\ydiagram{1}}(g)\ot\Repp_{\ydiagram{1}}(g)$-twirl on a product state $\sigma_i \otimes \sigma_i $ yields a Brauer state $\brauer^{(i)}$ that has the same $\flip$ and $\bb$ expectation values as before the twirl, i.e., $\tr(\flip\sigma_i \otimes \sigma_i) = \tr(\flip\brauer^{(i)})$ and $\tr(\bb\sigma_i \otimes \sigma_i) = \tr(\bb\brauer^{(i)})$. This means that for each extremal point of the set of infinitely de~Finetti-extendible Brauer states there exists a flip-invariant product state with the given expectation values.

By taking the eigendecomposition of one part ($\sigma$) of the flip invariant product states ($\sigma\ot\sigma$) we can realise that the $\tr(\flip \sigma \ot \sigma)$ value is the sum of the squared eigenvalues, while the $\tr(\bb \sigma\ot\sigma)$ value can be linked to the quadratic form of a doubly stochastic matrix with the vector of eigenvalues. By the Birkhoff-von Neumann theorem any doubly stochastic matrix can be written as the convex sum of permutation matrices which leads to the realisation that the $\tr(\bb\sigma\ot\sigma)$ value can be be upper and lower bounded by using the rearrangement inequalities. See Appendix \ref{sec:app_definetti_infty} for a detailed proof.

Applying this line of reasoning, the following results can be obtained for the upper and lower limits:
\begin{minipage}{0.22\textwidth}
    \begin{flalign}
    \begin{aligned}
      & b_{\text{upper}}(f)= \frac{f}{d}\text{,} \\
      & \text{if $f\in\left[\frac{1}{d},1\right]$,} &
    \end{aligned}
    \end{flalign}
\end{minipage}
\hfill
\begin{minipage}{0.7\textwidth}
    \begin{align}
        b_{\text{lower}}^{d\text{ even}}(f)=&\begin{cases} \frac{2-df}{d^2}\text{,} & \text{if $f\in\left[\frac{1}{d},\frac{2}{d}\right]$,} \\
        0\text{,} & \text{if $f\in\left[\frac{2}{d},1\right]$,} \end{cases}\\
        b_{\text{lower}}^{d\text{ odd}}(f)=&\begin{cases} \frac{2-df}{d^2}\text{,} & \text{if $f\in\left[\frac{1}{d},\frac{2d-1}{d^2}\right]$,} \\
         \frac{\left(2-\sqrt{(d^2-1)f-2(d-1)}\right)^2}{d(d+1)^2}\text{,} & \text{if $f\in\left[\frac{2d-1}{d^2},\frac{2}{d-1}\right]$,} \vspace{0.2cm} \\
        0\text{,} & \text{if $f\in\left[\frac{2}{d-1},1\right]$.} \end{cases}
    \end{align}
\end{minipage}

\vspace{0.2cm}

Interestingly, in the case of odd dimensions, one obtains not a polygon but a shape which has some flat sides and one described by a non-trivial function. Observe Figure \ref{fig:definetti_infty} for a parametric illustration of the results. The upper limiting line is drawn in red in both the even and the odd cases. The lower limiting line has two parts for the even case, illustrated in yellow and blue, while for the odd case these are yellow, green, and blue. The same colour corresponds to the same type of limiting line. The black vertical line represents the convex closure of the area. As expected, the upper and lower limits meet at the maximally mixed state.

With this, we have answered Conjecture 4.16.~from \cite{allerstorfer2023monogamy}, in which they posited that in the $n\to\infty$ limit the set of $n$-de~Finetti-extendible Brauer states is not a polytope. Our answer proves their conjecture with the added subtlety that the conjecture is only true for odd dimensions.

Given the above results, one can draw the conclusion that Werner states with $\tr(\wern\flip)\geq 1/d$ are de~Finetti extendible for any $n$, while no non-trivial isotropic state is de~Finetti extendible for any $n$.

\begin{figure}[H]
    \centering
    \includegraphics[width=0.95\textwidth]{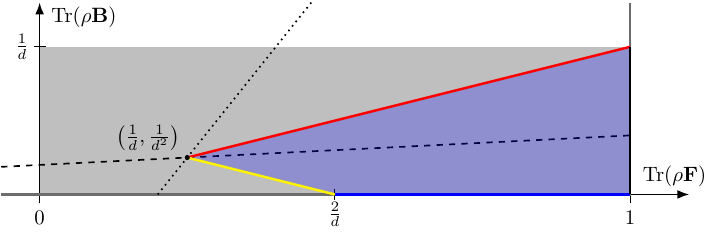}\vspace{0.2cm}
        \includegraphics[width=0.95\textwidth]{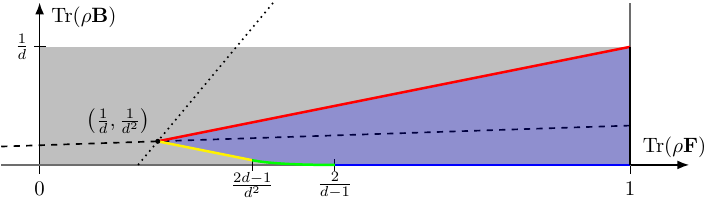}
    \caption{\small Parametric figures describing the set of de~Finetti-extendible Brauer states in the $n\to\infty$ limit in the $\tr(\rho\flip) - \tr(\rho\bb)$ parametrisation, for respectively the even- and odd-dimensional cases. The figures only contain the neighbourhood of the set of separable states which constitute the grey rectangle. The upper boundary of the set is described by the red line in both cases. The lower boundary is described by the yellow and blue straight lines in the even case, and a further non-trivial green line in the odd case. The states that are de~Finetti extendible for any $n$ are in the blue shaded area circumscribed by these lines. The set of Werner states is illustrated as the dashed line, while the set of isotropic states as the dotted line.}
    \label{fig:definetti_infty}
\end{figure}

\vspace{-1cm}

\section{Discussion}

We have presented a recipe for determining the set of $(n,m)$-extendible Brauer states. To demonstrate this method, we have calculated the set of $(1,2)$-, $(1,3)$- and $(2,2)$-extendible Brauer states for any local dimension $d$ and have provided an estimate of the set of $(1,m)$-extendible Brauer states in any dimension $d$. Additionally, we have put forward a method to determine the set of $n$-de~Finetti-extendibile Brauer states for any $n$ and dimension $d$, and we have presented results for the low-dimensional ($d=2,5,10$) cases. Furthermore, we have analytically described the limiting set of Brauer states of any dimension $d$ which are $n$-de~Finetti-extendible for any $n$. Our methods relied heavily on the concrete representation theory of the unitary, orthogonal, and symmetric groups.

Our findings for the two-sided extendibility are in line with our previous result in \cite{jakab2022extendibility}. We could confirm that for large enough dimension ($d\geq n+m$) all Werner states are $(n,m)$-extendible. We could also confirm the triviality that isotropic states have a subset that is not extendible at all, for any dimension $d$, which is in line with the fact that they contain a maximally entangled state.

We have made the interesting observation that as the local dimension increases, the sets of $(1,2)$-, $(1,3)$- and $(2,2)$-extendible Brauer states do not change significantly, while the set of separable Brauer states shrinks rapidly. It is well known that as the dimension increases, a randomly selected quantum state is more likely to be entangled than not. Therefore, the shrinking of the set of separable states is expected, however, interestingly, the degree of extendibility does not change significantly.

For the case of $n$-de~Finetti-extendibility, our results are in accordance with expectations. Our most notable result is an analytic description of the limiting shape in the $n\to \infty$ case which we derive from first principles. It provides the answer to Conjecture 4.16.~from \cite{allerstorfer2023monogamy}.

Further research directions are abundant. It would be interesting to push the extendibility numbers further. Given the relationship between the $(1,2)$- and $(2,2)$-extendible Brauer states, one is tempted to make a weak conjecture that for large enough $d$ all $(1,n)$-extendible Brauer states are also $(n,n)$-extendible. By preforming the calculations for the $(2,3)$- and $(3,3)$-extendibility cases, this conjecture could be destroyed or buttressed.

As is always the case with extendibility problems, it would also be interesting to define tripartite Brauer states and examine the multipartite extendibility problem for them. 

Another direction of research would be that of states invariant to the unitary symplectic group. The representation theory of this group is very similar to that of the orthogonal one, and such bipartite states are also described by two parameters, which further underlines the parallels with Brauer states. Some authors of this contribution are currently working together with authors from \cite{allerstorfer2023monogamy} to examine this research direction.

\section{Acknowledgement}
The authors thank D.~Grinko for engaging talks on the topic.
A.S.~and Z.Z.~express gratitude to I.~Marvian for hosting them at Duke University and for the stimulating discussions on extendibility. A.S.~also thanks I.~Vona for several helpful technical discussions on symbolic computation. 

This research was supported by the Ministry of Culture and Innovation and the National Research, Development and Innovation Office through the Quantum Information National Laboratory of Hungary (Grant No.~2022-2.1.1-NL-2022-00004), and
the grants TKP-2021-NVA-29 and FK 135220. Z.Z.
was also partially supported by the
QuantERA II project HQCC-101017733.

\bibliography{biblio.bib}

\begin{appendices}

\section{The representation theory of $\Ort(d)$} \label{sec:app_rep}

The representation theory of $\Ort(d)$ is a vast topic, therefore, the interested reader is directed to \cite{kirillov1976elements,fulton2004representation,goodman2009symmetry,goodman1998representation} for a more in depth analysis. Here, only the essential parts, more precisely the tensor representations, the fusion rules, and the branching rules from $\U(d)$ to $\Ort(d)$ are presented. A recapitulation of the representation theory of $\U(d)$ is also presented.

Note that throughout this article, all representations are assumed to be unitary representations and that their domains and codomains will not always be explicitly stated, but will always be assumed to be appropriate ones. Representations related to $\U(d)$ will be denoted by $\Rep$, representations related to $\Ort(d)$ by $\Repp$, and representations related to $\sym_n$ by $\Repppp$.

\subsection{Recap: Irreducible tensor representations of $\U(d)$}

The unitary group of degree $d$, denoted by $\U(d)$, is the set of linear operators on a $d$-dimensional Hilbert space $\hil$ that preserve the inner product with the group operation being the product of linear operators. The defining (or tautologous) representation of $\U(d)$ is the representation over $\hil$ in which each unitary in $\U(d)$ is mapped to itself.

The relevant representations to our topic arise by taking the tensor product of the defining representation some $N$ number of times with itself and looking at the irrep decomposition of this representation. These are called irreducible tensor representations. In general, tensor representations can be built by taking the tensor products of these irreps.

A polynomial representation is one in which (given bases) the matrix entries of the representing matrix are polynomials of the matrix entries of the original matrix $U\in\U(d)$. As it turns out, in the case of $\U(d)$ (and more generally $\GL(d)$), the tensor representations are exactly the polynomial representations. The following theorem classifies all of these (see \cite{baez2023young} or Section 2 in \cite{king1975branching}):

\begin{theorem}
    The irreducible tensor representations $\Rep_\lambda$ of $\U(d)$ are in a one-to-one correspondence with the Young diagrams $\lambda$ of at most $d$ rows. Furthermore, the irrep decomposition of the $N$-fold tensor product of the defining representation contains exactly all Young diagrams with at most $d$ rows and $N$ number of boxes, with possibly greater than one multiplicity.
\end{theorem}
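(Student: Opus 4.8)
\section*{Proof proposal}

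The plan is to prove both assertions through Schur--Weyl duality, that is, by exploiting the two mutually commuting actions on $\hil^{\ot N}$ (with $\dim\hil=d$): the permutation action of $\sym_N$ on the tensor factors and the diagonal action $U\mapsto U^{\ot N}$ of $\U(d)$, and then invoking the double commutant theorem. That $\sym_N$ lands in the commutant of the $\U(d)$-action is immediate from the definitions. The substantive step is the reverse containment: the algebra generated by $\{U^{\ot N}: U\in\U(d)\}$ is the \emph{full} commutant of the image of $\comp[\sym_N]$ in $\bound{\hil^{\ot N}}$. I would establish this by a polarization argument. Since the complex-linear span of $\U(d)$ is all of $\bound{\hil}$, multilinearity shows that the span of $\{U^{\ot N}:U\in\U(d)\}$ coincides with the span of $\{A^{\ot N}:A\in\bound{\hil}\}$; and this latter span is exactly the $\sym_N$-symmetric part of $\bound{\hil}^{\ot N}$, again by polarization. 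Identifying $\bound{\hil}^{\ot N}$ with $\bound{\hil^{\ot N}}$ carries the conjugation permutation action of $\sym_N$ to the factor-permutation action, so its symmetric (fixed) part is precisely the commutant of $\comp[\sym_N]$, which closes this step.

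With the double commutant in hand, semisimplicity (complete reducibility of the finite group $\sym_N$ and of the compact group $\U(d)$) yields the canonical isotypic decomposition
\begin{equation}
\hil^{\ot N}\;\cong\;\bigoplus_\lambda S^\lambda\ot\Rep_\lambda,
\end{equation}
where $\lambda$ ranges over partitions of $N$, the $S^\lambda$ are the pairwise non-isomorphic irreps of $\sym_N$, and each $\Rep_\lambda$ is either zero or an irreducible $\U(d)$-module, with distinct nonzero $\Rep_\lambda$ pairwise inequivalent. Next I would pin down exactly which $\Rep_\lambda$ are nonzero by realizing $\Rep_\lambda$ as the image of the Young symmetrizer $c_\lambda$ on $\hil^{\ot N}$. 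The column-antisymmetrization in $c_\lambda$ annihilates any tensor as soon as a column of $\lambda$ exceeds $d$ boxes, so $\Rep_\lambda=0$ whenever $\lambda$ has more than $d$ rows. Conversely, for $\lambda$ with at most $d$ rows one can fill each column with distinct basis vectors $e_1,e_2,\dots$ (possible precisely because every column has length $\le d$); the image of the resulting basis tensor under $c_\lambda$ is nonzero, so $\Rep_\lambda\neq 0$. This gives the ``furthermore'' clause: $\hil^{\ot N}$ contains exactly the at-most-$d$-row, $N$-box diagrams, each with multiplicity $\dim S^\lambda$.

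Finally, for the one-to-one correspondence and completeness I would argue that \emph{every} irreducible tensor representation occurs among the $\Rep_\lambda$. A tensor (equivalently, homogeneous degree-$N$ polynomial) representation has matrix coefficients that are homogeneous degree-$N$ polynomials in the entries of $U$, and these are spanned by the matrix coefficients of $U^{\ot N}$; hence any such irreducible representation embeds as a subrepresentation of $\hil^{\ot N}$ and, by complete reducibility, equals one of the $\Rep_\lambda$. Inequivalence of distinct nonzero $\Rep_\lambda$ then follows either from the double commutant output directly or from the linear independence of their characters, which are the Schur polynomials $s_\lambda$.

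I anticipate the main obstacle to be the nontrivial half of the double commutant statement, namely proving that the span of $\{U^{\ot N}\}$ exhausts the $\sym_N$-commutant; this rests on the polarization/density argument rather than any formal manipulation. Everything downstream---the vanishing criterion via column length, and completeness via polynomial matrix coefficients---is comparatively routine once that step is secured.
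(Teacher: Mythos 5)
First, a point of comparison: the paper never proves this theorem at all --- it is quoted as a classical classification result with pointers to the literature (\cite{baez2023young} and Section 2 of \cite{king1975branching}), so there is no in-paper argument to measure your route against. What you propose is the standard literature proof: Schur--Weyl duality via the double commutant theorem, Young symmetrizers to decide which diagrams give nonzero summands of $\hil^{\ot N}$, and homogeneity of matrix coefficients to show that every irreducible tensor (i.e.\ polynomial) representation already occurs inside some $\hil^{\ot N}$. In outline this is correct, and the downstream steps --- the isotypic decomposition, the vanishing of the symmetrizer image exactly when the first column of $\lambda$ has more than $d$ boxes, the non-vanishing construction for at most $d$ rows, and completeness --- are sound.

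However, the justification you give for the crucial step fails as written. You assert that, because the complex-linear span of $\U(d)$ is all of $\bound{\hil}$, ``multilinearity'' yields $\mathrm{span}\{U^{\ot N}:U\in\U(d)\}=\mathrm{span}\{A^{\ot N}:A\in\bound{\hil}\}$. But $A\mapsto A^{\ot N}$ is not multilinear, and the inference ``$S$ spans $V$, hence $\{s^{\ot N}:s\in S\}$ spans $\mathrm{span}\{v^{\ot N}:v\in V\}$'' is false in general: take $V=\comp^2$, $S=\{e_1,e_2\}$, $N=2$; then $\mathrm{span}\{e_1\ot e_1,\,e_2\ot e_2\}$ is two-dimensional, while $\mathrm{span}\{v\ot v : v\in V\}=\mathrm{Sym}^2(\comp^2)$ is three-dimensional. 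What actually saves the step for $\U(d)$ is not that it spans $\bound{\hil}$ but that it is Zariski-dense in it (the ``unitarian trick''): if a linear functional $X$ annihilates every $U^{\ot N}$, then $A\mapsto\langle X, A^{\ot N}\rangle$ is a holomorphic polynomial in the matrix entries of $A$ that vanishes on $\U(d)$, hence vanishes identically, so $X$ annihilates every $A^{\ot N}$. Note that ordinary Euclidean density, which suffices for the $\GL(d,\comp)$ version of this lemma, is unavailable here because $\U(d)$ is compact, so the analytic/Zariski argument is genuinely required. You do allude to a ``polarization/density argument'' in your closing paragraph, so the repair is within reach, but the argument in the body must be replaced by this density step. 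One last small point: the claimed one-to-one correspondence ranges over all $N$ simultaneously, so you also need inequivalence of $\Rep_\lambda$ and $\Rep_\mu$ when $|\lambda|\neq|\mu|$; your character/Schur-polynomial remark covers this, or, more cheaply, the central character (the scalar $e^{i\theta}\id$ acts by $e^{iN\theta}$ on every degree-$N$ constituent).
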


Note that a Young diagram is a finite collection of boxes arranged in left-justified rows in order of non-increasing row lengths.

The above theorem means that understanding how $\U(d)$ irreps labelled by Young diagrams fuse together and restrict to $\Ort(d)$ irreps is what is needed to recreate the calculations in this article. Following the above theorem, it is also easy to deduce that the defining representation of $\U(d)$ is $\Rep_{\ydiagram{1}}$. The trivial representation is labelled with a zero to represent a Young diagram with zero boxes: $\Rep_{0}$.

Note that the notation of this article does not specify the degree $d$ of the unitary group for which the representation $\Rep_\lambda$ stands for. This is because $d$ is treated as a parameter and the relevant representation theory (the fusion and restriction rules) of the $\U(d)$ irreps is the same for large enough $d$. Therefore, it is omitted from the notation. However, it is always implicitly fixed for a given calculation, and two irreps of different unitary groups will never figure in the same context.

\subsection{Irreducible tensor representations of $\Ort(d)$}

The real orthogonal group of degree $d$, denoted by $\Ort(d)$, is the set of linear operators on a $d$-dimensional real inner product space $V$ that preserve the inner product, with the group operation being the product of linear operators. The defining (or tautologous) representation of $\Ort(d)$ is the representation over $V$ in which each element of $\Ort(d)$ is mapped to itself.

To be able to define a representation of the orthogonal group on a complex inner product space $\hil$, one first has to fix a real structure on the vector space. There is no canonical way of doing this, as infinitely many unitarily conjugate real structures can be defined over the same complex vector space.\footnote{In our case this is related to the fact that there are infinitely many unitarily equivalent embeddings of $\Ort(d)$ into $\U(d)$.} One of the ways to fix a real structure is to fix an orthonormal basis $\{e_i\}_{i=1}^{d}$ for the $d$-dimensional Hilbert space $\hil$ and define the real structure on $\hil$ as the complex conjugation of the coefficients of vectors in this fixed basis.\footnote{Note that there are infinitely many ways to fix a basis.} This way, we have the real subspace $\hil_{\real}\coloneqq \mathrm{span}_{\real}(\{e_i\}_{i=1}^{d})$, and the following splitting is true $\hil=\hil_\real\OP i\hil_\real$, where $i$ denotes the imaginary unit. Once $\hil_\real$ is fixed, a real inner product naturally arises on it from the restriction of the usual inner product of $\hil$ to $\hil_\real$. Thus, $\Ort(d)$ is defined as those unitaries that preserve $\hil_\real$. Equivalently, we have that the set $\Ort(d)\subset \U(d)$ will be the set of unitaries with only real numbers as their matrix elements in the basis $\{e_i\}_{i=1}^{d}$.

Note that once an orthonormal basis is fixed, it can be rotated by any element $O\in\Ort(d)$ to get another orthonormal basis, which will lead to an equivalent real structure. One can construct the $\hil_\real$-basis-independent vector $\Psi\in\hil\ot\hil$ by defining $\Psi\coloneqq \sum_{i=1}^{d} e_i\ot e_i$. This is the unique vector in $\hil_\real \ot \hil_\real$ such that for all $v,w\in\hil_\real$ we have that $\inn{v}{w}_{\hil}=\inn{\Psi}{v\ot w}_{\hil\ot\hil}$. Note that $(O\otimes O)\Psi=\Psi$ for all $O\in\Ort(d)$. Furthermore, the projection onto $\Psi$ is exactly $\fflip$, or, equivalently, the unnormed maximally entangled state in the description of isotropic states. For further information, see Section 3.5 of \cite{fulton2004representation}.

Just as before, one is only interested in the tensor representations of $\Ort(d)$ (see Section 2 in \cite{king1975branching}):

\begin{theorem}
    Let $\Ort(d)$ be the real orthogonal group of degree $d$ with $d=2k$ or $d=2k+1$. For each Young diagram $\lambda$ of at most $k$ rows there exist at most two irreducible tensor representations of $\Ort(d)$ labelled by $\lambda$ and $\lambda^*$. The two representations are called associate representations, and the relationship between them is the following: $\Repp_\lambda\cong \det \ot \Repp_{\lambda^{*}}$. All irreducible tensor representations of $\Ort(d)$ are of this form. However, in the $d=2k$ case, there is only one representation linked to Young diagrams $\lambda$ with non-zero last row, that is, in this case: $\Repp_{\lambda}\cong \Repp_{\lambda^{*}}$.
\end{theorem}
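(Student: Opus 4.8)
The plan is to derive the representation theory of $\Ort(d)$ from that of its index-two normal subgroup $\SO(d)$, for which the irreducible tensor representations are already classified by Weyl's highest-weight theory, and then to pin down the explicit $\det$-twist. Write $\Ort(d)=\SO(d)\,\sqcup\,\tau\,\SO(d)$, where $\tau$ is a fixed improper transformation ($\det\tau=-1$), so $\Ort(d)/\SO(d)\cong\zahl_2$ and the one-dimensional $\det$ generates the dual of this quotient. The engine is Clifford theory for an index-two subgroup: an irreducible tensor representation of $\Ort(d)$, restricted to $\SO(d)$, either (a) stays irreducible, in which case it extends in exactly two inequivalent ways that differ precisely by the $\det$-twist, or (b) splits as $W\OP W^{\tau}$ into two inequivalent $\tau$-conjugate $\SO(d)$-irreps, in which case it is the unique irreducible induction $\mathrm{Ind}_{\SO(d)}^{\Ort(d)}W$ and is fixed by the $\det$-twist. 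This dichotomy already produces the ``two versus one'' alternative; what remains is to decide which case occurs and to fix the diagram labels.

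First I would settle the dichotomy by parity. For odd $d=2n+1$ the element $-\id$ has $\det=-1$, so it may serve as $\tau$; being central it acts trivially by conjugation, every $\SO(2n+1)$-irrep is $\tau$-invariant, and case (a) always holds, giving two representations $\Repp_\lambda,\Repp_{\lambda^{*}}$ for every admissible $\lambda$. For even $d=2n$ a reflection $\tau$ induces the outer (diagram) automorphism of $\SO(2n)$, which on a highest weight $(\lambda_1\geq\cdots\geq\lambda_{n-1}\geq\lambda_n\geq 0)$ acts by $\lambda_n\mapsto-\lambda_n$. Hence the tensor irrep is $\tau$-invariant exactly when $\lambda_n=0$, i.e.\ when its Young diagram has strictly fewer than $n$ rows (case (a), two representations), whereas a diagram with all $n$ rows nonempty has $\lambda_n\neq 0$, so $W$ and $W^{\tau}$ (the weights $\pm\lambda_n$) are inequivalent and case (b) applies, producing a single self-associate $\Ort(2n)$-irrep. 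This is precisely the clause that for $d=2n$ there is only one representation attached to diagrams with non-zero last row.

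Next I would identify $\lambda^{*}$ explicitly through the $\det$-twist, using Weyl's construction of $\Repp_\lambda$ as the traceless part of the image of a Young symmetrizer on $\hil^{\ot N}$, with $\hil\cong\comp^d$ carrying the invariant symmetric form. The first column of $\lambda$ is governed by antisymmetrization, so the key building block is the orthogonal Hodge relation $\det\ot\Lambda^{k}\hil\cong\Lambda^{d-k}\hil$, which follows from the perfect pairing $\Lambda^{k}\hil\ot\Lambda^{d-k}\hil\to\Lambda^{d}\hil\cong\det$ together with the form-induced self-duality $\hil\cong\hil^{*}$. Propagating this through Weyl's construction shows that tensoring $\Repp_\lambda$ with $\det$ leaves every column unchanged except the first, whose length $\ell(\lambda)$ is replaced by $d-\ell(\lambda)$; this is exactly the associate $\lambda^{*}$. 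In particular $\lambda^{*}=\lambda$ iff $\ell(\lambda)=d/2=n$, recovering the self-associate collapse in even dimension, the inequivalence $\Repp_\lambda\not\cong\Repp_{\lambda^{*}}$ otherwise, and the stated relation $\Repp_\lambda\cong\det\ot\Repp_{\lambda^{*}}$. Exhaustiveness — that every tensor irrep is some $\Repp_\lambda$ or $\Repp_{\lambda^{*}}$ — follows from Schur–Weyl–Brauer duality: the commutant of $\Ort(d)$ on $\hil^{\ot N}$ is the Brauer algebra $\mathcal{B}_N(d)$, whose simple modules enumerate precisely these admissible diagrams.

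The main obstacle I anticipate is the even-dimensional $D_n$ bookkeeping, where the two strands must be reconciled: Clifford theory phrases the alternative in terms of $\tau$-invariance of $\SO(2n)$-weights, while the theorem phrases it in terms of the first-column associate of Young diagrams. Establishing that the $\det$-twist's action $\ell(\lambda)\mapsto d-\ell(\lambda)$ on diagrams matches the automorphism $\lambda_n\mapsto-\lambda_n$ on weights — and that the coincidence $\lambda=\lambda^{*}$ at $n$ rows is exactly the case in which the $\SO(2n)$-restriction splits — is the delicate step, since this is where the boundary between ``two'' and ``one'' representation sits and where a careless weight/diagram translation would yield the wrong count.
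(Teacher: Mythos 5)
Your proposal is correct, but it cannot be compared against an in-paper argument: the paper states this classification as a known classical result, citing Section~2 of King's paper (and the standard references Fulton--Harris and Goodman--Wallach), and offers no proof of its own. What you supply is a genuine derivation, and it is a sound one. Your route --- Clifford theory for the index-two pair $\SO(d)\trianglelefteq\Ort(d)$ to get the ``two extensions versus one induced representation'' dichotomy, the parity analysis ($-\id$ central for odd $d$; a reflection realizing the $D_n$ diagram automorphism $\lambda_n\mapsto-\lambda_n$ for even $d$), the relation $\det\ot\Lambda^{k}\hil\cong\Lambda^{d-k}\hil$ to identify the associate label as the first-column replacement $\ell(\lambda)\mapsto d-\ell(\lambda)$, and Brauer--Schur--Weyl duality for exhaustiveness --- is essentially the modern textbook proof, and it dovetails with machinery the paper itself relies on elsewhere: the restriction theorem quoted in Appendix~C (where $\Repp_\lambda\vert_{\SO(2n)}$ splits exactly when $\lambda_n\neq 0$) is the inverse direction of your Clifford-theoretic lifting, and Section~4.1 explicitly invokes this same ``infer the representation theory from an index-two subgroup'' philosophy for the wreath product $\sym_n\wr\zahl_2$. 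What your approach buys is a self-contained, structural explanation of the ``at most two'' count and of why the collapse $\Repp_\lambda\cong\Repp_{\lambda^*}$ happens precisely at $d=2n$ with $\lambda_n\neq0$; what the paper's citation buys is brevity and the full bookkeeping (modification rules, starred labels) already packaged in King's conventions, which the rest of the appendices use.

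Two minor imprecisions worth flagging. First, the commutant of $\Ort(d)$ on $\hil^{\ot N}$ is the \emph{image} of the Brauer algebra $\mathcal{B}_N(d)$, which is generally a proper quotient of it (the map has a kernel once $N$ is large relative to $d$); Brauer's double-centralizer theorem is still exactly what you need, but the phrase ``whose simple modules enumerate precisely these admissible diagrams'' should be restricted to the simple modules surviving in that quotient. Second, the step ``propagating this through Weyl's construction'' is where the explicit identification $\Repp_\lambda\cong\det\ot\Repp_{\lambda^{*}}$ actually gets proved, and you only sketch it; it does work (one checks that tensoring with $\det=\Lambda^{d}\hil$ intertwines the traceless Young-symmetrized subspaces with first columns of lengths $\ell$ and $d-\ell$), but as written it is an assertion resting on the single exterior-power case, so it would need to be expanded in a complete write-up.
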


It is non-trivial, but, similarly to the $\U(d)$ case, the defining representation is $\Repp_{\ydiagram{1}}$, and the trivial representation is $\Repp_{0}$. Note that $\Repp_{0^{*}}$ is the determinant representation. Importantly $\det(O)\in\{-1,1\}$ for any $O\in\Ort(d)$.

Note again that the notation of this article does not specify the dimension $d$ of the orthogonal group for which the representation $\Repp_\lambda$ or $\Repp_{\lambda^{*}}$ stands for. This is because $d$ is treated as a parameter and the relevant representation theory (the fusion rules) of $\Ort(d)$ irreps is the same for large enough $d$. Therefore, it is omitted from the notation, but is always implicitly fixed for a given calculation. Most importantly, only those unitary and orthogonal irreps are put into relations that represent the unitary or orthogonal group of the same degree $d$.

\subsection{Recap: Fusion rules of $\U(d)$ - The Littlewood-Richardson formula}

As stated before, the irrep decomposition of the tensor product of two tensor irreps is needed. For more information, see originally \cite{littlewood1934group} and for more modern discussions see page 402 in \cite{goodman2009symmetry} or Section 3 in \cite{king1975branching} or page 485 in \cite{koike1987young}.

\begin{theorem}
    Let $\Rep_{\mu}$ and $\Rep_{\nu}$ be tensor irreps of $\U(d)$ labelled by Young diagrams $\mu$ and $\nu$ with at most $d$ rows (their depth is at most $d$). Their tensor product decomposes as
    \begin{equation}\label{eq:lr-formula}
        \Rep_{\mu}\ot\Rep_{\nu}\cong \bigoplus_{\substack {\lambda \\ \mathrm{depth}(\lambda)\leq d}} c^{\lambda}_{\mu,\nu} \Rep_{\lambda}\text{,}
    \end{equation}
    where $\lambda$ is a Young diagram and $c^{\lambda}_{\mu,\nu}$ are nonnegative integers called Littlewood-Richardson coefficients that give the multiplicity of the $\Rep_{\lambda}$ irrep.
\end{theorem}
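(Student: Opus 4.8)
The plan is to split the statement into three pieces of content—existence of the decomposition, the index set of diagrams that can occur, and the identification of the multiplicities with the Littlewood--Richardson coefficients—and to dispatch the first two cheaply before spending effort on the third via character theory. First I would invoke compactness of $\U(d)$ (equivalently, reductivity of $\GL(d,\comp)$): every finite-dimensional representation is completely reducible, so $\Rep_\mu\ot\Rep_\nu$ splits as a direct sum of irreducibles, and the multiplicity of each summand is by definition $c^\lambda_{\mu,\nu}=\dim\mathrm{Hom}_{\U(d)}(\Rep_\lambda,\Rep_\mu\ot\Rep_\nu)$, which is manifestly a nonnegative integer. This already settles nonnegativity and integrality with no further work; the substance is to pin down which $\lambda$ occur and to show the multiplicities are the classical LR coefficients.

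For the index set, I would note that a tensor product of polynomial representations is again polynomial, since its matrix entries are products of polynomials in the entries of $U\in\U(d)$. By the classification theorem recalled above—irreducible polynomial representations of $\U(d)$ are labelled precisely by Young diagrams of depth at most $d$—every irreducible summand must carry such a label, which is exactly the restriction $\mathrm{depth}(\lambda)\leq d$ appearing in the sum.

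The essential step is to compute the multiplicities. I would pass to characters, using that the character of $\Rep_\lambda$ on a group element with eigenvalues $x_1,\dots,x_d$ is the Schur polynomial $s_\lambda(x_1,\dots,x_d)$. Taking characters of the decomposition turns the tensor product into the product of characters, giving $s_\mu(x)\,s_\nu(x)=\sum_{\mathrm{depth}(\lambda)\leq d}c^\lambda_{\mu,\nu}\,s_\lambda(x)$; since the $s_\lambda$ with $\mathrm{depth}(\lambda)\leq d$ are linearly independent, this determines the $c^\lambda_{\mu,\nu}$ as the structure constants of multiplication in the Schur basis, which are by definition the Littlewood--Richardson coefficients.

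The main obstacle is the dependence on $d$, i.e., reconciling these $d$-variable structure constants with the ``universal'' LR coefficients living in the ring of symmetric functions $\Lambda$ in infinitely many variables (this universality underlies the paper's remark that the fusion rules stabilise for large $d$). Here I would use the truncation map $\Lambda\to\comp[x_1,\dots,x_d]$ that sends $s_\lambda$ to $s_\lambda(x_1,\dots,x_d)$, which kills exactly the Schur functions with $\mathrm{depth}(\lambda)>d$: applying it to the universal identity $s_\mu s_\nu=\sum_\lambda c^\lambda_{\mu,\nu}s_\lambda$ in $\Lambda$ drops all deep terms and reproduces the stated formula, with the surviving coefficients independent of $d$. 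I would emphasise that the genuinely deep input—the combinatorial Littlewood--Richardson rule expressing $c^\lambda_{\mu,\nu}$ as a count of lattice (Yamanouchi) skew tableaux—is not required for the statement as phrased, which asserts only nonnegativity and integrality; that rule would be the real difficulty if an explicit formula for the coefficients were demanded.
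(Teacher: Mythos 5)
Your proposal is correct. Note that the paper itself offers no proof of this statement: it is quoted as background in Appendix A with citations to Goodman--Wallach, King, and Koike, where the result is established by exactly the route you take. Your argument is the standard one: complete reducibility from compactness of $\U(d)$ gives existence and nonnegative integer multiplicities; closure of polynomial representations under tensor product plus the classification theorem restricts the summands to diagrams of depth at most $d$; and passing to characters identifies the multiplicities as the structure constants of Schur polynomials in $d$ variables. Your handling of the only genuinely delicate point — that these $d$-variable structure constants agree with the universal coefficients in the ring of symmetric functions, via the truncation map that annihilates precisely the $s_\lambda$ with $\mathrm{depth}(\lambda)>d$ — is what justifies the paper's remark that $c^{\lambda}_{\mu,\nu}$ does not depend on $d$ except through the constraint $\mathrm{depth}(\lambda)\leq d$, and you are also right that the combinatorial Littlewood--Richardson rule (stated separately in the paper as the next theorem) is not needed for the decomposition statement itself.
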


Importantly, the only non-zero Littlewood-Richarson coefficients are those for which the sum of the boxes in $\mu$ and $\nu$ equals the number of boxes in $\lambda$. The Littlewood-Richardson coefficients can be calculated following a combinatorial algorithm called the Littlewood-Richardson rule (see Section 9.3.5 in \cite{goodman2009symmetry}):

\begin{theorem}[Littlewood-Richardson]
    The Littlewood-Richardson coefficient $c^{\lambda}_{\mu,\nu}$ is the number of Littlewood-Richardson skew tableaux of shape $\lambda\setminus \mu$ and weight $\nu$.
\end{theorem}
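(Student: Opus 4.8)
The plan is to translate this representation-theoretic statement into an identity in the ring of symmetric functions and then to prove the combinatorial count using the theory of semistandard tableaux. First I would pass to characters: the character of the $\U(d)$ irrep $\Rep_\lambda$ is the Schur polynomial $s_\lambda(x_1,\dots,x_d)$ in the eigenvalues of $U\in\U(d)$. Since characters are additive under $\OP$, multiplicative under $\ot$, and determine a representation up to isomorphism, the fusion rule \eqref{eq:lr-formula} is equivalent to the expansion $s_\mu s_\nu=\sum_\lambda c^\lambda_{\mu,\nu}\,s_\lambda$. By stability of Schur functions the coefficient is independent of the number of variables once $d\ge\mathrm{depth}(\lambda)$, and $s_\lambda$ vanishes identically in $d$ variables when $\mathrm{depth}(\lambda)>d$; this matches the depth restriction in \eqref{eq:lr-formula}, so it suffices to prove the combinatorial formula for the structure constants $c^\lambda_{\mu,\nu}$ in the ring $\Lambda$ of symmetric functions in infinitely many variables, where the $s_\lambda$ form a $\mathbb{Z}$-basis.

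Next I would bring in the Hall inner product $\langle\cdot,\cdot\rangle$ on $\Lambda$, for which $\{s_\lambda\}$ is orthonormal and for which multiplication by $s_\nu$ is adjoint to the skewing operator $s_\lambda\mapsto s_{\lambda/\nu}$. This yields $c^\lambda_{\mu,\nu}=\langle s_\mu s_\nu,s_\lambda\rangle=\langle s_\nu,s_{\lambda/\mu}\rangle$, so $c^\lambda_{\mu,\nu}$ is exactly the coefficient of $s_\nu$ in the Schur expansion of the skew Schur function $s_{\lambda/\mu}$. I would then invoke the combinatorial description $s_{\lambda/\mu}=\sum_T x^{\mathrm{wt}(T)}$, the sum ranging over semistandard Young tableaux $T$ of skew shape $\lambda/\mu$ with $\mathrm{wt}(T)$ the content vector. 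The remaining task is to extract the multiplicity of $s_\nu$ from this monomial generating function and to identify it with the number of Littlewood--Richardson skew tableaux of shape $\lambda/\mu$ and weight $\nu$.

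For this decisive step I would use \emph{jeu de taquin}. Define the rectification of a skew tableau by iterating inward slides into the cells of $\mu$; Sch\"utzenberger's theorem states that the straight-shape tableau so obtained is independent of the order of the slides (confluence) and that rectification preserves Knuth (plactic) equivalence, hence preserves the content. Granting this, the number of skew semistandard tableaux of shape $\lambda/\mu$ that rectify to a fixed straight-shape tableau $U$ of shape $\nu$ depends only on $\lambda,\mu,\nu$ and equals $c^\lambda_{\mu,\nu}$. Specializing $U$ to the unique Yamanouchi tableau of shape and content $\nu$ (whose $i$-th row consists entirely of the entry $i$), one shows that a skew tableau rectifies to it precisely when its reading word is a lattice (ballot) word, i.e.\ precisely when it is a Littlewood--Richardson skew tableau; counting these gives $c^\lambda_{\mu,\nu}$, which is the claim. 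The main obstacle is exactly the confluence and Knuth-invariance of \emph{jeu de taquin} (equivalently, the well-definedness of the plactic monoid product): this carries the genuine combinatorial content, while everything else is bookkeeping. I would either establish it directly from a case analysis of elementary slides and the Knuth relations, or invoke it, as the paper does, from \cite{goodman2009symmetry}.
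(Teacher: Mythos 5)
The paper never proves this statement---it is quoted as a classical result with a pointer to Section 9.3.5 of \cite{goodman2009symmetry}, and is only ever used through explicit low-order fusion computations in the appendices---so your proposal is necessarily a different route: an actual proof sketch, and indeed the standard one. Your reduction is sound: characters of polynomial $\U(d)$ irreps are Schur polynomials, so the fusion rule is equivalent to $s_\mu s_\nu=\sum_\lambda c^{\lambda}_{\mu,\nu}s_\lambda$; stability of Schur functions justifies passing to the ring $\Lambda$ of symmetric functions, with the paper's depth restriction matching the vanishing of $s_\lambda$ in $d$ variables when $\mathrm{depth}(\lambda)>d$; the Hall-inner-product adjunction gives $c^{\lambda}_{\mu,\nu}=\langle s_\nu,s_{\lambda/\mu}\rangle$; and the jeu-de-taquin endgame with the Yamanouchi target is the right way to turn this coefficient into the lattice-word count. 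You also correctly locate the hard kernel in Sch\"utzenberger's confluence theorem. One caution: confluence and Knuth-invariance alone do not deliver your intermediate claim that the number of skew tableaux of shape $\lambda/\mu$ rectifying to a fixed $U$ of shape $\nu$ is independent of $U$ and equals $c^{\lambda}_{\mu,\nu}$. That identification needs further plactic facts---each Knuth class contains the word of exactly one straight-shape tableau, and tableau products are compatible with rectification---as in Fulton's treatment; so it should be folded into the lemma you propose to prove or cite, not filed under bookkeeping. With that adjustment your outline is a complete and correct proof plan, and it gives the reader strictly more than the bare citation, which is all the paper itself offers.
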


Note also that $c^{\lambda}_{\mu,\nu}$ is not dependent on $d$ insofar as the dependence is encoded in the requirement of Equation \eqref{eq:lr-formula} that $\mathrm{depth}(\lambda)\leq d$. This is why it is unnecessary to specify the degree $d$ of the unitary group to which the representation $\Rep_\lambda$ belongs to.

Most relevant calculations can be done using Pieri's formula (see Corollary 9.2.4 in \cite{goodman2009symmetry}):

\begin{corollary}[Pieri's formula]\label{cor:pieri}
    Let $\Rep_{\mu}$ and $\Rep_{\nu}$ be tensor irreps of $\U(d)$ labelled by Young diagrams with $\mu$ of depth smaller than $d$ and $\nu$ of depth $1$. Their tensor product decomposes as
    \begin{equation}
        \Rep_{\mu}\ot\Rep_{\nu}\cong \bigoplus_{\lambda} \Rep_{\lambda}\text{,}
    \end{equation}
    where the summation goes through all Young diagrams $\lambda$ one can get from $\mu$ by adding $|\nu|$ boxes to it, no two in the same column.
\end{corollary}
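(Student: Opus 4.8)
The plan is to obtain Pieri's formula as a direct specialization of the Littlewood-Richardson rule stated immediately above. Write $k \coloneqq |\nu|$, so that the depth-$1$ diagram $\nu$ is a single row of $k$ boxes. By the Littlewood-Richardson theorem the multiplicity of $\Rep_\lambda$ in $\Rep_\mu \ot \Rep_\nu$ equals the coefficient $c^\lambda_{\mu,\nu}$, which counts the Littlewood-Richardson skew tableaux of shape $\lambda \setminus \mu$ and weight $\nu$. Hence it suffices to show that this count is $1$ when $\lambda$ is obtained from $\mu$ by adding $k$ boxes with no two in the same column, and $0$ in every other case; this simultaneously establishes the index set of the direct sum and the fact that each multiplicity is one.

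The first step is to unpack the weight condition. A skew tableau has weight $\nu$, a single row of length $k$, exactly when its filling uses the entry $1$ precisely $k$ times and no larger entry at all. Consequently there is at most one filling of the boxes of $\lambda \setminus \mu$ of this weight, namely the one placing $1$ in every box; the entire question reduces to deciding whether this all-ones filling is a legitimate Littlewood-Richardson skew tableau.

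The second step is to test the two defining conditions for that filling. The semistandard (column-strict) requirement forces entries to increase strictly down each column; with every entry equal to $1$ this can hold if and only if no column of $\lambda \setminus \mu$ contains two boxes, i.e. $\lambda \setminus \mu$ is a \emph{horizontal strip}, which is precisely the statement that $\lambda$ arises from $\mu$ by adding $k$ boxes no two in the same column. The lattice-word (Yamanouchi) condition on the reverse reading word is then satisfied automatically, since a word built from the single letter $1$ is trivially a lattice word. Therefore $c^\lambda_{\mu,\nu}$ equals $1$ in exactly the asserted situation and $0$ otherwise, which yields the decomposition $\Rep_\mu \ot \Rep_\nu \cong \bigoplus_\lambda \Rep_\lambda$ with all multiplicities one.

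The one point I would treat with care — and the closest thing to an obstacle — is the bookkeeping of which $\lambda$ are admissible. The Littlewood-Richardson formula restricts its sum to diagrams with $\mathrm{depth}(\lambda) \le d$, whereas Pieri's formula is phrased purely combinatorially in terms of box additions. Here I would invoke the interlacing characterization of a horizontal strip ($\lambda_1 \ge \mu_1 \ge \lambda_2 \ge \mu_2 \ge \cdots$), which shows that adding a horizontal strip raises the number of rows by at most one. Since $\mu$ has depth strictly smaller than $d$ by hypothesis, every resulting $\lambda$ still satisfies $\mathrm{depth}(\lambda) \le d$, so no term produced by the combinatorial rule is silently discarded by the depth constraint, and the two formulations agree exactly. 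With this checked, the remainder is the routine specialization described above.
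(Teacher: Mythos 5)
Your proposal is correct and takes essentially the same route the paper intends: the paper presents Pieri's formula as a corollary of the Littlewood--Richardson rule stated just above it (citing Corollary 9.2.4 of Goodman--Wallach rather than giving details), and your specialization argument --- the weight condition forces the all-ones filling, column-strictness then forces a horizontal strip, the lattice-word condition is vacuous, and the depth hypothesis on $\mu$ ensures no admissible $\lambda$ is discarded --- is precisely the standard derivation being invoked. Your care with the depth bookkeeping is a nice touch, as that is the one point where the combinatorial rule and the $\mathrm{depth}(\lambda)\leq d$ constraint of the Littlewood--Richardson theorem must be reconciled.
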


\ytableausetup{mathmode, boxsize=0.5em}

In the case of $\nu=\ydiagram{1}$, this means that all valid Young diagrams appear that come from $\mu$ having a single extra box in one of its columns. Note that the `empty' column(s) after the rightmost non-empty column also needs to be taken into account in the above corollary.

\subsection{Fusion rules of $\Ort(d)$ - The Newell-Littlewood formula}\label{sec:nl}

Similarly to the case of $\U(d)$ one has the following theorem for the $\Ort(d)$ irreps labelled by ordinary (`non-starred') Young diagrams (see Equation (3) in \cite{gao2021newell} Equation (4.9) in \cite{king2003modification} and page 501 in \cite{koike1987young}): 

\begin{theorem}\label{th:nl-formula}
    Let $\Repp_{\mu}$ and $\Repp_{\nu}$ be tensor irreps of $\Ort(d)$ ($d=2k$ or $d=2k+1$) labelled by Young diagrams $\mu$ and $\nu$ such that the sum of their depths is smaller or equal to $k$. Their tensor product decomposes as
    \begin{equation}\label{eq:nl-formula}
        \Repp_{\mu}\ot\Repp_{\nu}\cong \bigoplus_{\lambda} N^{\lambda}_{\mu,\nu} \Repp_{\lambda}\text{,}
    \end{equation}
    where $\lambda$ is a Young diagram and $N^{\lambda}_{\mu,\nu}$ are nonnegative integers called Newell-Littlewood numbers that give the multiplicity of the $\Repp_{\lambda}$ irrep.
\end{theorem}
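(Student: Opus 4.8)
The plan is to deduce the Newell--Littlewood rule from the corresponding identity in the ring of universal characters (in the sense of Koike--Terada \cite{koike1987young}), combined with a stability argument that confines the whole computation to the range where no modification rules are triggered. Recall that the Newell--Littlewood number is the manifestly nonnegative integer
\begin{equation}\label{eq:nl-def}
N^{\lambda}_{\mu,\nu}=\sum_{\alpha,\beta,\gamma} c^{\mu}_{\alpha,\beta}\,c^{\nu}_{\gamma,\beta}\,c^{\lambda}_{\alpha,\gamma}\text{,}
\end{equation}
where the $c$'s are the Littlewood--Richardson coefficients of $\U(d)$ from Equation \eqref{eq:lr-formula}. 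Nonnegativity and integrality of the multiplicities are therefore automatic once the decomposition is established; the content of the theorem is the precise matching of the coefficients.

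\textbf{Step 1 (universal characters).} Following Littlewood, I would work in the ring $\Lambda$ of symmetric functions and introduce the universal orthogonal characters $O_{\lambda}\in\Lambda$, defined as the explicit signed $\mathbb{Z}$-combination of Schur functions
\begin{equation}
O_{\lambda}=\sum_{\kappa}\Bigl(\sum_{\delta}(-1)^{|\delta|/2}\,c^{\lambda}_{\kappa,\delta}\Bigr) s_{\kappa}\text{,}
\end{equation}
where $\delta$ runs over the partitions all of whose parts are even. The structural fact I would cite from Koike--Terada \cite{koike1987young} (and which is itself proved by expanding each $O_{\mu}$, $O_{\nu}$ into Schur functions, multiplying them via the Littlewood--Richardson rule, and re-collecting) is the universal product identity
\begin{equation}\label{eq:univ}
O_{\mu}\cdot O_{\nu}=\sum_{\lambda}N^{\lambda}_{\mu,\nu}\,O_{\lambda}\text{,}
\end{equation}
valid for all partitions $\mu,\nu$ with no constraint on their depth.

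\textbf{Step 2 (specialisation).} For fixed $d=2n$ or $d=2n+1$ there is a specialisation ring homomorphism $\pi_{d}\colon\Lambda\to R(\Ort(d))$ onto the representation ring, which sends $O_{\lambda}\mapsto[\Repp_{\lambda}]$ whenever $\mathrm{depth}(\lambda)\leq n$, and sends the remaining universal characters either to zero or, via the modification rules, to signed combinations of genuine characters. Applying $\pi_{d}$ to \eqref{eq:univ} and using $\mathrm{depth}(\mu),\mathrm{depth}(\nu)\leq n$ to identify $\pi_{d}(O_{\mu})=[\Repp_{\mu}]$ and $\pi_{d}(O_{\nu})=[\Repp_{\nu}]$ gives
\begin{equation}
[\Repp_{\mu}]\cdot[\Repp_{\nu}]=\sum_{\lambda}N^{\lambda}_{\mu,\nu}\,\pi_{d}(O_{\lambda})\text{.}
\end{equation}

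\textbf{Step 3 (the depth bound removes the modifications --- the crux).} The remaining task, and the main obstacle, is to show that every $\lambda$ contributing on the right is already in the stable range, so that $\pi_{d}(O_{\lambda})=[\Repp_{\lambda}]$ and no modification rule, hence no signed cancellation, can intervene. Here the hypothesis $\mathrm{depth}(\mu)+\mathrm{depth}(\nu)\leq n$ is exactly what is needed: if $N^{\lambda}_{\mu,\nu}\neq0$ then some term in \eqref{eq:nl-def} is nonzero, forcing $\alpha\subseteq\mu$, $\gamma\subseteq\nu$ and $c^{\lambda}_{\alpha,\gamma}\neq0$. Using the standard fact that a nonzero Littlewood--Richardson coefficient satisfies $\mathrm{depth}(\lambda)\leq\mathrm{depth}(\alpha)+\mathrm{depth}(\gamma)$, I obtain
\begin{equation}
\mathrm{depth}(\lambda)\leq\mathrm{depth}(\alpha)+\mathrm{depth}(\gamma)\leq\mathrm{depth}(\mu)+\mathrm{depth}(\nu)\leq n\text{.}
\end{equation}
Thus $\pi_{d}(O_{\lambda})=[\Repp_{\lambda}]$ for every such $\lambda$, the displayed equation collapses to $[\Repp_{\mu}\ot\Repp_{\nu}]=\sum_{\lambda}N^{\lambda}_{\mu,\nu}[\Repp_{\lambda}]$ in $R(\Ort(d))$, and since the $\Repp_{\lambda}$ occurring are pairwise inequivalent irreducibles with nonnegative integer coefficients this is a genuine isomorphism of representations, namely \eqref{eq:nl-formula}. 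The points I would be most careful about are the precise even-part normalisation of the correction set $\delta$ and the exact form of the modification rules and associate/self-associate phenomenon for $d=2n$ versus $d=2n+1$ (cf.\ $\Repp_{\lambda}\cong\Repp_{\lambda^{*}}$); but none of these affects Step 3, because the depth bound eliminates every term for which they could possibly matter.
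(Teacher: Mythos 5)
Your proposal is correct, but a comparison with ``the paper's proof'' needs a caveat: the paper does not prove Theorem~\ref{th:nl-formula} at all --- it states it with pointers to the literature (Equation (3) of \cite{gao2021newell}, Equation (4.9) of \cite{king2003modification}, and p.~501 of \cite{koike1987young}). What you have written is essentially a reconstruction of the argument living in the cited source \cite{koike1987young}: the universal product identity $O_{\mu}\cdot O_{\nu}=\sum_{\lambda}N^{\lambda}_{\mu,\nu}O_{\lambda}$ in the universal character ring, followed by the specialisation homomorphism to $R(\Ort(d))$. The genuinely load-bearing step that you supply explicitly, and which the paper leaves implicit in its hypothesis $\mathrm{depth}(\mu)+\mathrm{depth}(\nu)\leq n$, is Step 3: from $N^{\lambda}_{\mu,\nu}\neq 0$ you extract a nonzero product $c^{\mu}_{\alpha,\beta}c^{\nu}_{\gamma,\beta}c^{\lambda}_{\alpha,\gamma}$, use $\alpha\subseteq\mu$, $\gamma\subseteq\nu$, and the standard bound $\mathrm{depth}(\lambda)\leq\mathrm{depth}(\alpha)+\mathrm{depth}(\gamma)$ for nonvanishing Littlewood--Richardson coefficients, so every contributing $\lambda$ has depth at most $n$ and the modification rules of Section~\ref{sec:mod} never fire; this is exactly why the theorem can be stated with honest nonnegative multiplicities and why the paper's subsequent ``Note'' is needed outside this range. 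That chain of inequalities is correct. One detail is off but harmless: in your Step 1 the correction set $\delta$ in the Schur expansion of $O_{\lambda}$ is not the set of partitions with all parts even (that set governs the branching direction, Equation \eqref{eq:littlewood}); the inverse expansion runs over the Frobenius-staircase partitions appearing in $\prod_{i\leq j}(1-x_ix_j)$. You flag this uncertainty yourself, and nothing in Steps 2--3 uses the explicit expansion, only the product identity \eqref{eq:univ} and the specialisation property, so the proof stands as cited.
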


Due to the relation $\Repp_{\lambda^*} = \det \ot \Repp_\lambda$, the above formula can easily be generalised to include the associate diagrams: If only one of the diagrams is starred on the left-hand side of Equation \eqref{eq:nl-formula}, then all the diagrams on the right-hand side are starred. If both are starred, then non on the right-hand side are.

The Newell-Littlewood numbers can be calculated from the Littlewood-Richardson coefficients using the following formula:
\begin{equation}
    N^{\lambda}_{\mu,\nu}=\sum_{\alpha,\beta,\gamma} c^{\mu}_{\alpha,\beta} c^{\nu}_{\alpha,\gamma} c^{\lambda}_{\beta,\gamma},
\end{equation}
where $\alpha,\beta,\gamma$ are Young diagrams.

Note again that the Newell-Littlewood numbers are not dependent on $d$ insofar as that the dependence is encoded in the requirements of Theorem \ref{th:nl-formula}. This is why it is unnecessary to specify the degree $d$ of the orthogonal group to which the representation $\Repp_\lambda$ or $\Repp_{\lambda^{*}}$ belongs to.

Some facts about Newell-Littlewood numbers from Lemma 2.2 in \cite{gao2021newell}, where $|\mu|,|\nu|$ and $|\lambda|$ denote the number of boxes in the respective Young diagram:

\begin{enumerate}
    \item $N^{\lambda}_{\mu,\nu}$ is invariant under any $\sym_3$-permutation of the indices $(\mu,\nu,\lambda)$.
    \item $N^{\lambda}_{\mu,\nu}=c^{\lambda}_{\mu,\nu}$ if $|\mu|+|\nu|=|\lambda|$.
    \item $N^{\lambda}_{\mu,\nu}=0$ unless $|\mu|,|\nu|,|\lambda|$ satisfy the triangle inequalities, i.e., $|\mu|+|\nu|\geq |\lambda|$, $|\mu|+|\lambda|\geq |\nu|$ and $|\lambda|+|\nu|\geq |\mu|$.
    \item $N^{\lambda}_{\mu,\nu}=0$ unless $|\lambda|+|\mu|+|\nu|\equiv 0 \;(\mathrm{mod}\; 2)$.
\end{enumerate}

Note that in the case of $\mathrm{depth}(\mu)+\mathrm{depth}(\nu)>k$ the formula is still valid, but some modification rules are needed to make sense of a Young diagram with depth greater than $k$ as a non-standard label for an $\Ort(d)$ representation. See Appendix \ref{sec:app_mod}.

Similarly to the case of $\U(d)$, a Pieri-type rule can be stated, which is enough for most relevant calculations (see Proposition 2.4 in \cite{gao2021newell})

\begin{corollary}[Pieri-type rule]
    Let $\Repp_{\mu}$ and $\Repp_{\nu}$ be tensor irreps of $\Ort(d)$ ($d=2k$ or $d=2k+1$) labelled by Young diagrams with $\mu$ of depth smaller than $k$ and $\nu$ of depth $1$. Their tensor product decomposes as
    \begin{equation}\label{eq:ort_pieri}
        \Repp_{\mu}\ot\Repp_{\nu}\cong \bigoplus_{\lambda} \Repp_{\lambda}\text{,}
    \end{equation}
    where the sum is over the multiset of $\lambda$ obtained from $\mu$ by removing $0\leq j\leq |\nu|$ number of boxes each from different columns (while keeping it a Young diagram) and then adding $|\nu|-j$ number of boxes each to different columns (while keeping it a Young diagram). Equation \eqref{eq:ort_pieri} extends naturally to the cases with associate irreps.
\end{corollary}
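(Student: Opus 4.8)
The plan is to obtain this Pieri-type rule as a direct specialization of the Newell--Littlewood formula of Theorem~\ref{th:nl-formula}, taking $\nu$ to be a single row. First I would check the depth hypothesis: since $\mathrm{depth}(\mu)<n$ and $\mathrm{depth}(\nu)=1$, we have $\mathrm{depth}(\mu)+\mathrm{depth}(\nu)\leq n$, so we sit inside the stable range where Theorem~\ref{th:nl-formula} applies verbatim and no modification rules are triggered. In particular every Newell--Littlewood number that occurs is a genuine nonnegative multiplicity, so no cancellations can arise and it suffices to compute $N^{\lambda}_{\mu,\nu}$ combinatorially.

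Next I would substitute $\nu=(r)$ with $r=|\nu|$ into $N^{\lambda}_{\mu,\nu}=\sum_{\alpha,\beta,\gamma}c^{\mu}_{\alpha,\beta}c^{\nu}_{\alpha,\gamma}c^{\lambda}_{\beta,\gamma}$. The crucial simplification is that $c^{(r)}_{\alpha,\gamma}$ vanishes unless both $\alpha$ and $\gamma$ are single rows: the only Littlewood--Richardson tableau on a horizontal skew shape $(r)/\alpha$ is the all-$1$'s filling, which forces $\alpha=(a)$ and $\gamma=(c)$ with $a+c=r$ and gives coefficient $1$. This collapses the triple sum to $N^{\lambda}_{\mu,(r)}=\sum_{a+c=r}\sum_{\beta}c^{\mu}_{(a),\beta}\,c^{\lambda}_{\beta,(c)}$.

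The two surviving factors are exactly the single-row Littlewood--Richardson coefficients of $\U(d)$, so I invoke Pieri's formula twice. By the symmetry $c^{\mu}_{(a),\beta}=c^{\mu}_{\beta,(a)}$, the first factor equals $1$ precisely when $\mu/\beta$ is a horizontal strip of $a$ boxes (no two in a common column), i.e.\ when $\beta$ is reached from $\mu$ by removing $a$ boxes from distinct columns while remaining a Young diagram; likewise the second factor equals $1$ precisely when $\lambda/\beta$ is a horizontal strip of $c$ boxes. Setting $j=a$ and $c=|\nu|-j$, the double sum counts exactly the ways of first removing $j$ boxes from $\mu$ and then adding $|\nu|-j$ boxes, each step in distinct columns and staying within Young-diagram shapes, which reproduces the multiset of $\lambda$ asserted in the statement together with the correct multiplicities.

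Finally, the extension to associate irreps follows formally from $\Repp_{\lambda}\cong\det\ot\Repp_{\lambda^{*}}$ together with the starred/unstarred bookkeeping noted after Theorem~\ref{th:nl-formula}: tensoring the whole decomposition by $\det$ and using $\det\ot\det\cong\Repp_0$ toggles the stars on both sides without disturbing the combinatorial rule. I expect the only genuinely delicate point to be the stable-range check at the very start, since it is what guarantees that the raw Newell--Littlewood formula (and not its modified variant from Section~\ref{sec:mod}) governs the decomposition; once that is in place, everything else is a mechanical consequence of the two classical Pieri rules.
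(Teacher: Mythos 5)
Your derivation is correct and is essentially the argument the paper relies on: the paper states this Pieri-type rule as a corollary of Theorem~\ref{th:nl-formula}, deferring the proof to Proposition~2.4 of \cite{gao2021newell}, which proceeds exactly as you do — the collapse of $c^{\nu}_{\alpha,\gamma}$ for single-row $\nu$ to single-row $\alpha$ and $\gamma$, followed by two applications of the classical $\U(d)$ Pieri rule, with the depth hypothesis guaranteeing that the stable Newell--Littlewood formula applies and no modification rules intervene. Your handling of the associate irreps by tensoring with $\det$ also matches the paper's remark following Theorem~\ref{th:nl-formula}.
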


Naturally, the number of columns $\mu$ possesses limits the number of boxes that can be removed from it, therefore, the range $0\leq j\leq |\nu|$ is not always exhausted. Note also that irreps with greater than one multiplicity might appear, leading to $\lambda$ being an element of a multiset.

In the case of $\nu=\ydiagram{1}\;$, this means that all diagrams appear that come from $\mu$ having a single box removed or added to its different columns, while keeping it a valid Young diagram.

\subsection{Branching rules from $\U(d)$ to $\Ort(d)$}\label{sec:app_branch}

The branching rules from $\U(d)$ to $\Ort(d)$ can be described in terms of Littlewood-Richardson coefficients.

According to Proposition 1.5.3 in \cite{koike1987young} we have (see also \cite{littlewood1958products} and Section 4 in \cite{king1975branching} and Theorem 1.1.~in \cite{howe2005stable}):

\begin{proposition}[Littlewood]
    Let $\lambda$ be a Young diagram whose depth is less than or equal to $k$ for either $d=2k$ or $d=2k+1$, and let $\Rep_\lambda$ denote the corresponding tensor irrep of $\U(d)$. Then, assuming the natural embedding of $\Ort(d)$ in $\U(d)$, the following is true:
    \begin{equation}\label{eq:littlewood}
        \Rep_{\lambda}\Big\rvert_{\Ort(d)} \cong \bigoplus_{\mu} \left(\sum_{\kappa} c_{2\kappa,\mu}^{\lambda}\right) \Repp_{\mu}\text{,}
    \end{equation}
    where $\mu$ and $\kappa$ are Young diagrams, $2\kappa$ denotes the Young diagram with double the boxes of $\kappa$ in each row, and $\Repp_{\mu}$ is the tensor irrep of $\Ort(d)$ associated to the $\mu$ label.
\end{proposition}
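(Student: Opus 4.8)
The plan is to prove the equivalent statement at the level of characters, since two tensor representations of $\Ort(d)$ are isomorphic precisely when their characters coincide. First I would evaluate the character of $\Rep_\lambda\big\rvert_{\Ort(d)}$ on a maximal torus of $\Ort(d)$: an orthogonal matrix has eigenvalues occurring in reciprocal pairs $x_1,x_1^{-1},\dots,x_n,x_n^{-1}$ (with an extra eigenvalue $1$ when $d=2n+1$), and on such an element the $\U(d)$-character is just the Schur polynomial $s_\lambda$ evaluated at these $d$ arguments. The irreducible $\Ort(d)$-characters of $\Repp_\mu$ are, by the Weyl character formula, the orthogonal characters $o_\mu$ in the same variables. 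Thus the proposition is equivalent to the symmetric-function identity $s_\lambda=\sum_\mu\bigl(\sum_\kappa c^\lambda_{2\kappa,\mu}\bigr)\,o_\mu$ in the stable range $\mathrm{depth}(\lambda)\le n$, which is what I would establish.

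The engine of the argument is Littlewood's Cauchy-type identity for even partitions,
\begin{equation*}
\sum_\kappa s_{2\kappa}(x)=\prod_{i\le j}\frac{1}{1-x_i x_j},
\end{equation*}
where $2\kappa$ runs over all partitions with even rows. Combined with the Littlewood-Richardson rule $s_{2\kappa}\,s_\mu=\sum_\lambda c^\lambda_{2\kappa,\mu}\,s_\lambda$ (the fusion rule already recalled above), this lets me rewrite sums of Schur functions weighted by even-partition products in terms of the coefficients $c^\lambda_{2\kappa,\mu}$. The second ingredient is the inverse Littlewood relation, which expresses each orthogonal character $o_\mu$ as an alternating sum of Schur functions restricted to the orthogonal torus; equivalently, the two families $\{s_\lambda\}$ and $\{o_\mu\}$ are related by a change of basis that is unitriangular with respect to the dominance order, with transition entries built from the even-partition Littlewood-Richardson coefficients. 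I would then combine these by multiplying the even-partition generating function against the dual Cauchy kernel for the orthogonal group and extracting the coefficient of a fixed $o_\mu$, using the orthonormality of the $o_\mu$ under the Weyl integration formula to read off the multiplicity $\sum_\kappa c^\lambda_{2\kappa,\mu}$.

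An alternative, more conceptual route would use Schur-Weyl-Brauer duality: decompose $\hil^{\ot N}$ (with $N=|\lambda|$) simultaneously as a $\U(d)\times\sym_N$ module via ordinary Schur-Weyl duality and as an $\Ort(d)\times\mathcal{B}_N$ module via Brauer's duality, so that the branching multiplicities become the multiplicities with which $\sym_N$-irreps appear upon restriction along $\sym_N\hookrightarrow\mathcal{B}_N$; these are counted by the contractions that pair up boxes, which is exactly what produces the even partitions $2\kappa$. Either way, the main obstacle is the stable-range hypothesis $\mathrm{depth}(\lambda)\le n$: outside this range the naive sum produces $\Ort(d)$-labels that are not dominant, and one must invoke the modification rules (deferred to Section \ref{sec:mod}) to reduce or cancel them. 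Ensuring that within the stated range no such labels are generated---equivalently, that the even-partition Cauchy identity and the change of basis stay honest for partitions fitting inside the fundamental alcove---is the delicate bookkeeping step, whereas the underlying symmetric-function identities are classical and I would cite Littlewood and Koike--Terada for their rigorous justification.
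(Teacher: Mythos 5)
The paper itself contains no proof of this proposition: it is quoted from the literature, with the argument deferred to Proposition 1.5.3 of Koike--Terada (and to Littlewood, King, and Howe--Tan--Willenbring). So there is no in-paper proof to compare against line by line; what you sketch is, in outline, exactly the classical route of those references --- Littlewood's even-partition Cauchy identity $\sum_\kappa s_{2\kappa}(x)=\prod_{i\le j}(1-x_ix_j)^{-1}$ combined with the Littlewood--Richardson rule and character orthogonality, with Schur--Weyl--Brauer duality as the standard alternative. You also correctly identify where the stable-range hypothesis $\mathrm{depth}(\lambda)\le n$ enters, namely in guaranteeing that no non-dominant labels requiring the modification rules of Section \ref{sec:mod} are produced.

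However, one step of your character-theoretic route has a genuine gap as written. You propose to verify the isomorphism by comparing characters on a maximal torus of $\Ort(d)$, but $\Ort(d)$ is disconnected and a maximal torus lies in the identity component, so class functions on $\Ort(d)$ are \emph{not} determined by their restriction to it. Concretely, an irrep $\Repp_{\mu}$ and its associate $\Repp_{\mu^{*}}\cong\det\ot\Repp_{\mu}$ restrict to isomorphic representations of $\SO(d)$ and hence have identical torus characters; your symmetric-function identity $s_\lambda=\sum_\mu\bigl(\sum_\kappa c^{\lambda}_{2\kappa,\mu}\bigr)o_\mu$ therefore only pins down the $\SO(d)$-content of $\Rep_{\lambda}\big\rvert_{\Ort(d)}$, whereas the proposition asserts the stronger fact that only unstarred labels $\mu$ occur. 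This is not vacuous: the single-column diagram with $d$ boxes (i.e.\ $\Lambda^{d}\hil$) restricts to $\Ort(d)$ as $\Repp_{0^{*}}=\det$, so starred summands genuinely appear once $\mathrm{depth}(\lambda)>n$. To close the gap you must either run the orthogonality/Weyl-integration argument over \emph{both} components of $\Ort(d)$, as Littlewood and Koike--Terada in effect do, or give a separate argument excluding starred constituents in the stable range. Relatedly, the ``inverse Littlewood relation'' you invoke as an ingredient is essentially equivalent to the statement being proved, so it must be established independently (via the Cauchy kernels and integration over the full group) rather than assumed, on pain of circularity. The Brauer-duality route you mention as an alternative avoids both issues and is sound as an outline.
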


Note that in the case of $\mathrm{depth}(\lambda)>k$ the formula is still valid, but modification rules are needed to make sense of the $\Ort(d)$ representations labelled by Young diagrams with depth greater than $k$. See Appendix \ref{sec:app_mod}.

\subsection{Modification rules}\label{sec:app_mod}

In Appendices \ref{sec:nl} and \ref{sec:app_branch} it has become apparent that a way to interpret $\Ort(d)$ representations labelled by Young diagrams with more than $k$ rows is needed. This is where the modification rules come in. For more information, see Equation (2.3) in \cite{king1975branching}, or \cite{king2003modification}, or Section 2.4 in \cite{koike1987young}.

\begin{theorem}
    Let $d=2k$ or $d=2k+1$ be fixed and let $\lambda$ be a Young diagram with $p>k$ rows. The $\Ort(d)$ tensor irrep $\Repp_{\lambda}$ with non-standard label $\lambda$ can be interpreted as a tensor irrep with a standard label by (repeatedly) modifying the Young diagram $\lambda$ until it either gives a valid Young diagram or turns to zero (in which case there is no such representation). The modified diagram $\lambda'$ is described by the following equation:
    \begin{equation} 
        \lambda' = (-1)^{x-1} [\lambda - h]^{*} \text{,}
    \end{equation}
    where $h\coloneqq 2p-d$ and $[\lambda-h]$ describes the Young diagram $\lambda$ from which a continuous boundary hook has been removed: Starting from the bottom-leftmost box, mark off $h$ number of boxes along the outer boundary, always going to the box on the right if possible or to the box above if not. If the diagram $[\lambda-h]$ is a valid Young diagram (taken as a diagram with the same first column), then that is the modified diagram. If not, the procedure leads to zero, as there is no such representation. The number $x$ designates the column in which the continuous boundary hook removal procedure ends.
\end{theorem}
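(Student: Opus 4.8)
\section*{Proof proposal}

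The plan is to give meaning to the symbol $\Repp_\lambda$ for a non-standard $\lambda$ (more than $n$ rows) as a \emph{formal} character built from the Weyl character formula, and then to rewrite that formal object as $\pm$ a genuine irreducible character. First I would fix the standard data: realizing $\Ort(d)$ through $\SO(d)$ (type $B_n$ for $d=2n+1$, type $D_n$ for $d=2n$), I record highest weights in the $\epsilon$-basis, so that a partition $\lambda=(\lambda_1,\dots,\lambda_p)$ corresponds to the weight $\sum_i \lambda_i \epsilon_i$, and I use the Weyl vector $\varrho_i=\tfrac{d}{2}-i$ (which is $n-i$ in type $D_n$ and $n-i+\tfrac12$ in type $B_n$). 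The decisive quantities are the shifted coordinates $\ell_i\coloneqq\lambda_i+\varrho_i$; note in particular that $\varrho_p=\tfrac{d}{2}-p=-\tfrac{h}{2}$, so $\ell_p=\lambda_p-\tfrac{h}{2}$, which is where $h=2p-d$ enters.

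The genuine irreducible characters are precisely those whose shifted sequence $(\ell_1,\dots,\ell_{\lfloor d/2\rfloor})$ can be brought, by a permutation together with (for $D_n$ an even number of) sign changes, to a strictly decreasing sequence in the dominant range. Since the Weyl numerator is alternating, acting by a Weyl element $w$ multiplies the formal character by $\mathrm{sgn}(w)=(-1)^{\ell(w)}$, and if two of the $|\ell_i|$ coincide (the shifted weight lies on a wall) the character vanishes. For $p>n$ the coordinates $\ell_i$ with $i>n$ are negative, so the weight sits outside the dominant chamber; I would identify the unique chamber-returning element $w$, extract $\mathrm{sgn}(w)$, and observe that an \emph{odd} total number of sign flips is exactly tensoring with $\det$, i.e.\ passing to the associate diagram under $\lambda\mapsto\lambda^{*}$. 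This accounts for the star in $\lambda'=(-1)^{x-1}[\lambda-h]^{*}$.

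It then remains to show that this Weyl-group reduction is realized combinatorially by the rim-hook removal of length $h$. Because $\ell_p=\lambda_p-\tfrac{h}{2}$, the reflection that repairs the out-of-range bottom coordinate transfers $h$ boxes along the outer boundary of $\lambda$; I would verify that this transfer traces exactly the ``right if possible, else up'' rim path of the statement, that the surviving shape is $[\lambda-h]$, and that the terminal column $x$ of the strip equals $\ell(w)$ up to a fixed constant, yielding the sign $(-1)^{x-1}$. The case where $[\lambda-h]$ fails to be a Young diagram is matched to the wall (coincidence of shifted coordinates), where the character is zero. A clean finish is an induction that lowers the row count by one at each modification step and invokes the stable (large-$d$) branching identity of Littlewood together with the Newell--Littlewood / Koike--Terada universal-character framework already used above, so that the finite-$d$ rule is pinned down by its stable counterpart plus the single reflection step.

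The main obstacle is the precise sign-and-star bookkeeping carried out uniformly in both root-system types: matching the parity of sign changes in the hyperoctahedral group (type $B_n$) versus its even subgroup (type $D_n$) to the determinant twist $\lambda\mapsto\lambda^{*}$, handling the type-$D_n$ degeneracy in which $\Repp_\lambda\cong\Repp_{\lambda^{*}}$ when the last row is nonempty, and confirming that the rim path's terminal column reproduces $\ell(w)$ exactly rather than up to an error term. I expect the most economical route to be the determinantal (Jacobi--Trudi-type) formula for orthogonal universal characters, in which the modification appears as a sequence of row operations cyclically shifting $h$ indices; the Weyl-character picture then serves mainly to certify the overall sign.
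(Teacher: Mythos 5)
Your proposal targets a statement that the paper itself does not prove: the modification rules are quoted from the literature (Eq.~(2.3) of King's 1975 paper, King 2003, and Section~2.4 of Koike--Terada), so your attempt must be judged as a reconstruction of such a proof, and as it stands it has a genuine gap at its very first step. You propose to give $\Repp_\lambda$, for $\lambda$ with $p>n$ rows, meaning as a formal Weyl character with shifted coordinates $\ell_i=\lambda_i+\varrho_i$, $\varrho_i=\tfrac{d}{2}-i$, for $i$ up to $p$. But the weight lattice of $\SO(d)$ has rank $n<p$: there is no $\epsilon_i$, hence no $\varrho_i$ and no $\ell_i$, for $i>n$, so the object you want to reflect back into the dominant chamber is not defined inside the rank-$n$ character theory at all. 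Making sense of non-standard labels is precisely the content of the universal-character (Koike--Terada) or determinantal (King) framework, in which $o_\lambda$ is a symmetric function defined for every partition and the modification rule follows from reordering the index sequence $\lambda_i-i$ in a Jacobi--Trudi-type determinant; you mention this route only in your final sentence as ``the most economical route'', whereas it has to be the foundation of the argument, with the Weyl-group picture at most a heuristic overlay on top of it.

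Second, your mechanism for producing the star is incorrect for odd $d$. You claim that an odd total number of sign flips ``is exactly tensoring with $\det$''. For $d=2n+1$ the Weyl group of $B_n$ already contains \emph{all} sign changes, and $\Ort(2n+1)\cong\SO(2n+1)\times\mathbb{Z}_2$, so the determinant twist acts trivially on $\SO(2n+1)$-characters and is invisible to any Weyl-group computation: no amount of reflection bookkeeping on the connected group can distinguish $\Repp_\mu$ from $\Repp_{\mu^{*}}$. The associate structure lives on the disconnected group $\Ort(d)$ and must be tracked through $\Ort(d)$-character theory (equivalently, through the parity data built into the universal-character specialization), not through $\mathrm{sgn}(w)$. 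Finally, the two combinatorial claims that actually carry the theorem --- that the repair of the out-of-range data transfers exactly $h=2p-d$ boxes along the ``right if possible, else up'' rim path, and that the terminal column $x$ reproduces the correct sign --- are exactly the steps you defer with ``I would verify''; together with the vanishing case (when $[\lambda-h]$ is not a Young diagram, matched to a coincidence of indices), these constitute the substance of King's proof, so the proposal as written is a plan whose hard parts remain open rather than a proof.
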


\begin{example}
    In the case of $\Ort(5)$ the following are true:
    \begin{equation}
        \ytableausetup{mathmode, boxsize=0.75em}
        \ydiagram{2,2,1} = \ydiagram{2,2}^{*}\text{,} \qquad \qquad \ydiagram{2,2,1,1} = \emptyset\text{.}
        \ytableausetup{mathmode, boxsize=0.5em}
    \end{equation}
\end{example}
\noindent Further examples can be inferred from \cite{king2003modification}.

Note that the second example above does not lead to the trivial ($0$) or the determinant ($0^{*}$) representation, it leads to no representation. Note also that representations having a negative coefficient only appear naturally alongside representations containing the same representation with a positive coefficient, such that these two cancel each other out.

In practice, after each step of a calculation, it is wisest to modify the appearing Young diagrams into their standard label form and continue with the calculation using those.

\setcounter{equation}{0}

\section{Using Schur-Weyl duality for the extendibility problem of Brauer states} \label{sec:app_sw}

This appendix deals with how to apply Schur-Weyl duality to the problem of extendibility of Brauer states. First, a recapitulation of Schur-Weyl duality is presented, but the interested reader is directed to Section 4.2.4 and Chapter 9 in \cite{goodman2009symmetry} and Chapter 9 in \cite{goodman1998representation} for a more in-depth analysis. Then Schur-Weyl duality is applied to the problem of extendibility of Brauer states with specific examples for the low extendibility number cases investigated in this article.

Note that there exists a type of Schur-Weyl duality for orthogonal groups discovered by Brauer \cite{brauer1937algebras}. However, it is not needed in this appendix.

\subsection{Recap: Schur-Weyl duality}

Central to this article is a theorem called Schur-Weyl duality, which (in its most general form) connects the irreducible finite-dimensional $N$-degree tensor irreps of the general linear group of degree $d$ and the irreducible representations of the symmetric group of degree $N$\footnote{For more information on the irreps of the symmetric group, the interested reader is directed to \cite{kirillov1976elements,fulton2004representation}.}, see Theorem 9.1.2 in \cite{goodman2009symmetry}. However, the theorem can also be stated for the unitary group, see Section 3.2.~in \cite{jakab2022interplay}, and this is the preferred form for this article:

\begin{theorem}[Schur-Weyl 1]
    Let $\hil$ be a complex $d$-dimensional Hilbert space, and $N\in\nat$. Let $\hat{\Rep}_N:\U(d) \to \GL(\hil^{\ot N})$ be the so called diagonal action of the defining representation of the unitary group on $\hil^{\ot N}$:
    \begin{equation}\ytableausetup{mathmode, boxsize=\Yt}
        \hat{\Rep}_N(g)\coloneqq \Rep_{\ydiagram{1}}(g)^{\ot N}\text{,}\vspace{0.1cm}
    \end{equation}where $g\in\U(d)$. Let $\hat{\Repppp}_N:\sym_N\to \GL(\hil^{\ot N})$ be the natural representation of the symmetric group on $\hil^{\ot N}$ that permutes the Hilbert spaces:
    \begin{equation}
        \hat{\Repppp}_N(h) (v_1 \ot v_2 \ot \cdots \ot v_N )\coloneqq v_{h^{-1}(1)} \ot v_{h^{-1}(2)} \ot \cdots \ot v_{h^{-1}(N)}\text{,}
    \end{equation}
    where $h\in\sym_n$ and $v_i\in\hil$.
    These two representations commute with each other and the joint action of $\U(d)$ and $\sym_N$ decomposes into irreducible components as:
    \begin{equation}
        \hat{\Repppp}_N(h)\hat{\Rep}_N(g) \cong \bigoplus_{\lambda \in \Yng(N,d)} \Repppp_{\lambda}(h) \ot \Rep_{\lambda}(g)\text{,}
    \end{equation}
    where $\Yng(N,d)$ denotes the Young diagrams with $N$ boxes and at most $d$ rows.
\end{theorem}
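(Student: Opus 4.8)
\section*{Proof proposal}

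The plan is to establish the statement through the double commutant theorem, following the polarization argument of \cite{goodman2009symmetry}. Since everything happens on the finite-dimensional space $\hil^{\ot N}$, I work inside the full matrix algebra $\bound{\hil^{\ot N}}\cong \mathrm{End}(\hil)^{\ot N}$. First I would record the elementary facts. Commutativity is a direct computation on product vectors: for $\pi\in\sym_N$ and $g\in\U(d)$,
\[
\hat{\Repppp}_N(\pi)\hat{\Rep}_N(g)(v_1\ot\cdots\ot v_N)=g v_{\pi^{-1}(1)}\ot\cdots\ot g v_{\pi^{-1}(N)}=\hat{\Rep}_N(g)\hat{\Repppp}_N(\pi)(v_1\ot\cdots\ot v_N),
\]
so $\comm{\hat{\Repppp}_N(\pi)}{\hat{\Rep}_N(g)}=0$. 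Let $\mathcal{A}\coloneqq \mathrm{span}\{\hat{\Rep}_N(g):g\in\U(d)\}$ and $\mathcal{B}\coloneqq \mathrm{span}\{\hat{\Repppp}_N(\pi):\pi\in\sym_N\}$ be the associated subalgebras of $\bound{\hil^{\ot N}}$. Both are unital and closed under adjoints (the adjoint of $\hat{\Rep}_N(g)$ is $\hat{\Rep}_N(g^{-1})$, and likewise for permutations), hence both are semisimple.

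The heart of the argument is to identify $\mathcal{A}$ and $\mathcal{B}$ as mutual commutants. For this I would invoke the polarization identity: the linear span of $\{A^{\ot N}:A\in\mathrm{End}(\hil)\}$ equals the symmetric subspace $\mathrm{Sym}^N(\mathrm{End}(\hil))\subset \mathrm{End}(\hil)^{\ot N}$, obtained by expanding $(t_1A_1+\cdots+t_kA_k)^{\ot N}$ and extracting the multilinear coefficients. Since the unitaries linearly span all of $\mathrm{End}(\hil)$, this yields $\mathcal{A}=\mathrm{Sym}^N(\mathrm{End}(\hil))$. Under the identification $\mathrm{End}(\hil)^{\ot N}\cong \bound{\hil^{\ot N}}$, the symmetric subspace consists exactly of the operators invariant under the conjugation action of $\sym_N$, that is, the commutant of $\mathcal{B}$. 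Hence $\mathcal{A}=\mathcal{B}'$, and taking commutants together with the double commutant theorem for the semisimple algebra $\mathcal{B}$ gives $\mathcal{A}'=\mathcal{B}''=\mathcal{B}$, so $\mathcal{A}$ and $\mathcal{B}$ form a dual pair.

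With the dual pair in place, the decomposition is the content of the double centralizer theorem for semisimple algebras: $\hil^{\ot N}$ splits as $\bigoplus_\lambda \Repppp_\lambda\ot\Rep_\lambda$, where $\lambda$ ranges over the simple $\mathcal{A}$-modules occurring, $\Rep_\lambda$ is the corresponding simple $\mathcal{A}$-module (a $\U(d)$-irrep), and the multiplicity space $\Repppp_\lambda=\mathrm{Hom}_{\mathcal{A}}(\Rep_\lambda,\hil^{\ot N})$ is the paired simple $\mathcal{B}$-module (a $\sym_N$-irrep), with distinct $\lambda$ giving inequivalent pairs. Finally I would pin down the index set by constructing the isotypic components with the Young symmetrizers $c_\lambda\in\comp[\sym_N]$: this labels $\lambda$ by a Young diagram with $N$ boxes, and $c_\lambda\hil^{\ot N}\neq 0$ precisely when $\lambda$ has at most $d=\dim\hil$ rows, since a column of length exceeding $d$ forces the corresponding antisymmetrizer to annihilate the space. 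This identifies the label set with $\mathrm{Yng}(N,d)$.

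I expect the main obstacle to be the surjectivity step $\mathcal{A}=\mathcal{B}'$: the easy direction shows permutations commute with the diagonal action, but one must prove the converse, that \emph{every} operator commuting with all of $\{U^{\ot N}\}$ is a linear combination of permutation operators. Once this equality of algebras is secured, the remainder of the proof is formal representation theory of semisimple algebras.
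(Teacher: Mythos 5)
The paper itself does not prove this statement --- it is quoted as a known theorem with a citation to Goodman--Wallach (Theorem 9.1.2) and to the authors' earlier work --- so your proposal has to be measured against the standard proof, which is exactly the route you take: commutativity, identification of mutual commutants, the double centralizer theorem for semisimple algebras, and Young symmetrizers to pin down the label set. The overall architecture is correct, and you rightly single out the equality $\mathcal{A}=\mathcal{B}'$ as the crux.

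However, your justification of that crux contains a genuine gap. Polarization does prove $\mathrm{span}\{A^{\ot N}:A\in\mathrm{End}(\hil)\}=\mathrm{Sym}^N(\mathrm{End}(\hil))$; but the inference ``the unitaries linearly span $\mathrm{End}(\hil)$, hence $\mathcal{A}=\mathrm{Sym}^N(\mathrm{End}(\hil))$'' does not follow, because $A\mapsto A^{\ot N}$ is not linear. If you write $A=\sum_i c_i U_i$ with $U_i$ unitary, then $A^{\ot N}=\sum_{i_1,\dots,i_N}c_{i_1}\cdots c_{i_N}\,U_{i_1}\ot\cdots\ot U_{i_N}$, and the cross terms with distinct unitary factors are not a priori elements of $\mathcal{A}$; symmetrically, the polarization identity expresses symmetrized products of unitaries through powers $\bigl(\sum_{i\in S}U_i\bigr)^{\ot N}$, and a sum of unitaries is not unitary. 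So as written you have only re-derived the statement about $\mathrm{span}\{A^{\ot N}\}$ and never shown $A^{\ot N}\in\mathcal{A}$ for non-unitary $A$ --- which is precisely the ``main obstacle'' you flag at the end, left unresolved. The missing ingredient is the passage from $\U(d)$ to $\mathrm{End}(\hil)$: one needs that $\U(d)$ is Zariski dense in $\GL(d,\comp)$, or equivalently an analytic-continuation argument --- for fixed $X$ commuting with all $U^{\ot N}$, the map $A\mapsto\comm{X}{A^{\ot N}}$ is a holomorphic polynomial map vanishing on $\U(d)$; writing $A=Ue^{zH}$ (polar decomposition) and continuing analytically in $z$ from the imaginary axis shows it vanishes on all of $\GL(d,\comp)$, hence on $\mathrm{End}(\hil)$ by continuity. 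With this lemma, $\mathcal{A}'=\{A^{\ot N}:A\in\mathrm{End}(\hil)\}'=\bigl(\mathrm{Sym}^N(\mathrm{End}(\hil))\bigr)'=\mathcal{B}''=\mathcal{B}$, and then $\mathcal{A}=\mathcal{A}''=\mathcal{B}'$; the rest of your argument (double centralizer, Young symmetrizers, the bound on the number of rows) goes through unchanged.
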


This also means that for the $S_N$ irrep labelled by $\lambda$, the dimension of $\Rep_\lambda$ gives the multiplicity of that given irrep in the decomposition, and similarly, for the $\U(d)$ irrep labelled by $\lambda$, the dimension of $\Repppp_\lambda$ gives the multiplicity of that given irrep in the decomposition.

The following is an equivalent formulation of the above theorem:

\begin{theorem}[Schur-Weyl 2]\label{th:sw2}
    The algebra $\Span_\comp\{\hat{\Rep}_N(g)\in\GL(\hil^{\ot n}) \mid g\in \U(d)\}$ and the algebra $\Span_\comp\{\hat{\Repppp}_N(h)\in\GL(\hil^{\ot n}) \mid h\in \sym_N\}$ are the mutual commutants of each other.
\end{theorem}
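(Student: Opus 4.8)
Write $\mathcal{A} \coloneqq \Span_\comp\{\hat{\Rep}_N(g) \mid g\in\U(d)\}$ and $\mathcal{B} \coloneqq \Span_\comp\{\hat{\Repppp}_N(\pi) \mid \pi\in\sym_N\}$, and let a prime denote the commutant taken inside $\mathrm{End}(\hil^{\ot N})$. The plan is to derive this double-commutant statement from the preceding decomposition (Schur--Weyl~1). First I would record two structural facts. Both $\mathcal{A}$ and $\mathcal{B}$ are unital $*$-closed subalgebras of $\mathrm{End}(\hil^{\ot N})$: they are closed under products because $\hat{\Rep}_N$ and $\hat{\Repppp}_N$ are representations, they contain the identity (take $g=e$ and $\pi=e$), and they are $*$-closed because $\hat{\Rep}_N(g)^\dagger = \hat{\Rep}_N(g^{-1})$ and $\hat{\Repppp}_N(\pi)^\dagger = \hat{\Repppp}_N(\pi^{-1})$, the operators being unitary. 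Moreover Schur--Weyl~1 already tells us the two algebras commute, so $\mathcal{A}\subseteq\mathcal{B}'$ and $\mathcal{B}\subseteq\mathcal{A}'$; only the reverse inclusions require work.

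Next I would compute $\mathcal{B}'$ explicitly from the multiplicity-free bimodule decomposition $\hil^{\ot N}\cong\bigoplus_{\lambda}\Repppp_\lambda\ot\Rep_\lambda$ supplied by Schur--Weyl~1, in which $\sym_N$ acts only on the first tensor factor of each summand. Let $T\in\mathcal{B}'$. Viewing $T$ blockwise, its component mapping the $\lambda$-summand into the $\lambda'$-summand is an $\sym_N$-intertwiner from $\Repppp_\lambda\ot\Rep_\lambda$ to $\Repppp_{\lambda'}\ot\Rep_{\lambda'}$; since the $\Repppp_\lambda$ are pairwise non-isomorphic irreducible $\sym_N$-modules, Schur's lemma forces this component to vanish for $\lambda\neq\lambda'$ and to take the form $\id_{\Repppp_\lambda}\ot X_\lambda$ with $X_\lambda\in\mathrm{End}(\Rep_\lambda)$ when $\lambda=\lambda'$. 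Hence $\mathcal{B}'=\bigoplus_\lambda \id_{\Repppp_\lambda}\ot\mathrm{End}(\Rep_\lambda)$, a subalgebra of dimension $\sum_\lambda(\dim\Rep_\lambda)^2$.

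It then remains to show that $\mathcal{A}$ exhausts this algebra. Here I would invoke Burnside's theorem: for each fixed $\lambda$ the irreducibility of $\Rep_\lambda$ gives $\Span_\comp\{\Rep_\lambda(g)\mid g\in\U(d)\}=\mathrm{End}(\Rep_\lambda)$. The step that genuinely needs the non-isomorphism of the blocks is the upgrade from this per-block surjectivity to surjectivity of $\mathcal{A}$ onto the whole direct sum $\bigoplus_\lambda\mathrm{End}(\Rep_\lambda)$ — that is, the ability of $\mathcal{A}$ to realise endomorphisms on the different blocks independently. This follows from the Jacobson density theorem (equivalently, from the Artin--Wedderburn structure of the semisimple algebra $\mathcal{A}$, whose simple blocks are indexed precisely by the distinct $\U(d)$-irreps $\Rep_\lambda$ occurring). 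With this, $\mathcal{A}=\bigoplus_\lambda\id_{\Repppp_\lambda}\ot\mathrm{End}(\Rep_\lambda)=\mathcal{B}'$, and a dimension count confirms the equality.

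The companion identity $\mathcal{B}=\mathcal{A}'$ follows by the same argument with the roles of the two groups exchanged (now applying Schur's lemma to the pairwise non-isomorphic $\U(d)$-irreps $\Rep_\lambda$ and Burnside's theorem to the irreducible $\sym_N$-representations $\Repppp_\lambda$); alternatively, once $\mathcal{A}=\mathcal{B}'$ is established, the finite-dimensional double commutant theorem applied to the $*$-closed algebra $\mathcal{B}$ yields $\mathcal{B}=\mathcal{B}''=\mathcal{A}'$ at once. I expect the main obstacle to be exactly the independence-across-blocks step: the inclusions coming from Schur--Weyl~1 are immediate, but pinning down that the unitary action realises all block-diagonal endomorphisms \emph{simultaneously}, rather than merely each block separately, is where the distinctness of the irreducible labels $\lambda$ must be used.
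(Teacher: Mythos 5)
Your proof is correct, but note that the paper itself offers no proof of this statement: Theorem~[Schur--Weyl~2] is introduced only as ``an equivalent formulation'' of Schur--Weyl~1, with the substance deferred to the cited literature (e.g.\ Theorem~9.1.2 in Goodman--Wallach). What you have written is the standard textbook derivation of the commutant statement from the bimodule decomposition, and it is complete: the computation $\mathcal{B}'=\bigoplus_\lambda \id_{\Repppp_\lambda}\ot\mathrm{End}(\Rep_\lambda)$ via Schur's lemma (using that the $\sym_N$-irreps $\Repppp_\lambda$ are pairwise inequivalent), the identification of $\mathcal{A}$ with that algebra via Burnside on each block together with Jacobson density / Artin--Wedderburn across blocks (using that the $\U(d)$-irreps $\Rep_\lambda$ are pairwise inequivalent), and finally $\mathcal{B}=\mathcal{B}''=\mathcal{A}'$ from the finite-dimensional double-commutant theorem, which applies because $\mathcal{B}$ is unital and $*$-closed (it is spanned by the unitaries $\hat{\Repppp}_N(\pi)$, whose adjoints $\hat{\Repppp}_N(\pi^{-1})$ lie in the spanning set). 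You also correctly isolate the one step where distinctness of the labels is indispensable: realising endomorphisms on different blocks \emph{independently}, which a naive ``Burnside block by block'' argument would not give. One remark on logical bookkeeping: your argument establishes the implication Schur--Weyl~1 $\Rightarrow$ Schur--Weyl~2, which is exactly what is needed here since the paper asserts Schur--Weyl~1 beforehand; the paper's claim of ``equivalence'' also encompasses the converse direction, which you are not required to prove for the stated theorem. Compared with the paper's citation-only treatment, your version makes the result self-contained at the cost of invoking two classical algebraic tools (Jacobson density and the double-commutant theorem), both entirely appropriate in this finite-dimensional setting.
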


In the following sections, we present the application of Schur-Weyl duality to our extendibility problem. For ease of notation, the variables of the representations (such as $g$ or $h$) will often be suppressed. Beforehand, we illustrate Schur-Weyl duality with the following example:

\begin{example}
    For the case of $N=4$, Schur-Weyl duality gives the following for $d>3$:
    \begin{equation}
        \hat{\Repppp}_4 \hat{\Rep}_4 \cong \left[\Repppp_{\ydiagram{1,1,1,1}}\ot\Rep_{\ydiagram{1,1,1,1}}\right] \OP \left[\Repppp_{\ydiagram{2,1,1}}\ot\Rep_{\ydiagram{2,1,1}} \right] \OP \left[\Repppp_{\ydiagram{2,2}}\ot\Rep_{\ydiagram{2,2}}\right] \OP \left[\Repppp_{\ydiagram{3,1}}\ot\Rep_{\ydiagram{3,1}} \right]\OP \left[\Repppp_{\ydiagram{4}}\ot\Rep_{\ydiagram{4}} \right]\text{,}
    \end{equation}
    which is in agreement with the fusion rules of $\U(d)$. Here, one has to point out that $\Repppp_{\ydiagram{2,1,1}}$ and $\Repppp_{\ydiagram{3,1}}$ are 3-dimensional representations and carry the multiplicity information of $\Rep_{\ydiagram{2,1,1}}$ and $\Rep_{\ydiagram{3,1}}$, respectively. Similarly, $\Repppp_{\ydiagram{2,2}}$ is a 2-dimensional representation and carries the multiplicity information of $\Rep_{\ydiagram{2,2}}$. Because of this, it is unwise to suppress the $\sym_3$ irreps in the notation.
\end{example}

Note that for the lower-dimensional cases, one ought to consult Appendix \ref{sec:app_rep}.

\subsection{Application to the extendibility problem of Brauer states}

As detailed in Section \ref{sec:recipe}, to find the set of $(n,m)$-extendible Brauer states, the following requirements are prescribed for the set of extending states $\{\hat{\rho}\}$ acting on $\hil^{\ot n}\ot\hil^{\ot m}$:
\begin{enumerate}
    \item $\comm{\hat{\rho}}{\Repp_{\ydiagram{1}}^{\ot (n+m)}(g)}\overset{!}{=}0$ for all $g\in\Ort(d)$,
    \item $\comm{\hat{\rho}}{\hat{\Repppp}^{\mathrm{L}}_{n}(h)\ot\hat{\Repppp}^{\mathrm{R}}_{m}(k)}\overset{!}{=}0$ for all $h\in \sym_n$ and $k\in \sym_m$,
\end{enumerate}
where the upper index L or R is there to distinguish the representation of $\sym_n$ and $\sym_m$ even for the case of $n=m$: These representations cannot be fused as they are representations of $\sym_n \times \sym_n$ and not simply $\sym_n$. This is reflected in the fact that even for the $n=m$ case, the arguments are independent ($h$ and $k$). Contrast this with the representation of the orthogonal group that is supposed to be fused together, as attested by the fact that only one argument $g$ is considered. 

These two requirements can be thought of as one in the following way:
\begin{equation}
    \comm{\hat{\rho}}{\left(\hat{\Repppp}^{\mathrm{L}}_{n}(h)\hat{\Repp}_{n}(g)\right)\ot \left(\hat{\Repppp}^{\mathrm{R}}_{m}(k)\hat{\Repp}_{m}(g)\right)}\overset{!}{=}0\text{,} \qquad \forall g\in\Ort(d), \; \forall h\in\sym_n, \; \forall k\in\sym_m,
\end{equation}
where $\hat{\Repp}_n(g)\coloneqq \Repp_{\ydiagram{1}}(g)^{\ot n}$.
The irrep decomposition of the representations in parentheses can be calculated using Schur-Weyl duality. For general $N$, with the variables suppressed, we have:
\begin{equation}
    \hat{\Repppp}_{N}\hat{\Repp}_{N}=\hat{\Repppp}_N \hat{\Rep}_{N}\Big\rvert_{\Ort(d)} \cong \bigoplus_{\lambda \in \Yng(N,d)} \Repppp_{\lambda} \ot \Rep_{\lambda}\Big\rvert_{\Ort(d)}\text{.}
\end{equation}
Of course, one has to handle the restrictions carefully for small $d$ cases, see Appendix \ref{sec:app_rep}.

Note that the decomposition can also be done using Theorems 10.2.9 and 10.2.12 in \cite{goodman2009symmetry}. Nevertheless, using the above equation:
\begin{align}
    \left(\hat{\Repppp}^{\mathrm{L}}_{n}\hat{\Repp}_{n}\right)\ot \left(\hat{\Repppp}^{\mathrm{R}}_{m}\hat{\Repp}_{m}\right)&\cong \left(\bigoplus_{\mu \in \Yng(n,d)} \Repppp^{\mathrm{L}}_{\mu} \ot \Rep_{\mu}\Big\rvert_{\Ort(d)}\right) \ot \left(\bigoplus_{\nu \in \Yng(m,d)} \Repppp^{\mathrm{R}}_{\nu} \ot \Rep_{\nu}\Big\rvert_{\Ort(d)}\right)\\ 
    &\cong \bigoplus_{\substack{\mu \in \Yng(n,d) \\ \nu \in \Yng(m,d)}} \Repppp^{\mathrm{L}}_{\mu} \ot \Repppp^{\mathrm{R}}_{\nu} \ot \left(\Rep_{\mu}\Big\rvert_{\Ort(d)} \ot \Rep_{\nu}\Big\rvert_{\Ort(d)}\right)\text{,}
\end{align}
where one has to calculate the irrep decomposition of the fusion of the two orthogonal irreps in the last parentheses using the fusion rules. This is hard to do sensibly in the general case, but can be done on a case-by-case basis. If one knows the proper irrep decomposition on the right-hand side of the above equation, one can use Schur's lemma to calculate the general commutant element $\hat{\rho}$ and proceed with the recipe.

In the following, the cases for small $n$ and $m$ are discussed. Note that because of the flip invariance of Brauer states, it is enough to consider the $n\leq m$ cases.

\subsubsection{The case of (1,2)-extendibility}

Given the requirements imposed on the extending state $\hat{\rho}_{(1,2)}\in\states{\hil\ot\hil^{\ot 2}}$, the irrep decomposition of the representation it has to be invariant to is the following for $d>5$:
\begin{align}
    \left(\hat{\Repppp}^{\mathrm{L}}_1\hat{\Repp}_1 \right)\ot \left( \hat{\Repppp}^{\mathrm{R}}_2 \hat{\Repp}_2\right) \cong& 
    \left[\Repppp^{\mathrm{L}}_{\ydiagram{1}} \ot \Repp_{\ydiagram{1}} \right] \ot \left[\Repppp^{\mathrm{R}}_{\ydiagram{1,1}}\ot\Repp_{\ydiagram{1,1}} \OP \Repppp^{\mathrm{R}}_{\ydiagram{2}}\ot\left(\Repp_{0} \OP \Repp_{\ydiagram{2}}\right) \right]\\
    \begin{split}
        \cong & \; \Repppp^{\mathrm{L}}_{\ydiagram{1}} \ot \Repppp^{\mathrm{R}}_{\ydiagram{1,1}} \ot \left(\Repp_{\ydiagram{1}} \OP \Repp_{\ydiagram{1,1,1}} \OP \Repp_{\ydiagram{2,1}} \right) \OP \\
        & \; \Repppp^{\mathrm{L}}_{\ydiagram{1}} \ot \Repppp^{\mathrm{R}}_{\ydiagram{2}}\ot\left({\color{red}\Repp_{\ydiagram{1}}} \OP {\color{red}\Repp_{\ydiagram{1}}} \OP \Repp_{\ydiagram{2,1}}\OP \Repp_{\ydiagram{3}}\right) \text{,}
    \end{split}
\end{align}
where the only irrep with multiplicity greater than one is highlighted in red. Note that for smaller dimensions one has to consult Appendix \ref{sec:app_rep}.

\subsubsection{The case of (1,3)-extendibility}

Given the requirements imposed on the extending state $\hat{\rho}_{(1,3)}\in\states{\hil\ot\hil^{\ot 3}}$ the irrep decomposition of the representation it has to be invariant to is the following for $d>7$:
\begin{align}
    \left(\hat{\Repppp}^{\mathrm{L}}_1\hat{\Repp}_1 \right)\ot \left( \hat{\Repppp}^{\mathrm{R}}_3 \hat{\Repp}_3\right) \cong& \left[\Repppp^{\mathrm{L}}_{\ydiagram{1}} \ot \Repp_{\ydiagram{1}} \right] \ot \left[\Repppp^{\mathrm{R}}_{\ydiagram{1,1,1}}\ot\Repp_{\ydiagram{1,1,1}} \OP \Repppp^{\mathrm{R}}_{\ydiagram{2,1}}\ot\left(\Repp_{\ydiagram{1}}\OP\Repp_{\ydiagram{2,1}}\right) \OP \Repppp^{\mathrm{R}}_{\ydiagram{3}}\ot\left(\Repp_{\ydiagram{1}}\OP\Repp_{\ydiagram{3}}\right)\right] \\
    \begin{split}
        \cong & \; \Repppp^{\mathrm{L}}_{\ydiagram{1}} \ot \Repppp^{\mathrm{R}}_{\ydiagram{1,1,1}}\ot \left(\Repp_{\ydiagram{1,1}} \OP \Repp_{\ydiagram{1,1,1,1}} \OP \Repp_{\ydiagram{2,1,1}} \right) \OP \\
        & \; \Repppp^{\mathrm{L}}_{\ydiagram{1}} \ot \Repppp^{\mathrm{R}}_{\ydiagram{2,1}}\ot \left(\Repp_{0} \OP {\color{orange}\Repp_{\ydiagram{1,1}}}\OP {\color{orange}\Repp_{\ydiagram{1,1}}}\OP {\color{brass}\Repp_{\ydiagram{2}}} \OP {\color{brass}\Repp_{\ydiagram{2}}} \OP \Repp_{\ydiagram{2,1,1}} \OP \Repp_{\ydiagram{2,2}} \OP \Repp_{\ydiagram{3,1}} \right) \OP \\
        & \; \Repppp^{\mathrm{L}}_{\ydiagram{1}} \ot \Repppp^{\mathrm{R}}_{\ydiagram{3}}\ot \left(\Repp_{0} \OP \Repp_{\ydiagram{1,1}} \OP {\color{brown}\Repp_{\ydiagram{2}}}\OP {\color{brown}\Repp_{\ydiagram{2}}} \OP \Repp_{\ydiagram{3,1}} \OP \Repp_{\ydiagram{4}} \right)\text{.}
    \end{split}
\end{align}
Note that the coloured representations have multiplicity greater than one and that, again, for smaller dimensions one has to consult Appendix \ref{sec:app_rep}.

\subsubsection{The case of (2,2)-extendibility}

Given the basic requirements imposed on the extending state $\hat{\rho}_{(2,2)}\in\states{\hil^{\ot 2}\ot\hil^{\ot 2}}$ the irrep decomposition of the representation it has to be invariant to is the following for $d>7$:
\begin{align}
    \left(\hat{\Repppp}^{\mathrm{L}}_2\hat{\Repp}_2 \right)\ot \left( \hat{\Repppp}^{\mathrm{R}}_2 \hat{\Repp}_2\right) \cong& 
    \left[\Repppp^{\mathrm{L}}_{\ydiagram{1,1}}\ot\Repp_{\ydiagram{1,1}} \OP \Repppp^{\mathrm{L}}_{\ydiagram{2}}\ot\left(\Repp_{0} \OP \Repp_{\ydiagram{2}}\right) \right] \ot \left[\Repppp^{\mathrm{R}}_{\ydiagram{1,1}}\ot\Repp_{\ydiagram{1,1}} \OP \Repppp^{\mathrm{R}}_{\ydiagram{2}}\ot\left(\Repp_{0} \OP \Repp_{\ydiagram{2}}\right) \right] \\
    \begin{split}
        \cong & \; \Repppp^{\mathrm{L}}_{\ydiagram{1,1}}\ot \Repppp^{\mathrm{R}}_{\ydiagram{1,1}} \ot \left(\Repp_{0} \OP \Repp_{\ydiagram{1,1}} \OP \Repp_{\ydiagram{2}} \OP \Repp_{\ydiagram{1,1,1,1}} \OP \Repp_{\ydiagram{2,1,1}} \OP \Repp_{\ydiagram{2,2}} \right) \OP\\
        & \; \Repppp^{\mathrm{L}}_{\ydiagram{1,1}}\ot \Repppp^{\mathrm{R}}_{\ydiagram{2}} \ot \left({\color{blue-green-v1}\Repp_{\ydiagram{1,1}}} \OP {\color{blue-green-v1}\Repp_{\ydiagram{1,1}}} \OP \Repp_{\ydiagram{2}} \OP \Repp_{\ydiagram{2,1,1}} \OP \Repp_{\ydiagram{3,1}}\right) \OP \\
        & \; \Repppp^{\mathrm{L}}_{\ydiagram{2}}\ot \Repppp^{\mathrm{R}}_{\ydiagram{1,1}} \ot \left({\color{blue-green-v2}\Repp_{\ydiagram{1,1}}} \OP {\color{blue-green-v2}\Repp_{\ydiagram{1,1}}} \OP \Repp_{\ydiagram{2}} \OP \Repp_{\ydiagram{2,1,1}} \OP \Repp_{\ydiagram{3,1}} \right) \OP\\
        & \; \Repppp^{\mathrm{L}}_{\ydiagram{2}}\ot \Repppp^{\mathrm{R}}_{\ydiagram{2}}\ot \left({\color{blue}\Repp_{0}} \OP {\color{blue}\Repp_{0}} \OP \Repp_{\ydiagram{1,1}} \OP {\color{green} \Repp_{\ydiagram{2}}} \OP {\color{green} \Repp_{\ydiagram{2}}} \OP {\color{green} \Repp_{\ydiagram{2}}} \OP \Repp_{\ydiagram{2,2}} \OP \Repp_{\ydiagram{3,1}} \OP \Repp_{\ydiagram{4}} \right)\text{.}
    \end{split}
\end{align}
Note that the coloured representations have multiplicity greater than one, the green representation actually has multiplicity three! However, one should not forget that in the $(2,2)$-extendibility case an extra symmetry is present described in Section \ref{sec:nn} and explored in Appendix \ref{sec:app_res3}. This reduces the multiplicity of the green irreps to two and merges the $\Repp_{\ydiagram{1,1}}$ multiplicities together. Note again that for smaller dimensions one has to consult Appendix \ref{sec:app_rep}.

\setcounter{equation}{0}

\section{The quadratic Casimir operator of tensor irreps of $\U(d)$ and $\Ort(d)$}\label{sec:app_cas}

This appendix introduces the quadratic Casimir operator and the relevant $\ualg(d)$ and $\So(d)$ Casimir eigenvalues used in this work. For more information on Lie algebras and the Casimir operator, see References \cite{fulton2004representation,goodman2009symmetry, hall2015lie}.

\subsection{Recap: Lie algebra representations}

Every Lie group representation gives rise to a Lie algebra representation over the same vector space called the differential of the representation (see, e.g., Proposition 1.3.14.~in \cite{goodman2009symmetry}). So far, the relevant Lie group representations have been denoted by capital Greek letters ($\Rep$ for the $\U(d)$ representations and $\Repp$ for the $\Ort(d)$ ones). The corresponding Lie algebra representations will be denoted by lowercase Greek letters: $\rep$ for $\ualg(d)$ representations and $\repp$ for $\So(d)$ representations.

The following proposition describes when an irreducible Lie group representation induces an irreducible Lie algebra representation (see Proposition 1.7.7.~in \cite{goodman2009symmetry}):
\begin{proposition}
    Let $\Rep$ be a polynomial representation of a Lie group $G$ and let $\rep$ be its differential on the Lie algebra $\mathfrak{g}$. The differential representation is irreducible if and only if $\Rep$ is irreducible on the connected subgroup of $G$.
\end{proposition}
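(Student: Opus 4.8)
The plan is to reduce the biconditional to a correspondence of invariant subspaces: I will show that a subspace $W$ of the representation space $V$ is invariant under the differential $\rep$ if and only if it is invariant under the restriction $\Rep\big\rvert_{G^{0}}$ to the identity component $G^{0}$ of $G$. Once this is in hand the proposition is immediate, since irreducibility is precisely the absence of a proper nonzero invariant subspace; thus $\rep$ and $\Rep\big\rvert_{G^{0}}$ possess identical lattices of invariant subspaces and are simultaneously irreducible or reducible.

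The single tool driving both implications is the compatibility of exponential maps, $\Rep(\exp X)=\exp\!\big(\rep(X)\big)$ for every $X\in\mathfrak{g}$, which holds because $\rep$ is by definition the differential $\rep(X)=\tfrac{d}{dt}\big\rvert_{t=0}\Rep(\exp tX)$ and $\Rep$ is smooth (the polynomial hypothesis guarantees this, and in fact renders $\Rep$ real-analytic). For the forward direction I would assume $W$ is $\Rep(G^{0})$-invariant. Since $\exp(tX)\in G^{0}$ for all $t$, we have $\Rep(\exp tX)W\subseteq W$; differentiating the curve $t\mapsto\Rep(\exp tX)w$ at $t=0$ for $w\in W$ and using that $W$ is closed (being finite-dimensional), the derivative $\rep(X)w$ again lies in $W$. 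Hence $\rep(X)W\subseteq W$ for all $X$, so $W$ is $\rep$-invariant.

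For the converse I would assume $W$ is $\rep$-invariant, so that $\rep(X)^{n}W\subseteq W$ for every $n$; then every partial sum of $\exp(\rep(X))=\sum_{n\ge0}\rep(X)^{n}/n!$ maps $W$ into $W$, and by closedness so does the limit, giving $\Rep(\exp X)W=\exp(\rep(X))W\subseteq W$. The standard Lie-theoretic fact that $G^{0}$ is generated by the image of the exponential map, $G^{0}=\langle\exp X:X\in\mathfrak{g}\rangle$, then propagates invariance: any $g\in G^{0}$ factors as a finite product $\exp X_{1}\cdots\exp X_{k}$, whence $\Rep(g)W\subseteq W$.

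I expect the main obstacle to be careful bookkeeping rather than depth. One must invoke correctly that $G^{0}$ is exactly the subgroup generated by exponentials (a genuine but standard theorem), and one must observe that we work over $\comp$, so invariance under the real Lie algebra $\mathfrak{g}$ automatically entails invariance under its complexification $\mathfrak{g}_{\comp}$, and these yield the same notion of complex irreducibility. The polynomial assumption on $\Rep$ enters only to guarantee that the differential $\rep$ exists and that $\Rep\circ\exp=\exp\circ\,\rep$ holds, and is otherwise inessential to the argument.
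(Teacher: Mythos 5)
Your proof is correct and is essentially the same argument as in the paper's cited source (the paper gives no proof of its own, deferring to Goodman--Wallach, Prop.~1.7.7): the invariant-subspace correspondence via $\Rep(\exp X)=\exp\!\big(\rep(X)\big)$, differentiation of curves in a closed subspace for one direction, the power series of $\exp$ plus the fact that the identity component $G^{0}$ is generated by the image of $\exp$ for the other. Your side remarks — that the polynomial hypothesis only serves to guarantee smoothness, and that real versus complexified Lie algebra invariance coincide for complex subspaces — are also accurate and handle the only genuinely delicate points.
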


The unitary group is connected; therefore, any unitary tensor irrep will give rise to a Lie algebra irrep, thus they will be labelled by the same Young diagrams.

However, the case of the orthogonal group is more involved. The following is true for the connected component $\SO(d)$ (see Theorems 5.5.23.~and 5.5.24.~in \cite{goodman2009symmetry}):
\begin{theorem}
    Let $\Repp_{\lambda}$ be an $\Ort(d)$ tensor irrep where $d=2k$ or $d=2k+1$. Then the following is true for its restriction to $\SO(d)$:
    \begin{equation}
        \Repp_{\lambda} \Big\rvert_{\SO(d)}\cong \Repp_{\lambda^*}\Big\rvert_{\SO(d)} \cong \begin{cases} \tilde{\Repp}_{\lambda}\text{,} \quad &\text{ if $d$ is odd or $d$ is even but }\lambda_k=0\text{,}\\ \tilde{\Repp}_{\lambda} \OP \tilde{\Repp}_{\overline{\lambda}}\text{,} \quad &\text{ if $d$ is even and $\lambda_k\neq 0$,} \end{cases}
    \end{equation}
    where $\tilde{\Repp}$ denotes $\SO(d)$ representations and $\overline{\lambda}$ denotes the generalised Young diagram having a negative $\lambda_k$ number of boxes in its last row.
\end{theorem}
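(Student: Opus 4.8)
The plan is to exploit that $\SO(d)$ is the identity component of $\Ort(d)$ and sits inside it as a normal subgroup of index two, so that the restriction of an irreducible $\Ort(d)$-module is dictated by Clifford theory for index-two subgroups. Concretely, I would fix as the nontrivial coset representative a reflection $r\in\Ort(d)$ with $\det r=-1$ (for instance $r=\mathrm{diag}(1,\dots,1,-1)$) and use the standard dichotomy: for an irreducible $G$-module $V$ with $H\trianglelefteq G$ of index two and nontrivial coset character $\epsilon$, the restriction $V\big\rvert_{H}$ is irreducible precisely when $V\not\cong V\ot\epsilon$, while otherwise it splits as $V\big\rvert_{H}\cong W\OP {}^{r}W$ into two inequivalent $H$-irreps interchanged by conjugation with $r$.

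First I would settle the equivalence $\Repp_{\lambda}\big\rvert_{\SO(d)}\cong\Repp_{\lambda^{*}}\big\rvert_{\SO(d)}$. This is immediate from the associate relation $\Repp_{\lambda^{*}}\cong\det\ot\Repp_{\lambda}$ proved in the classification theorem above, together with the fact that $\det$ restricts to the trivial character on $\SO(d)$ since every element of $\SO(d)$ has determinant $+1$. Hence the two associate $\Ort(d)$-irreps become the same $\SO(d)$-module, and it suffices to analyse $\Repp_{\lambda}\big\rvert_{\SO(d)}$. Next I would apply the Clifford dichotomy with $\epsilon=\det$: the splitting behaviour is governed entirely by whether $\Repp_{\lambda}\cong\Repp_{\lambda}\ot\det=\Repp_{\lambda^{*}}$, and this condition has already been resolved by the classification theorem. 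For $d$ odd the associate irreps $\Repp_{\lambda}$ and $\Repp_{\lambda^{*}}$ are always distinct, and for $d=2n$ they coincide exactly when the $n$-th row $\lambda_{n}$ is nonzero. Thus $\Repp_{\lambda}\big\rvert_{\SO(d)}$ is irreducible whenever $d$ is odd or $\lambda_{n}=0$, and splits into two inequivalent pieces precisely when $d$ is even and $\lambda_{n}\neq0$, reproducing the case distinction in the statement.

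The main obstacle is the final identification of the two summands in the split case as $\hat{\Repp}_{\lambda}$ and $\hat{\Repp}_{\tilde{\lambda}}$. For this I would pass to highest-weight theory for $D_{n}=\SO(2n)$: choosing a maximal torus of commuting plane rotations with the usual coordinates $\varepsilon_{1},\dots,\varepsilon_{n}$, the module $\hat{\Repp}_{\lambda}$ has highest weight $\sum_{i}\lambda_{i}\varepsilon_{i}$ with $\lambda_{n}\geq0$. The key computation is that conjugation by $r$ acts on the torus by reversing the last rotation angle, fixing $\varepsilon_{1},\dots,\varepsilon_{n-1}$ and sending $\varepsilon_{n}\mapsto-\varepsilon_{n}$; consequently ${}^{r}\hat{\Repp}_{\lambda}$ has highest weight $\sum_{i<n}\lambda_{i}\varepsilon_{i}-\lambda_{n}\varepsilon_{n}$, which is exactly the irrep labelled by the generalised diagram $\tilde{\lambda}$ carrying $-\lambda_{n}$ boxes in its last row. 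This yields $\Repp_{\lambda}\big\rvert_{\SO(d)}\cong\hat{\Repp}_{\lambda}\OP\hat{\Repp}_{\tilde{\lambda}}$, and since $\lambda_{n}\neq0$ forces $\tilde{\lambda}\neq\lambda$ it confirms that the two pieces are genuinely inequivalent. The delicate part is the bookkeeping of coordinates and the verification that $\hat{\Repp}_{\lambda}$ (with nonnegative last entry) is indeed the standard $\SO(d)$-irrep occurring in the restriction, which is precisely where one must invoke the $\Ort(d)\to\SO(d)$ branching concretely rather than abstractly. The remaining odd case needs only the observation that $-I$ represents the nontrivial coset and acts centrally, so conjugation is trivial and no splitting can occur, in agreement with the dichotomy above.
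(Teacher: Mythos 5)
The paper never proves this theorem itself: it is quoted, with citation, from Theorems 5.5.23 and 5.5.24 of \cite{goodman2009symmetry}, so there is no in-paper argument to compare against. Your proof is essentially the standard argument that underlies those cited results, and it is correct in outline: Clifford--Mackey theory for the closed index-two normal subgroup $\SO(d)\trianglelefteq\Ort(d)$ reduces everything to deciding whether $\Repp_{\lambda}\cong\Repp_{\lambda}\ot\det$; that condition is settled by the classification of associate representations stated just above (self-associate exactly when $d=2n$ and $\lambda_n\neq 0$, while for odd $d$ the central element $-I$, having $\det(-I)=-1$, rules out self-association by Schur's lemma); and in the split case the two summands are exchanged by conjugation with a reflection, which acts on the maximal torus of $\SO(2n)$ as the diagram automorphism $\varepsilon_n\mapsto-\varepsilon_n$ and hence carries highest weight $\lambda$ to $\tilde{\lambda}$. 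The equivalence $\Repp_{\lambda}\big\rvert_{\SO(d)}\cong\Repp_{\lambda^*}\big\rvert_{\SO(d)}$ via triviality of $\det$ on $\SO(d)$ is also exactly right.

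One point deserves sharper treatment than your closing caveat gives it: Clifford theory alone only yields $\Repp_{\lambda}\big\rvert_{\SO(2n)}\cong W\OP{}^{r}W$ for \emph{some} irreducible $W$, and identifying $W\cong\hat{\Repp}_{\lambda}$ is where circularity threatens, since the $\Ort(d)\to\SO(d)$ branching is precisely what is being proved. To close this cleanly you should source the needed fact from the \emph{construction} of the tensor irrep rather than from any branching statement: in Weyl's realization, $\Repp_{\lambda}$ is the space of traceless tensors of symmetry type $\lambda$ inside $\hil^{\ot|\lambda|}$, whose weights (for the standard torus and positive system) include $\lambda=\sum_i\lambda_i\varepsilon_i$ as an extremal weight; dominance of $\lambda$ for $D_n$ then forces one Clifford summand to have highest weight exactly $\lambda$, and the automorphism argument supplies the other. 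With that step anchored in the construction, your argument is complete and matches the route taken in the reference the paper leans on.
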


Therefore, in the case of the orthogonal group it is almost as easy as for the unitary group. Two things need to be kept in mind: firstly, there is no difference between the starred and non-starred representations. Secondly, representations of the orthogonal group of even degree that have non-zero boxes in the last line will split up into two irreps, one labelled by the same Young diagram and the other labelled by one that has a negative number of boxes in the last line. This will be represented as dots in the relevant boxes. See the following example for $d=4$:
\begin{equation}
    \Repp_{\ydiagram{2,2}} \Big\rvert_{\SO(d)}\cong \tilde{\Repp}_{\ydiagram{2,2}} \OP \tilde{\Repp}_{\ydiagram[\cdot]{0,2}*{2,0}}\text{,}
\end{equation}
where $\tilde{\Repp}$ denotes $\SO(d)$ representations. Note, however, that the irreps labelled by these generalised Young diagrams will not influence the calculations and results of this article as they have the same quadratic Casimir eigenvalues as their non-generalised counterparts. See Appendix \ref{sec:qCO} for more information.

A simple consequence of the relationship between Lie group representations and the corresponding Lie algebra representation is that they transform in sync. For example, if a Lie group representation $\Rep$ splits up in the following way:
\begin{equation}
    \Rep(g)=\Vry \left(\Rep_{1}(g) \OP \Rep_{2}(g) \right)\Vry^\dagger \text{,}
\end{equation}
for a group element $g$, some subrepresentations $\Rep_1$ and $\Rep_2$ and a fixed unitary $\Vry$. Then the same is true for the corresponding Lie algebra representations:
\begin{equation}
    \rep(X)=\Vry \left(\rep_{1}(X) \OP \rep_{2}(X) \right)\Vry^\dagger \text{,}
\end{equation}
where $X$ is an element of the Lie algebra.

In our calculations, the Lie algebra representation corresponding to the tensor product of Lie group representations will often be needed. The tensor product of Lie algebra representations shall be denoted by the symbol $\times$. This is so that it differentiates between ordinary tensor products, as the tensor product of Lie algebra representations is actually the so-called Kronecker sum of operators. More precisely, for the Lie algebra representations $\rep_1:\lie\to\gl(\hil_1)$ and $\rep_2:\lie\to\gl(\hil_2)$ their tensor product is the following:
\begin{equation}
    (\rep_{1}\times\rep_{2})(X)=\rep_1(X)\ot \id_{\hil_2} + \id_{\hil_1} \ot \rep_2(X)\text{,}
\end{equation}
where $X\in\lie$ is an element of the Lie algebra in question.

\subsection{Recap: The quadratic Casimir operator}

To determine the set $(n,m)$-extendible Brauer states we will be using the quadratic Casimir elements of the $\ualg(d)$ and $\So(d)$ Lie algebras. In general, the quadratic Casimir of the Lie algebra $\mathfrak{g}$ in the representation $\rep$ is the following:
\begin{equation}
    \rep(C^{\mathfrak{g}})=\sum_{i} \rep(X_i)\rep(X^i)\text{,}
\end{equation}
where $\{X_i\}_i$ is a basis in $\mathfrak{g}$ and $\{X^i\}_i$ is its dual\footnote{With respect to, e.g., the Killing form.}. If $\rep$ is an irrep labelled by $\mathfrak{i}$, then, because of Schur's lemma for Lie-algebras:
\begin{equation}
    \rep(C^{\mathfrak{g}})=c(\mathfrak{i})\id_{\mathfrak{i}}\text{,}
\end{equation}
where $c(\mathfrak{i})$ is the eigenvalue of the quadratic Casimir element for the irrep $\mathfrak{i}$.

Note that for simple Lie algebras, the $c(\mathfrak{i})$ eigenvalues are defined up to a global multiplicative constant, for example $\frac{c(\mathfrak{i_1})}{c(\mathfrak{i_2})}$, where defined, is the same for any choice of bilinear form. In this article, the constant for $\ualg(d)$ irreps is denoted by $\chi$ and for $\So(d)$ irreps by $\xi$. 

The eigenvalue of the quadratic Casimir operator of a semisimple Lie algebra is positive for any non-trivial representation. For the trivial representation, all generators of the Lie algebra act as zero and, therefore, the eigenvalue of the Casimir is always zero.

\subsection{The quadratic Casimir operator eigenvalues for $\ualg(d)$}

The formula for the eigenvalue of the quadratic Casimir of $\ualg(d)$ for the irrep with the Young diagram $\lambda$ as a standard label is the following (see Section 7.5 in \cite{iachello2015lie}):
\begin{equation}\label{eq:chi}
    \chi(\lambda)=\sum_{i=1}^{d} \lambda_i (\lambda_i + d + 1 - 2i)\text{,}
\end{equation}
where $\lambda$ has at most $d$ rows.

Thus, the relevant order-two Casimir element eigenvalues for $\ualg(d)$ irreps:
\begin{equation}
    \begin{gathered}
    \chi(0)=0\text{,} \qquad \chi(\ydiagram{1})=d\text{,} \qquad \chi(\ydiagram{1,1})=2(d-1)\text{,} \qquad \chi(\ydiagram{2})=2(d+1)\text{,}\\
    \chi(\ydiagram{1,1,1})=3(d-2)\text{,} \qquad \chi(\ydiagram{2,1})=3d\text{,} \qquad \chi(\ydiagram{3})=3(d+2)\text{,} \qquad \chi\left(\ydiagram{1,1,1,1}\right)=4(d-3)\text{,}\\ \chi(\ydiagram{2,1,1})=4(d-1)\text{,} \qquad \chi(\ydiagram{2,2})=4d\text{,} \qquad \chi(\ydiagram{3,1})=4(d+1)\text{,} \qquad \chi(\ydiagram{4})=4(d+3)\text{.}
    \end{gathered}
\end{equation}
Note that these eigenvalues are defined up to a global multiplicative constant.

\subsection{The quadratic Casimir operator eigenvalues for $\So(d)$} \label{sec:qCO}

The formula for the eigenvalue of the quadratic Casimir of $\So(d)$ for the irrep with the Young diagram $\lambda$ as a standard label is the following (see Section 7.5 in \cite{iachello2015lie}):
\begin{equation}\label{eq:xi}
    \xi(\lambda)=\sum_{i=1}^{k} 2 \lambda_i (\lambda_i + d - 2i)\text{,}
\end{equation}
where $\lambda$ has at most $k$ rows, where $k=\lfloor\frac{d}{2}\rfloor$. Note that the formula is quadratic in the last row of $\lambda$ and therefore, for the special even-dimensional cases $\xi([\lambda_1,\ldots,\lambda_k])=\xi([\lambda_1,\ldots,-\lambda_k])$.

Thus, the relevant order-two Casimir element eigenvalues of $\So(d)$:
\begin{equation}
\begin{gathered}
    \xi(0)=0\text{,} \qquad \xi(\ydiagram{1})=2(d-1)\text{,} \qquad \xi(\ydiagram{1,1})=4(d-2)\text{,} \qquad \xi(\ydiagram{2})=4d\text{,}\\
    \xi(\ydiagram{1,1,1})=6(d-3)\text{,} \qquad \xi(\ydiagram{2,1})=6(d-1)\text{,} \qquad \xi(\ydiagram{3})=6(d+1)\text{,} \qquad \xi\left(\ydiagram{1,1,1,1}\right)=8(d-4)\text{,}\\ \xi(\ydiagram{2,1,1})=8(d-2)\text{,} \qquad \xi(\ydiagram{2,2})=8(d-1)\text{,} \qquad \xi(\ydiagram{3,1})=8d\text{,} \qquad \xi(\ydiagram{4})=8(d+2)\text{.}
\end{gathered}
\end{equation}
Note that these eigenvalues are defined up to a global multiplicative constant.

Furthermore, one can obtain a theorem to deal with the cases of non-standard irrep labels, Young diagrams with more than $k$ rows:

\begin{theorem}\label{th:mod_cas}
    Fix the dimension $d=2k$ or $d=2k+1$ and let $\lambda$ be a Young diagram with more than $k$ rows, thus being a non-standard label for an $\Ort(d)$ tensor irrep $\Repp_{\lambda}$. If there exists a Young diagram $\lambda'$ which is a standard irrep label that can be obtained using the modification rules, then it is true that for the fixed dimension $d$ the quadratic Casimir eigenvalues of these two are the same, that is:
    \begin{equation}
    	\xi(\lambda)\vert_d=\xi(\lambda')\vert_d\text{,}
    \end{equation}
    where Equation \eqref{eq:xi} is adapted for $\lambda$ by extending the summation from $k$ to $d$.
\end{theorem}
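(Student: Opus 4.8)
The plan is to reduce the statement to a single algebraic identity for the row-sum formula
\begin{equation}
\xi(\lambda)=\sum_{i\geq 1}2\lambda_i(\lambda_i+d-2i)\text{,}
\end{equation}
where, for a non-standard diagram $\lambda$ with $p>n$ rows, the sum runs over \emph{all} $p$ rows. This is essential, since rows below the $n$-th contribute in general (for instance both bottom rows of $\Repp_{\ydiagram{2,2,1,1}}$ contribute nontrivially at $d=4$), so one may not truncate the sum at $n$. The first step is to rewrite this as a sum over the boxes of the diagram. Assigning to a box $b=(i,j)$ its content $c(b)=j-i$ and using $\sum_{j=1}^{\lambda_i}\bigl(2(j-i)+d-1\bigr)=\lambda_i(\lambda_i+d-2i)$, one obtains the purely combinatorial expression
\begin{equation}
\xi(\lambda)=2\sum_{b\in\lambda}\bigl(2c(b)+d-1\bigr)\text{,}
\end{equation}
valid for any diagram whatsoever. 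This box-sum form makes the behaviour of $\xi$ under the two elementary operations composing the modification rule transparent.

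Second, I would treat the boundary-hook removal $\lambda\mapsto[\lambda-h]$ with $h=2p-d$. The continuous boundary hook is a border strip: starting at the bottom-left box $(p,1)$ and stepping at each stage either right ($j\mapsto j+1$) or up ($i\mapsto i-1$), the content $j-i$ increases by exactly one at every step. Hence the $h$ removed boxes carry the consecutive contents $1-p,2-p,\dots,h-p$, and
\begin{equation}
\xi(\lambda)-\xi\bigl([\lambda-h]\bigr)=2\sum_{k=1-p}^{h-p}\bigl(2k+d-1\bigr)=2h\,(h+d-2p)\text{.}
\end{equation}
Substituting $h=2p-d$ makes the factor $h+d-2p$ vanish; this cancellation is exactly what the value $h=2p-d$ is engineered to produce and is the conceptual core of the argument.

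Third, I would show that passing to the associate preserves $\xi$. Collecting the box-sum column by column, the $j$-th column (of length $c_j$) contributes $2c_j(2j+d-2-c_j)$ to $\xi(\lambda)$, so the first column contributes $2c_1(d-c_1)$, which is invariant under $c_1\mapsto d-c_1$. Since $\lambda^{*}$ is obtained from $\lambda$ by replacing the first-column length $c_1$ with $d-c_1$ while leaving every other column (hence its boxes and their contents) in place, we get $\xi(\lambda^{*})=\xi(\lambda)$ -- consistent with the fact that $\Repp_\lambda$ and $\Repp_{\lambda^{*}}$ restrict to the same $\SO(d)$ representation. The prefactor $(-1)^{x-1}$ in the modification rule is only multiplicity bookkeeping and is irrelevant to the label. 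Combining the two steps, a single application $\lambda\mapsto(-1)^{x-1}[\lambda-h]^{*}$ leaves $\xi$ unchanged, and since the rule may need to be iterated until a standard $\lambda'$ appears, a short induction on the number of applications gives $\xi(\lambda)\vert_d=\xi(\lambda')\vert_d$.

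The main obstacle I expect lies not in the arithmetic but in the bookkeeping: one must verify that the ``right-or-up'' prescription really traces a genuine border strip (so that the consecutive-content claim is legitimate), and that at each intermediate stage of the iteration the box-sum identity still applies before a valid partition is reached -- including the case $d=2n$, where the formula is even in the last part, so that the generalized diagrams with a negative last row arising in the $\SO(2n)$ splitting carry the same Casimir value and keep the induction consistent.
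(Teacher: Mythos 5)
Your proof is correct, but it is genuinely different from what the paper does: the paper gives no self-contained argument at all for this theorem, disposing of it with a one-line citation to Newell's 1951 work on modification rules. Your argument supplies the missing proof, and its two key computations check out: the boxes of the removed boundary hook carry the consecutive contents $1-p,2-p,\dots,h-p$ (each right-or-up step along the rim raises the content by exactly one), so that $\xi(\lambda)-\xi\bigl([\lambda-h]\bigr)=2h(h+d-2p)$, which vanishes precisely at the prescribed hook length $h=2p-d$; and the first-column contribution $2c_1(d-c_1)$ to the box sum is symmetric under $c_1\mapsto d-c_1$, so association preserves $\xi$. Your insistence that $\xi$ of a non-standard label must be computed by summing over \emph{all} $p$ rows is also exactly right and is a point the paper leaves implicit -- truncating at $n$ already fails for $[1,1,1]$ at $d=4$, where the full sum gives $6=\xi([1])\vert_4$ but the truncated sum gives $8$. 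Conceptually, your hook-content computation is a hands-on verification that the quadratic Casimir $\langle\lambda,\lambda+2\rho\rangle$ is invariant under the shifted Weyl-group reflections that underlie the modification rules, which is the representation-theoretic content of the cited literature: there the modified label denotes (up to sign and association) the same virtual character, and equal characters restricted to $\SO(d)$ force equal Casimir eigenvalues. What the citation buys is brevity and the full strength of Newell's and King's results (including the sign $(-1)^{x-1}$ and the vanishing cases); what your argument buys is an elementary, self-contained proof that makes transparent why $h=2p-d$ is the unique hook length that works. The caveats you flag at the end are genuinely minor and close the way you expect: the prescribed path is a border strip by construction, and the stars arising at intermediate non-standard stages can be tracked formally through the iteration because the determinant representation is trivial on $\SO(d)$, so they never affect $\xi$.
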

\noindent The proof of this theorem follows from \cite{newell1951modification}.

Note that based on Equation \eqref{eq:xi} and Theorem \ref{th:mod_cas}, for a fixed dimension $d$ and a Young diagram $\lambda$ of at most $d$ rows that can be transformed into a valid $\Ort(d)$ irrep label we have that:
\begin{equation}\label{eq:chixi}
    \xi(\lambda)=2(\chi(\lambda)-|\lambda|)\text{,}
\end{equation}
where, again, the summation in Equation \eqref{eq:xi} is extended from $k$ to $d$.

\setcounter{equation}{0}

\section{Calculations for $(n,m)$-extendibility} \label{sec:app_calcs}

In this appendix, we determine the relationship between $\flip_{jk}$, $\bb_{jk}$ and the respective Casimirs. Let $\rep_{\ydiagram{1}}:\ualg(d)\to\gl(\hil)$ be the Lie algebra representation corresponding to the defining representation. We have the following for $d>1$\footnote{The $d=1$ is an almost trivial, uninteresting case.}:
\begin{align}
    \rep_{\ydiagram{1}}\times\rep_{\ydiagram{1}}\left(C^{\ualg(d)}\right)&=\sum_{i=1}^{d^2}\left[\rep_{\ydiagram{1}}(X_i)\ot\id_{\hil} + \id_{\hil}\ot\rep_{\ydiagram{1}}(X_i)\right]\left[\rep_{\ydiagram{1}}(X^i)\ot\id_{\hil} + \id_{\hil}\ot\rep_{\ydiagram{1}}(X^i)\right] \\
    &=2\chi(\ydiagram{1})\id_{\hil}\ot\id_{\hil} + \sum_{i=1}^{d^2}\rep_{\ydiagram{1}}(X_i)\ot\rep_{\ydiagram{1}}(X^i) + \rep_{\ydiagram{1}}(X^i)\ot \rep_{\ydiagram{1}}(X_i)\\
    &=\chi(\ydiagram{1,1}) \hat{\PROJ}_{\ydiagram{1,1}} + \chi(\ydiagram{2}) \hat{\PROJ}_{\ydiagram{2}}\\
    &=\left[\frac{\chi(\ydiagram{2})+\chi(\ydiagram{1,1})}{2}\right]\id_{\hil}\ot\id_{\hil} + \left[\frac{\chi(\ydiagram{2})-\chi(\ydiagram{1,1})}{2}\right] \flip\text{,}
\end{align}
where $\{X_i\}_{i=1}^{d^2}$ is a basis in $\ualg(d)$ and $\{X^i\}_{i=1}^{d}$ is the corresponding dual basis. In the third line we have used that the irrep decomposition of Lie algebra representations is in sync with that of the appropriate Lie group representations. Furthermore, given that the Casimir operator acts as a constant times the identity on any given irreducible component, we have introduced $\hat{\PROJ}_\lambda$ to denote the projection onto the $\lambda$ subspace which is invariant under all $\Rep_{\ydiagram{1}}\ot\Rep_{\ydiagram{1}}\cong\Rep_{\ydiagram{1,1}}\OP\Rep_{\ydiagram{2}}$ transformations. We also have:
\begin{equation}
    \rep_{\ydiagram{1}}^{\times n}\left(C^{\ualg(d)}\right)=\sum_{i=1}^{d^2}\sum_{j,k=1}^{n}\rep_{\ydiagram{1}}^{(j)}(X_i)\rep_{\ydiagram{1}}^{(k)}(X^i)\text{.}
\end{equation}
The sum on the right-hand side can be decomposed into four cases in the case of $\rep_{\ydiagram{1}}^{\times (n+m)}\left(C^{\ualg(d)}\right)$. One where $j=k$, one where $j,k\leq n$ but they are not equal, one where $n+1\leq j,k$ but they are not equal, and one where either $j\leq n$ and $n+1\leq k$ or the other way around:
\begin{align}
\begin{split}
    \rep_{\ydiagram{1}}^{\times (n+m)}\left(C^{\ualg(d)}\right)=\sum_{i=1}^{d^2}\left[\vphantom{\underset{\substack{j,k=1 \\ j=k}}{\sum^{n+m}}} \right. &\underset{\substack{j,k=1 \\ j=k}}{\sum^{n+m}}\rep_{\ydiagram{1}}^{(j)}(X_i)\rep_{\ydiagram{1}}^{(k)}(X^i) + \\
    &\underset{\substack{j,k=1 \\ j\neq k}}{\sum^{n}} \rep_{\ydiagram{1}}^{(j)}(X_i)\rep_{\ydiagram{1}}^{(k)}(X^i)+\underset{\substack{j,k=n+1 \\ j\neq k}}{\sum^{n+m}} \rep_{\ydiagram{1}}^{(j)}(X_i)\rep_{\ydiagram{1}}^{(k)}(X^i)+\\
    &\sum_{j=1}^{n}\sum_{k=n+1}^{n+m} \rep_{\ydiagram{1}}^{(j)}(X_i)\rep_{\ydiagram{1}}^{(k)}(X^i)+\sum_{j=n+1}^{n+m} \sum_{k=1}^{n}\rep_{\ydiagram{1}}^{(j)}(X_i)\rep_{\ydiagram{1}}^{(k)}(X^i)\left.\vphantom{\underset{\substack{j,k=1 \\ j=k}}{\sum^{n+m}}} \right]\text{.}
\end{split}
\end{align}
The first of these can be replaced by a multiple of the identity, the second two by appropriate parts containing the `sub'-Casimir operators, and the last part will contain the needed combination of flips:
\begin{align}
\begin{split}
    \rep_{\ydiagram{1}}^{\times (n+m)}\left(C^{\ualg(d)}\right)=&(n+m)\chi(\ydiagram{1})\id^{\ot (n+m)}_{\hil} +\\
    &\rep_{\ydiagram{1}}^{\times n}\left(C^{\ualg(d)}\right) \ot \id^{\ot m}_{\hil} - n\chi(\ydiagram{1})\id^{\ot (n+m)}_{\hil} + \\
    &\id^{\ot n}_{\hil}\ot\rep_{\ydiagram{1}}^{\times m}\left(C^{\ualg(d)}\right) - m\chi(\ydiagram{1})\id^{\ot (n+m)}_{\hil}+ \\
    &nm\frac{\chi(\ydiagram{2})+\chi(\ydiagram{1,1})-4\chi(\ydiagram{1})}{2}\id^{\ot (n+m)}_{\hil} + \frac{\chi(\ydiagram{2})-\chi(\ydiagram{1,1})}{2}\sum_{j=1}^{n}\sum_{k=n+1}^{n+m} \flip_{jk}\text{,}
\end{split}
\end{align}
which can be rearranged to get:
\begin{align}
    \smf_{n,m}\coloneqq&\frac{1}{nm}\sum_{j=1}^{n}\sum_{k=n+1}^{n+m} \flip_{jk}=\\
    \begin{split}
        =&\frac{2}{nm(\chi(\ydiagram{2})-\chi(\ydiagram{1,1}))}\left[\rep_{\ydiagram{1}}^{\times (n+m)}(C^{\ualg(d)}) - \rep_{\ydiagram{1}}^{\times n}(C^{\ualg(d)}) \ot \id^{\ot m}_{\hil} - \id^{\ot n}_{\hil} \ot \rep_{\ydiagram{1}}^{\times m}(C^{\ualg(d)}) \right]\\
        & + \frac{4\chi(\ydiagram{1})-\chi(\ydiagram{2})-\chi(\ydiagram{1,1})}{\chi(\ydiagram{2})-\chi(\ydiagram{1,1})}\id^{\ot (n+m)}_{\hil}\text{,}
    \end{split}
\end{align}
where we have used the fact that $\chi(\ydiagram{2})-\chi(\ydiagram{1,1})\neq 0$. Note, however, that for any choice of multiplicative constant of the $\ualg(d)$ Casimir eigenvalues we have that the coefficient of the identity is zero. Thus:
\begin{align}
    \smf_{n,m}\coloneqq&\frac{1}{nm}\sum_{j=1}^{n}\sum_{k=n+1}^{n+m} \flip_{jk}=\\
    \begin{split}
        =&\frac{2}{nm(\chi(\ydiagram{2})-\chi(\ydiagram{1,1}))}\left[\rep_{\ydiagram{1}}^{\times (n+m)}(C^{\ualg(d)}) - \rep_{\ydiagram{1}}^{\times n}(C^{\ualg(d)}) \ot \id^{\ot m}_{\hil} - \id^{\ot n}_{\hil} \ot \rep_{\ydiagram{1}}^{\times m}(C^{\ualg(d)}) \right]\text{.}
    \end{split}
\end{align}

A similar line of reasoning leads to acquiring the form of $\smb_{n,m}$. Let $\repp_{\ydiagram{1}}:\So(d)\to\gl(\hil)$ be the Lie algebra representation corresponding to the defining representation. The starting point is the following for $d>4$:
\begin{align}\label{eq:cas_2_so}
    \repp_{\ydiagram{1}}\times\repp_{\ydiagram{1}}\left(C^{\So(d)}\right)&=\sum_{i=1}^{\frac{d(d-1)}{2}}\left[\repp_{\ydiagram{1}}(Y_i)\ot\id_{\hil} + \id_{\hil}\ot\repp_{\ydiagram{1}}(Y_i)\right]\left[\repp_{\ydiagram{1}}(Y^i)\ot\id_{\hil} + \id_{\hil}\ot\repp_{\ydiagram{1}}(Y^i)\right] \\
    &=2\xi(\ydiagram{1})\id_{\hil}\ot\id_{\hil} + \sum_{i=1}^{\frac{d(d-1)}{2}}\repp_{\ydiagram{1}}(Y_i)\ot\repp_{\ydiagram{1}}(Y^i) + \repp_{\ydiagram{1}}(Y^i)\ot \repp_{\ydiagram{1}}(Y_i)\\
    &=\xi(\ydiagram{1,1}) \PROJ_{\ydiagram{1,1}} + \xi(0) \PROJ_{0} + \xi(\ydiagram{2}) \PROJ_{\ydiagram{2}}\\
    &=\left[\frac{\xi(\ydiagram{2})+\xi(\ydiagram{1,1})}{2}\right]\id_{\hil}\ot\id_{\hil} + \left[\frac{\xi(\ydiagram{2})-\xi(\ydiagram{1,1})}{2}\right] \flip - \xi(\ydiagram{2})\bb\text{,}
\end{align}
where $\{Y_i\}_{i=1}^{d(d-1)/2}$ is a basis in $\So(d)$, $\{Y^i\}_{i=1}^{d(d-1)/2}$ is the corresponding dual basis and, similarly to the previous case, we have the appropriate Casimir eigenvalues appearing along with $\PROJ_\lambda$ which denotes the projection onto the $\lambda$ subspace which is invariant under all $\Repp_{\ydiagram{1}}\ot\Repp_{\ydiagram{1}}\cong \Repp_0 \OP \Repp_{\ydiagram{1,1}} \OP \Repp_{\ydiagram{2}}$ transformations.

The low-dimensional cases ($d=2,3,4$) require some extra care that is discussed in Appendix \ref{sec:app_cas_lowd}, but they all ultimately lead to the equation above.
which leads to:
\begin{align}
    \smb_{n,m}\coloneqq&\frac{1}{nm}\sum_{j=1}^{n}\sum_{k=n+1}^{n+m} \bb_{jk}=\\
    \begin{split}
        =&\frac{\xi(\ydiagram{2})-\xi(\ydiagram{1,1})}{2\xi(\ydiagram{2})}\smf_{n,m} + \frac{\xi(\ydiagram{2})+\xi(\ydiagram{1,1})-4\xi(\ydiagram{1})}{2\xi(\ydiagram{2})}\id^{\ot (n+m)}_{\hil}\\
        &-\frac{1}{nm\xi(\ydiagram{2})}\left[\repp_{\ydiagram{1}}^{\times (n+m)}(C^{\So(d)}) - \repp_{\ydiagram{1}}^{\times n}(C^{\So(d)}) \ot \id^{\ot m}_{\hil} - \id^{\ot n}_{\hil} \ot \repp_{\ydiagram{1}}^{\times m}(C^{\So(d)})\right]\text{,}
    \end{split}
\end{align}
where we have used the fact that $\xi(\ydiagram{2})\neq 0$. Again, note that  for any choice of multiplicative constant of the $\So(d)$ Casimir eigenvalues we have that the coefficient of the identity is zero. Thus:
\begin{align}
    \smb_{n,m}\coloneqq&\frac{1}{nm}\sum_{j=1}^{n}\sum_{k=n+1}^{n+m} \bb_{jk}=\\
    \begin{split}\label{eq:bnm}
        =&\frac{\xi(\ydiagram{2})-\xi(\ydiagram{1,1})}{2\xi(\ydiagram{2})}\smf_{n,m}\!-\!\frac{1}{nm\xi(\ydiagram{2})}\left[\repp_{\ydiagram{1}}^{\times (n+m)}(C^{\So(d)}) - \repp_{\ydiagram{1}}^{\times n}(C^{\So(d)}) \ot \id^{\ot m}_{\hil} - \id^{\ot n}_{\hil} \ot \repp_{\ydiagram{1}}^{\times m}(C^{\So(d)})\right]\text{,}
    \end{split}
\end{align}
By using the concrete Casimir eigenvalues from Equations \eqref{eq:chi} and \eqref{eq:xi}, we obtain the following easy-to-use formulas:
\begin{align}
    \smf_{n,m}&=\frac{1}{2nm}\left[\rep_{\ydiagram{1}}^{\times (n+m)}(C^{\ualg(d)}) - \rep_{\ydiagram{1}}^{\times n}(C^{\ualg(d)}) \ot \id^{\ot m}_{\hil} - \id^{\ot n}_{\hil} \ot \rep_{\ydiagram{1}}^{\times m}(C^{\ualg(d)}) \right]\text{,}\\
    \smb_{n,m}&=\frac{1}{d}\smf_{n,m}-\frac{1}{4dnm}\left[\repp_{\ydiagram{1}}^{\times (n+m)}(C^{\So(d)}) - \repp_{\ydiagram{1}}^{\times n}(C^{\So(d)}) \ot \id^{\ot m}_{\hil} - \id^{\ot n}_{\hil} \ot \repp_{\ydiagram{1}}^{\times m}(C^{\So(d)})\right]\text{.}
\end{align}

\subsection{The low-dimensional cases}\label{sec:app_cas_lowd}

As mentioned before, the low-dimensional cases ($d<2(n+m)$) need some careful considerations. There are three non-trivial possibilities:
\begin{enumerate}
    \item A given orthogonal irrep that appeared at higher dimensions does not appear in lower dimensions. In this case, the projector for that irrep simply does not appear and does not take part in defining the parameter space of the $(n,m)$-extendible Brauer states. This only happens when $d<n+m$.
    \item A given orthogonal irrep that appeared at higher dimensions still appears, but its labelling Young diagram needs to be modified, as it is a non-standard label. In this case, due to Theorem \ref{th:mod_cas}, no change is needed to the calculations, other than a cosmetic change of relabelling these projections.
    \item For the even-dimensional cases, the representations labelled by Young diagrams with non-zero last rows split up into two when restricted to $\So(d)$ (this also happens when $d=2(n+m)$ but not later on). This only affects the Casimirs and, since the quadratic Casimir of $\So(d)$ does not differentiate between the two irreps, no actual change is needed to the calculations.
\end{enumerate}

These possibilities are highlighted at the treatment of each low-dimensional case for a concrete $(n,m)$-extendibility number.

Now, let us discuss the low-dimensional cases of Equation \eqref{eq:cas_2_so} that require extra care. These are dimensions $d=2,3,4$. As it turns out, the form of the equation and thus $\smb_{n,m}$ does not change (other than cosmetically), but a careful treatment is presented for completeness.

If $d=2$, the diagrams labelling the $\Ort(2)$ irreps can only be at most one row tall and all non-1-dimensional irreps are isomorphic to their associate irrep. The following is true, see Appendix \ref{sec:app_rep}:
\begin{equation}
    \Repp_{\ydiagram{1}}\ot\Repp_{\ydiagram{1}} \cong \Repp_{0^*} \OP \Repp_{0} \OP \Repp_{\ydiagram{2}}\text{,}
\end{equation}
furthermore, on the level of Lie-algebra this gives:
\begin{equation}
    \repp_{\ydiagram{1}}\times \repp_{\ydiagram{1}} \cong \repp_{0} \OP \repp_{0} \OP \repp_{\ydiagram{2}} \OP \repp_{\ydiagram[\cdot]{2}}\text{,}
\end{equation}
where the label $\ydiagram[\cdot]{2}$ denotes the [-2] label.

This means the following modification to Equation \eqref{eq:cas_2_so}, where for $d=2$ there is only one generator, $Y$ and its dual $\tilde{Y}$:
\begin{align}
    \repp_{\ydiagram{1}}\times\repp_{\ydiagram{1}}\left(C^{\So(2)}\right)&=\left[\repp_{\ydiagram{1}}(Y)\ot\id_{\hil} + \id_{\hil}\ot\repp_{\ydiagram{1}}(Y)\right]\left[\repp_{\ydiagram{1}}(\tilde{Y})\ot\id_{\hil} + \id_{\hil}\ot\repp_{\ydiagram{1}}(\tilde{Y})\right] \\
    &=2\xi(\ydiagram{1})\id_{\hil}\ot\id_{\hil} + \repp_{\ydiagram{1}}(Y)\ot\repp_{\ydiagram{1}}(\tilde{Y}) + \repp_{\ydiagram{1}}(\tilde{Y})\ot \repp_{\ydiagram{1}}(Y)\\
    &=\xi(0) \PROJ_{0} + \xi(0) \PROJ_{0} + \xi(\ydiagram{2}) \PROJ_{\ydiagram{2}} + \xi(\ydiagram[\cdot]{2})\PROJ_{\ydiagram[\cdot]{2}}\\
    &=\xi(\ydiagram{2})(\PROJ_{\ydiagram{2}}+\PROJ_{\ydiagram[\cdot]{2}})=\xi(\ydiagram{2})\left(\frac{\id_\hil\ot\id_\hil+\flip}{2}-\bb\right)\text{,}
\end{align}
where we have used the fact that $\PROJ_{\ydiagram{2}}+\PROJ_{\ydiagram[\cdot]{2}}$ is the projection onto the invariant subspace of $\Repp_{\ydiagram{2}}$. This splitting of the original diagram $\ydiagram{2}$ into two is possibility number 3 on the previous list and warrants no change in the calculations, as $\xi(\ydiagram{2})=\xi(\ydiagram[\cdot]{2})$ because of the quadratic nature of the Casimir eigenvalue with respect to the Young diagram's last line. Furthermore, the modification of diagram $\ydiagram{1,1}$ into $0^*$ is possibility number 2 of the previous list, and thus needs no extra care. That is, according to Theorem \ref{th:mod_cas}, $\xi(0)=\xi(\ydiagram{1,1})=0$ in the case of $d=2$, which can easily be checked. Therefore, no actual (non-cosmetic) change is needed to Equation \eqref{eq:cas_2_so}, and to the form of $\smb_{n,m}$ in the $d=2$ case.

If $d=3$, the diagrams labelling the $\Ort(3)$ irreps can at most be one row tall, which changes things thusly:
\begin{equation}
    \Repp_{\ydiagram{1}}\ot\Repp_{\ydiagram{1}} \cong \Repp_{\ydiagram{1}^*} \OP \Repp_{0} \OP \Repp_{\ydiagram{2}}\text{,}
\end{equation}
where, when restricting to Lie algebra representations, the one labelled by $\ydiagram{1}^*$ would simply change to $\ydiagram{1}$. Here, again, this is possibility number 2 on the list and warrants no (non-cosmetic) change in the form of Equation \eqref{eq:cas_2_so} or $\smb_{n,m}$.

If $d=4$, the only change is the splitting of the irrep labelled by $\ydiagram{1,1}$ into the plus and minus version: [1,1] and [1,-1]. This is possibility number 3 on the above list and again does not require a (non-cosmetic) change.

Thus, it is safe to use the above calculated form of $\smb_{n,m}$ in any $d>1$ dimensions.

\setcounter{equation}{0}

\section{Dimensions of $\sym_N$, $\U(d)$ and $\Ort(d)$ irreps}\label{sec:app_dim}

In this appendix, the dimension formulas of $\sym_N$, $\U(d)$ and $\Ort(d)$ irreducible representations are presented. These are needed in the concrete calculations of $(n,m)$-extendible Brauer states as attested by Equation \eqref{eq:commutant}.

\subsection{Dimension formula for $\sym_N$ ireps}

The formula for the dimensions of the irreps of the symmetric group $\sym_N$ is known as the hook length formula. See 4.12.~in \cite{fulton2004representation} or Corollary 9.1.6 in \cite{goodman2009symmetry}:

\begin{theorem}[Hook length formula]
    Let $\lambda$ be a Young diagram of $N$ boxes labelling an irrep of $\sym_N$. Then:
    \begin{equation}
        \dim \left(\Repppp_{\lambda}\right) = \frac{N!}{\prod_{(i,j)\in\lambda} h_{\lambda}(i,j)}\text{,}
    \end{equation}
    where $(i,j)$ denote, respectively, the $i$-th row and $j$-th column of $\lambda$ and $h_{\lambda}(i,j)$ denotes the hook length of the $(i,j)$ box of $\lambda$.
\end{theorem}

\subsection{Dimension formula for $\U(d)$ irreps}

The formula for calculating the dimension of a $\U(d)$ tensor irrep labelled by a Young diagram $\lambda$ is well-known, see Section 3 in \cite{king1971dimension}:

\begin{theorem}
    Let $\Rep_{\lambda}$ be a tensor irrep of $\U(d)$ labelled by the Young diagram $\lambda$. The dimension of this irrep is given by the following formula:
    \begin{equation}
        \dim(\Rep_{\lambda})=\prod_{(i,j)\in\lambda} \frac{d-i+j}{h_{\lambda}(i,j)}\text{,}
    \end{equation}
    where $(i,j)$ denote, respectively, the $i$-th row and $j$-th column of $\lambda$ and $h_{\lambda}(i,j)$ denotes the hook length of the $(i,j)$ box of $\lambda$.
\end{theorem}

\subsection{Dimension formula for $\Ort(d)$ irreps}

The formula to calculate the dimension of an $\Ort(d)$ irrep labelled by a Young diagram $\lambda$ is a bit more cumbersome to work with. See Equation (4.12) in \cite{king1971dimension}\footnote{Careful, King has made a typo! Using his notation, the first nominator should be: $(n+\sigma_i+\sigma_{i+j-1}-2i-j+1)$. See Equation (4.11) in \cite{king1971dimension} to confirm this.}:

\begin{theorem}
    Let $\Repp_{\lambda}$ be a tensor irrep of $\Ort(d)$ labelled by the Young diagram $\lambda$ with $p$ number of rows. The dimension of this irrep is given by the following formula:
    \begin{equation}
    \dim(\Repp_{\lambda})=\prod_{(i,j)\in\lambda} \frac{d+\lambda_{i}+\lambda_{i+j-1}-2i-j+1}{h_{\lambda}(i,j)} \prod_{(i,l\geq 1)\in(\tau-\lambda)} \frac{d-2i-l+1+\lambda_{i+\lambda_i+l-1}}{d-2i-l+1}\text{,}
    \end{equation}
    where $(i,j)$ denote respectively the $i$-th row and $j$-th column of $\lambda$, $h_{\lambda}(i,j)$ denotes the hook length of the $(i,j)$ box of $\lambda$, $\lambda_i$ denotes the number of boxes in the $i$-th row of $\lambda$, and $(\tau-\lambda)$ denotes the formal difference diagram (that is: $[\tau_1-\lambda_1,\tau_2-\lambda_2,\ldots]$) where $\tau$ is the triangular diagram having $p$ boxes in the first row: $\tau=[p,p-1,p-2,\ldots,1]$.
\end{theorem}

Note that the formal difference diagram $(\tau-\lambda)$ might have negative or zero boxes in a given row. From the point of view of the formula, these are unimportant as $l>0$ is a condition. Note also that the first column of the formal difference diagram $(\tau-\lambda)$ is considered to be the first one with boxes in it.\footnote{That is: the first column is not the disappearing original first column of either $\tau$ or $\lambda$.} However, the first row is the same as the first row of $\tau$ or $\lambda$, even if it has zero, or negative boxes in it.

\setcounter{equation}{0}

\section{Quantitative results}\label{sec:app_res}

This appendix contains the quantitative results for the cases of $(1,2)$-, $(1,3)$- and $(2,2)$-extendible Brauer states. The derivation of the $(1,2)$-extendibility case is presented in detail, whereas for the other two cases only the results are presented.

\subsection{The derivation and results for (1,2)-extendibility} \label{sec:app_12concrete}

Given the requirements imposed on an extending state $\hat{\rho}_{(1,2)}\in\states{\hil\ot\hil^{\ot 2}}$, the irrep decomposition of the representation it has to be invariant to is the following for $d>5$:
\begin{equation}
\begin{split}
    \left(\hat{\Repppp}^{\mathrm{L}}_1\hat{\Repp}_1 \right)\ot \left( \hat{\Repppp}^{\mathrm{R}}_2 \hat{\Repp}_2\right) \cong \; &\Repppp^{\mathrm{L}}_{\ydiagram{1}} \ot \Repppp^{\mathrm{R}}_{\ydiagram{1,1}} \ot \left(\Repp_{\ydiagram{1}} \OP \Repp_{\ydiagram{1,1,1}} \OP \Repp_{\ydiagram{2,1}} \right) \OP \\
    & \Repppp^{\mathrm{L}}_{\ydiagram{1}} \ot \Repppp^{\mathrm{R}}_{\ydiagram{2}}\ot\left({\color{red}\Repp_{\ydiagram{1}}} \OP {\color{red}\Repp_{\ydiagram{1}}} \OP \Repp_{\ydiagram{2,1}}\OP \Repp_{\ydiagram{3}}\right) \text{,}
\end{split}
\end{equation}
where the only irrep with multiplicity greater than one is highlighted in red. The case of smaller dimensions will be treated at the end of this section.

According to Equation \eqref{eq:commutant}, the form of an extending state $\hat{\rho}_{(1,2)}$ is the following:
\begin{equation}
\begin{split}\label{eq:12form}
    \hat{\rho}_{(1,2)}=\; &\mu_{\ydiagram{1},\ydiagram{1,1},\ydiagram{1}}\frac{\PROJ_{\ydiagram{1},\ydiagram{1,1},\ydiagram{1}}}{\dim(\ydiagram{1},\ydiagram{1,1},\ydiagram{1})}+\mu_{\ydiagram{1},\ydiagram{1,1},\ydiagram{1,1,1}}\frac{\PROJ_{\ydiagram{1},\ydiagram{1,1},\ydiagram{1,1,1}}}{\dim(\ydiagram{1},\ydiagram{1,1},\ydiagram{1,1,1})}+\mu_{\ydiagram{1},\ydiagram{1,1},\ydiagram{2,1}}\frac{\PROJ_{\ydiagram{1},\ydiagram{1,1},\ydiagram{2,1}}}{\dim(\ydiagram{1},\ydiagram{1,1},\ydiagram{2,1})}+\\
    &\mu_{\ydiagram{1},\ydiagram{2},\ydiagram{1}}\frac{\color{red}\mathbf{M}_{\ydiagram{1},\ydiagram{2},\ydiagram{1}}}{\dim(\ydiagram{1},\ydiagram{2},\ydiagram{1})} +\mu_{\ydiagram{1},\ydiagram{2},\ydiagram{2,1}}\frac{\PROJ_{\ydiagram{1},\ydiagram{2},\ydiagram{2,1}}}{\dim(\ydiagram{1},\ydiagram{2},\ydiagram{2,1})}+\mu_{\ydiagram{1},\ydiagram{2},\ydiagram{3}}\frac{\PROJ_{\ydiagram{1},\ydiagram{2},\ydiagram{3}}}{\dim(\ydiagram{1},\ydiagram{2},\ydiagram{3})}\text{,}
\end{split}
\end{equation}
where the three Young diagrams in each index together correspond to the $\Lambda$ indices in Equation \eqref{eq:commutant}, that is, the indices labelling the irrep of $\sym_1\times\sym_2\times\Ort(d)$ in this order. Furthermore, $\mu_\Lambda\geq0$ and they satisfy $\sum_{\Lambda}\mu_\Lambda=1$. The projections $\PROJ_\Lambda$ denote the projections onto the $\kil_\Lambda\ot\lil_\Lambda$ subspaces whenever $\dim(\kil_\Lambda)=1$. The red summand is related to the multiplicity-two irrep $\Repppp^{\mathrm{L}}_{\ydiagram{1}}\ot\Repppp^{\mathrm{R}}_{\ydiagram{2}}\ot{\color{red}\Repp_{\ydiagram{1}}}$:
\begin{equation}
    {\color{red}\mathbf{M}_{\ydiagram{1},\ydiagram{2},\ydiagram{1}}}\cong \rho_{\ydiagram{1},\ydiagram{2},\ydiagram{1}}\ot \id_{\lil_{\ydiagram{1},\ydiagram{2},\ydiagram{1}}}\text{,}
\end{equation}
where $\rho_{\ydiagram{1},\ydiagram{2},\ydiagram{1}}$ is any quantum state on the two dimensional multiplicity space $\kil_{\ydiagram{1},\ydiagram{2},\ydiagram{1}}$.

The dimensions of the $\sym_1$, $\sym_2$ and $\Ort(d)$ irreps multiply together and can be calculated using Appendix \ref{sec:app_dim}:
\begin{gather}
    \dim(\Repppp_{\ydiagram{1}})=1\text{,} \qquad \dim(\Repppp_{\ydiagram{2}})=1\text{,} \qquad \dim(\Repppp_{\ydiagram{1,1}})=1\text{,}\\
    \dim(\Repp_{\ydiagram{1}})=d\text{,} \qquad \dim(\Repp_{\ydiagram{1,1,1}})=\frac{d(d-1)(d-2)}{6}\text{,}\\ \dim(\Repp_{\ydiagram{2,1}})=\frac{(d+2)d(d-2)}{3}\text{,} \qquad \dim(\Repp_{\ydiagram{3}})=\frac{(d+4)d(d-1)}{6}\text{.}
\end{gather}
In this case only the dimensions of $\Ort(d)$ irreps are relevant.

Thus, Equation \eqref{eq:12form} tells us that the parameter space will contain 5 points related to the projections $\PROJ_\Lambda$ and an ellipse related to the qubit state $\rho_{\ydiagram{1},\ydiagram{2},\ydiagram{1}}$. Given the properties of the $\mu_\Lambda$ coefficients, the convex hull of these points and the ellipse will describe the parameter space of the $(1,2)$-extendible Brauer states.

The case of the projectors related to multiplicity-one irreps is trivial since in this case every represented Casimir operator can be diagonalised simultaneously. To calculate their contribution to the parameter space, we use Equations \eqref{eq:smf} and \eqref{eq:smb}. We collect the relevant Casimir eigenvalues and results in Table \ref{tab:projs}. Note that the trivial $\rep_{\ydiagram{1}}(C^{\ualg(d)})\ot\id_\hil^{\ot 2}$ and $\repp_{\ydiagram{1}}(C^{\So(d)})\ot\id_{\hil}^{\ot 2}$ operators have been omitted from the table for brevity: their contribution is always $\chi(\ydiagram{1})$ and $\xi(\ydiagram{1})$, respectively.

\begin{center}
\begin{table}[h!]
\def\arraystretch{1.5} 
\begin{tabular}{|c||c|c|c|c||c|c|}
    \hline
    $\tr/\dim$ & $\rep_{\ydiagram{1}}^{\times 3}\left(C^{\ualg(d)}\right)$ & $\id_\hil \ot \rep_{\ydiagram{1}}^{\times 2}\left(C^{\ualg(d)}\right)$ & $\repp_{\ydiagram{1}}^{\times 3}\left(C^{\So(d)}\right)$ & $\id_\hil\ot\repp_{\ydiagram{1}}^{\times 2}\left(C^{\So(d)}\right)$ & $\smf_{1,2}$ & $\smb_{1,2}$\\
    \hline \hline
    $\PROJ_{\ydiagram{1},\ydiagram{1,1},\ydiagram{1}}$ & $\chi(\ydiagram{2,1})$ & $\chi\left(\ydiagram{1,1}\right)$ & $\xi(\ydiagram{1})$ & $\xi(\ydiagram{1,1})$ & $\frac{1}{2}$ & $\frac{d-1}{2d}$\\
    \hline
    $\PROJ_{\ydiagram{1},\ydiagram{1,1},\ydiagram{1,1,1}}$ & $\chi(\ydiagram{1,1,1})$ & $\chi\left(\ydiagram{1,1}\right)$ & $\xi(\ydiagram{1,1,1})$ & $\xi(\ydiagram{1,1})$ & -1 & 0\\
    \hline
    $\PROJ_{\ydiagram{1},\ydiagram{1,1},\ydiagram{2,1}}$ & $\chi\left(\ydiagram{2,1}\right )$ & $\chi\left(\ydiagram{1,1}\right)$ & $\xi(\ydiagram{2,1})$ & $\xi(\ydiagram{1,1})$ & $\frac{1}{2}$ & 0\\
    \hline
    $\PROJ_{\ydiagram{1},\ydiagram{2},\ydiagram{2,1}}$ & $\chi\left(\ydiagram{2,1}\right)$ & $\chi\left(\ydiagram{2}\right)$ & $\xi(\ydiagram{2,1})$ & $\xi(\ydiagram{2})$ & $-\frac{1}{2}$ & 0\\
    \hline
    $\PROJ_{\ydiagram{1},\ydiagram{2},\ydiagram{3}}$ & $\chi\left(\ydiagram{3}\right )$ & $\chi\left(\ydiagram{2}\right)$ & $\xi(\ydiagram{3})$ & $\xi(\ydiagram{2})$ & 1 & 0\\
    \hline
\end{tabular}
\caption{\small The contributions of the projectors related to multiplicity-one irreps in the $(1,2)$-extendibility case. The table contains the values of the traces of the products of the operators from the first line and first column, divided by the relevant dimension. The trivial $\rep_{\ydiagram{1}}(C^{\ualg(d)})\ot\id_\hil^{\ot 2}$ and $\repp_{\ydiagram{1}}(C^{\So(d)})\ot\id_{\hil}^{\ot 2}$ operators have been omitted for brevity.} \label{tab:projs}
\end{table}
\end{center}

The case of the multiplicity-two irrep $\Repppp^{\mathrm{L}}_{\ydiagram{1}}\ot\Repppp^{\mathrm{R}}_{\ydiagram{2}}\ot{\color{red}\Repp_{\ydiagram{1}}}$ is more involved, as not every Casimir operator can be diagonalised simultaneously in the different representations. What we can guarantee is that the Casimir operators represented as $\rep_{\ydiagram{1}}(C^{\ualg(d)}) \ot \id^{\ot 2}_\hil$, $\id_\hil \ot \rep_{\ydiagram{1}}^{\times 2}(C^{\ualg(d)})$, and $\repp_{\ydiagram{1}}^{\times 3}(C^{\So(d)})$ will be diagonal. Furthermore, given that $n=1$, the Casimir operator represented as $\repp_{\ydiagram{1}}(C^{\So(d)})\ot\id_\hil^{\ot 2}$ is also diagonal. For all other cases, we introduce Figure \ref{fig:casmult2} below, which mirrors Figure \ref{fig:casimirs} and illustrates the two different diagonalisations for the multiplicity-two irrep $\Repppp^{\mathrm{L}}_{\ydiagram{1}}\ot\Repppp^{\mathrm{R}}_{\ydiagram{2}}\ot{\color{red}\Repp_{\ydiagram{1}}}$. Note that the problematic part is the middle section, pertaining to the Casimirs in the representations $\rep_{\ydiagram{1}}^{\times 3}$ on the left and the pair $\repp_{\ydiagram{1}}\ot\id_\hil^{\ot 2}$ and $\id_{\hil}\ot\repp_{\ydiagram{1}}^{\times 2}$ on the right: In general these three matrices are not simultaneously diagonal, and thus appear diagonal in bases $a$ and $b$, respectively.

\begin{figure}[H]
    \centering
	\begin{tikzpicture}[scale=1]

    \newcommand{\Widtha}{3.75}
    \newcommand{\Widthb}{5}
    \newcommand{\Heighta}{-2.5}
    \newcommand{\Heightb}{1}
    \newcommand{\smalla}{0.25}
    \newcommand{\smallb}{0.75}

    \draw (\Widtha+\Widthb/2,\smallb) -- (\Widtha+\Widthb/2,2*\Heighta-\smallb);

    \node at (0,0) {$\begin{pmatrix} \chi(\ydiagram{1}) & 0 \\ 0 & \chi(\ydiagram{1}) \end{pmatrix}$};
    \node at (\Widtha,0) {$\begin{pmatrix} \chi(\ydiagram{2}) & 0 \\ 0 & \chi(\ydiagram{2}) \end{pmatrix}$};

    \node at (\Widtha+\Widthb,0) {$\begin{pmatrix} \chi(\ydiagram{1}) & 0 \\ 0 & \chi(\ydiagram{1}) \end{pmatrix}$};
    \node at (2*\Widtha+\Widthb,0) {$\begin{pmatrix} \chi(\ydiagram{2}) & 0 \\ 0 & \chi(\ydiagram{2}) \end{pmatrix}$};

    \draw[double distance=3pt, arrows = {-Latex[length=0pt 2.5 0]}] (\smalla,-\smallb)--(\Widtha/2-\smalla,\Heighta+\smallb);
    \draw[double distance=3pt, arrows = {-Latex[length=0pt 2.5 0]}] (\Widtha-\smalla,-\smallb)--(\Widtha/2+\smalla,\Heighta+\smallb);
    \node at (\Widtha/2,\Heighta+2*\smallb) {$\bigotimes$};

    \draw[double distance=3pt, arrows = {-Latex[length=0pt 2.5 0]}] (\Widtha+\Widthb,-\smallb)--(\Widtha+\Widthb,\Heighta+\smallb);
    \draw[double distance=3pt, arrows = {-Latex[length=0pt 2.5 0]}] (\Widtha+\Widtha+\Widthb,-\smallb)--(\Widtha+\Widtha+\Widthb,\Heighta+\smallb);
    \node at (\Widtha/2+\Widthb + \Widtha, \Heighta/2+\smalla/3) {\small $\ualg(d) \to \So(d)$};

    \node at (\Widtha/2,\Heighta) {$\begin{pmatrix} \chi(\ydiagram{2,1}) & 0 \\ 0 & \chi(\ydiagram{3}) \end{pmatrix}_a$};

    \node at (\Widtha+\Widthb,\Heighta) {$\begin{pmatrix} \xi(\ydiagram{1}) & 0 \\ 0 & \xi(\ydiagram{1}) \end{pmatrix}_b$};
    \node at (2*\Widtha+\Widthb,\Heighta) {$\begin{pmatrix} \xi(0) & 0 \\ 0 & \xi(\ydiagram{2}) \end{pmatrix}_b$};

    \draw[double distance=3pt, arrows = {-Latex[length=0pt 2.5 0]}] (\Widtha/2,\Heighta-\smallb)--(\Widtha/2,2*\Heighta+\smallb);
    \node at (\Widtha/2+5*\smalla, \Heighta+\Heighta/2+\smalla/3) {\small $\ualg(d) \to \So(d)$};

    \draw[double distance=3pt, arrows = {-Latex[length=0pt 2.5 0]}] (\Widtha+\Widthb+\smalla,\Heighta-\smallb)--(\Widtha+\Widthb+\Widtha/2-\smalla,2*\Heighta+\smallb);
    \draw[double distance=3pt, arrows = {-Latex[length=0pt 2.5 0]}] (\Widtha+\Widthb+\Widtha-\smalla,\Heighta-\smallb)--(\Widtha+\Widthb+\Widtha/2+\smalla,2*\Heighta+\smallb);
    \node at (\Widtha+\Widthb+\Widtha/2,2*\Heighta+2*\smallb) {$\bigotimes$};

    \node at (\Widtha/2,2*\Heighta) {$\begin{pmatrix} \xi(\ydiagram{1}) & 0 \\ 0 & \xi(\ydiagram{1}) \end{pmatrix}$};

    \node at (\Widtha+\Widthb+\Widtha/2,2*\Heighta) {$\begin{pmatrix} \xi(\ydiagram{1}) & 0 \\ 0 & \xi(\ydiagram{1}) \end{pmatrix}$};

    \node at (0,2.5*\Heighta) {};
    
	\end{tikzpicture}
    \caption{\small Illustration of the two types of diagonalisation processes of the Casimir operators in different representations, for the multiplicity-two irrep $ \Repppp^{\mathrm{L}}_{\ydiagram{1}} \ot \Repppp^{\mathrm{R}}_{\ydiagram{2}} \ot {\color{red}\Repp_{\ydiagram{1}}}$ in the $(1,2)$-extendibility case. The first line is the same for each case and it describes the diagonal form of the Casimir operator in, respectively, the $\rep_{\ydiagram{1}}\ot\id_\hil^{\ot 2}$ and the $\id_\hil\ot\rep_{\ydiagram{1}}^{\times 2}$ representations. The second line is the crucial line containing the Casimir in representations $\rep_{\ydiagram{1}}^{\times 3}$ on the left and in representations $\repp_{\ydiagram{1}}\ot\id_\hil^{\ot 2}$ and $\id_{\hil}\ot\repp_{\ydiagram{1}}^{\times 2}$ on the right. The left matrix is in basis $a$ and is generally not diagonal in basis $b$ in which the right two matrices are diagonal. The last line is, again, the same for both cases and describes the diagonal form of the Casimir operator in the $\repp_{\ydiagram{1}}^{\times 3}$ representation. The $\ot \id_{\lil_{\ydiagram{1},\ydiagram{2},\ydiagram{1}}}$ factors are omitted everywhere for brevity.}
    \label{fig:casmult2}
\end{figure}

\vspace{-0.5cm}

Let us fix the basis $a$. We can write the general form of the commuting summand related to the multiplicity-two irrep $\Repppp^{\mathrm{L}}_{\ydiagram{1}}\ot\Repppp^{\mathrm{R}}_{\ydiagram{2}}\ot{\color{red}\Repp_{\ydiagram{1}}}$ as:
\begin{equation}
    \frac{\color{red}\mathbf{M}_{\ydiagram{1},\ydiagram{2},\ydiagram{1}}}{\dim(\ydiagram{1},\ydiagram{2},\ydiagram{1})}\cong \begin{pmatrix} \sin^2(\delta) & y\sin(2\delta)e^{i\epsilon}\\ y\sin(2\delta)e^{-i\epsilon} & \cos^2(\delta) \end{pmatrix}_a\ot \frac{\id_{\lil_{\ydiagram{1},\ydiagram{2},\ydiagram{1}}}}{\dim(\ydiagram{1},\ydiagram{2},\ydiagram{1})}\text{,}
\end{equation}
where $y\in[0,1/2]$, $\delta\in[0,\pi/2)$ and $\epsilon\in[0,2\pi)$. The matrix on the right-hand side is nothing else but a general parametrisation of qubit states. These parameters are free in the same sense as the $\mu_\Lambda$ parameters.

As stated in Equation \eqref{eq:uni_intertwiner}, the intertwiner between bases $a$ and $b$ is unitary. That is, we have a generic $2\times 2$ unitary matrix
\begin{equation}
    V\coloneqq e^{i\phi}\begin{pmatrix}
    e^{i\alpha}\cos(\theta) & e^{i\beta}\sin(\theta) \\ -e^{-i\beta}\sin(\theta) & e^{-i\alpha}\cos(\theta)
\end{pmatrix}\text{,}
\end{equation}
where $\phi,\alpha,\beta,\theta$ are unknown fixed variables that, in general, depend on the specific representations in question and also on the dimension.

Thus, to get the form of $\id_\hil\ot\repp_{\ydiagram{1}}^{\times 2}(C^{\So(d)})$ in the $a$ basis, we have:
\begin{equation}
    V^\dagger \begin{pmatrix} \xi(0) & 0 \\ 0 & \xi(\ydiagram{2}) \end{pmatrix}_b V = \begin{pmatrix}
        \cos^2(\theta)\xi(0)+\sin^2(\theta)\xi(\ydiagram{2}) & e^{i(\beta-\alpha)}\cos(\theta)\sin(\theta)(\xi(0)-\xi(\ydiagram{2})) \\ e^{-i(\beta-\alpha)}\cos(\theta)\sin(\theta)(\xi(0)-\xi(\ydiagram{2})) & \sin^2(\theta)\xi(0)+\cos^2(\theta)\xi(\ydiagram{2}) \end{pmatrix}_a\text{,}
\end{equation}
where, again, the $\ot \id_{\lil_{\ydiagram{1},\ydiagram{2},\ydiagram{1}}}$ factor has been omitted on both sides for brevity. Note that the only remaining relevant variables are $\beta-\alpha$ and $\theta$. Through a detailed calculation we find that whatever $\beta-\alpha$ is, it can be merged into the $\epsilon$ parameter of the qubit state. After substituting the Casimir eigenvalues from Appendix \ref{sec:app_cas}, this leads to the following,
\begin{align}
    \frac{\tr({\color{red}\mathbf{M_{\ydiagram{1},\ydiagram{2},\ydiagram{1}}}}\smf_{1,2})}{\dim(\ydiagram{1},\ydiagram{2},\ydiagram{1})}&=\frac{2-3\sin^2(\delta)}{2}\text{,}\\
    \frac{\tr({\color{red}\mathbf{M_{\ydiagram{1},\ydiagram{2},\ydiagram{1}}}}\smb_{1,2})}{\dim(\ydiagram{1},\ydiagram{2},\ydiagram{1})}&=\frac{d+1-2dy\sin(2\theta)\sin(2\delta)\cos(\epsilon)+\cos(2\delta)(3+d\cos(2\theta))}{4d}\text{.}
\end{align}

The only unknown remaining is the $\theta$ parameter of the unitary intertwiner $V$. To calculate it, we use Proposition 9.3.14 in \cite{goodman2009symmetry} which states that given the representation $\hat{\Repppp}_N\hat{\Rep}_{N}$ of the group $\sym_N\times \U(d)$, the canonical projection onto the isotypic component labelled by $\lambda$ is the sum of the appropriate Young symmetrisers:
\begin{equation}\label{eq:yng_sym1}
    \hat{\PROJ}_{\ydiagram{2,1}}=\frac{1}{3}(2\id_\hil^{\ot 3} - \flip_{12}\flip_{13} - \flip_{13}\flip_{12})\text{,}
\end{equation}
where $\hat{\PROJ}_{\ydiagram{2,1}}$ is the projection onto the $\sym_3\times \U(d)$-isotypic subspace $\hat{\kil}_{\ydiagram{2,1}}\ot\hat{\lil}_{\ydiagram{2,1}}$, and $\flip_{ij}$ is the operator that swaps the $i$-th and $j$-th Hilbert space while leaving the third intact.

On the other hand, from the branching rules described in Appendix \ref{sec:app_rep}, we also have:
\begin{equation}\label{eq:yng_sym2}
    \hat{\PROJ}_{\ydiagram{2,1}}=\PROJ_{\ydiagram{1},\ydiagram{1,1},\ydiagram{1}}+\PROJ_{\ydiagram{1},\ydiagram{1,1},\ydiagram{2,1}}+\PROJ_{\ydiagram{1},\ydiagram{2},\ydiagram{1}}^{(\ydiagram{2,1})}+\PROJ_{\ydiagram{1},\ydiagram{2},\ydiagram{2,1}}\text{,}
\end{equation}
where $\PROJ_{\Lambda}$ is the projection onto the multiplicity-one $\sym_1\times\sym_2\times\Ort(d)$-isotypic component labelled by $\Lambda$, and we have defined the restriction of $\PROJ_{\ydiagram{1},\ydiagram{2},\ydiagram{1}}$ to the $\sym_3\times \U(d)$-isotypic subspace $\hat{\kil}_{\ydiagram{2,1}}\ot\hat{\lil}_{\ydiagram{2,1}}$:
\begin{equation}
    \PROJ_{\ydiagram{1},\ydiagram{2},\ydiagram{1}}^{(\ydiagram{2,1})} \coloneqq \hat{\PROJ}_{\ydiagram{2,1}} \PROJ_{\ydiagram{1},\ydiagram{2},\ydiagram{1}}\hat{\PROJ}_{\ydiagram{2,1}}\text{.}
\end{equation}
By construction we have that:
\begin{equation}
    \PROJ_{\ydiagram{1},\ydiagram{2},\ydiagram{1}}\cong \id_{\kil_{\ydiagram{1},\ydiagram{2},\ydiagram{1}}}\ot \id_{\lil_{\ydiagram{1},\ydiagram{2},\ydiagram{1}}}\text{,}
\end{equation}
and, furthermore, we have:
\begin{equation}
    \PROJ_{\ydiagram{1},\ydiagram{2},\ydiagram{1}}^{(\ydiagram{2,1})}\cong \begin{pmatrix} 1 & 0\\ 0 & 0 \end{pmatrix}_a\ot \id_{\lil_{\ydiagram{1},\ydiagram{2},\ydiagram{1}}}\text{,}
\end{equation}
that is, when $\sin^{2}(\delta)=1$.

Using Equations \eqref{eq:yng_sym1} and \eqref{eq:yng_sym2}, the Table \ref{tab:projs}, and the concrete form of $\bb$, we can calculate
\begin{equation}
    \frac{\tr\left(\PROJ_{\ydiagram{1},\ydiagram{2},\ydiagram{1}}^{(\ydiagram{2,1})}\bb_{12}\right)}{\dim(\ydiagram{1},\ydiagram{2},\ydiagram{1})}=\frac{d-1}{6d}\text{.}
\end{equation}

Leading to the result that determines $\theta$:
\begin{equation}
    \cos(2\theta)=\frac{d-4}{3d}\text{,} \qquad \sin(2\theta)=\pm\frac{2}{3d}\sqrt{2(d+2)(d-1)}\text{.}
\end{equation}

Again, the $\pm1$ phase can be merged into the $\epsilon$ parameter, which, after trigonometric identities, leads to:
\begin{align}
    \frac{\tr({\color{red}\mathbf{M_{\ydiagram{1},\ydiagram{2},\ydiagram{1}}}}\smf_{1,2})}{\dim(\ydiagram{1},\ydiagram{2},\ydiagram{1})}&=\frac{1}{4}+\frac{3}{4}\cos(2\delta)\text{,}\\
    \frac{\tr({\color{red}\mathbf{M_{\ydiagram{1},\ydiagram{2},\ydiagram{1}}}}\smb_{1,2})}{\dim(\ydiagram{1},\ydiagram{2},\ydiagram{1})}&=\frac{d+1}{4d}+\frac{d+5}{12d}\cos(2\delta)+\frac{\sqrt{2(d+2)(d-1)}}{3d}y\cos(\epsilon)\sin(2\delta)\text{.}
\end{align}

To maximize or minimize the contribution to the $\tr(\rho\bb)$-parameter, we choose $y=1/2$ and $\cos(\epsilon)=\pm1$. We can merge the $\cos(\epsilon)$ contribution into the contribution of $\sin(2\delta)$ by extending the $\delta$ values to $[0,\pi)$. Given that $2\delta$ appears everywhere, we rescale such that the new $\delta$ parameter now takes values in $[0,2\pi)$. 

Altogether we get the following parametric equation of an ellipse, centred at $(1/4,(d+1)/4d)$:
\begin{align}
    \frac{\tr({\color{red}\mathbf{M_{\ydiagram{1},\ydiagram{2},\ydiagram{1}}}}\smf_{1,2})}{\dim(\ydiagram{1},\ydiagram{2},\ydiagram{1})}&=\frac{1}{4}+\frac{3}{4}\cos(\delta)\text{,}\\
    \frac{\tr({\color{red}\mathbf{M_{\ydiagram{1},\ydiagram{2},\ydiagram{1}}}}\smb_{1,2})}{\dim(\ydiagram{1},\ydiagram{2},\ydiagram{1})}&=\frac{d+1}{4d}+\frac{d+5}{12d}\cos(\delta)+\frac{\sqrt{2(d+2)(d-1)}}{6d}\sin(\delta)\text{,}
\end{align}
where $\delta\in[0,2\pi)$.

The cases of dimensions $d\leq 5$, can be discussed using the ideas presented in Appendix \ref{sec:app_cas_lowd}. The only non-cosmetic modification happens at $d=2$, in which the projectors $\PROJ_{\ydiagram{1},\ydiagram{1,1},\ydiagram{1,1,1}}$, $\PROJ_{\ydiagram{1},\ydiagram{1,1},\ydiagram{2,1}}$ and $\PROJ_{\ydiagram{1},\ydiagram{2},\ydiagram{2,1}}$ do not appear.

Figure \ref{fig:12detailed_d2_d6} illustrates the convex hull of the contributions of the projectors and the ellipse for $d=2,6$.

\vspace{-0.2cm}

\begin{figure}[H]
    \centering
    \includegraphics[width=0.65\linewidth]{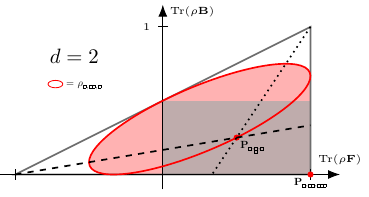}
\end{figure}

\vspace{-1cm}

\begin{figure}[H]
    \centering
    \includegraphics[width=0.65\linewidth]{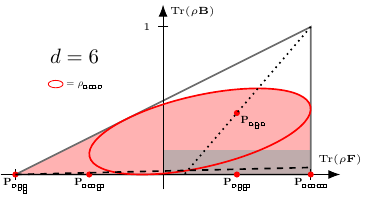}
    \caption{\small The set of Brauer states at $d=2,6$ in the $\tr(\rho\flip) - \tr(\rho\bb)$ parametrisation appearing as the grey triangle. The subset of $(1,2)$-extendible Brauer states appears in light red in the background. The dark red points and ellipse denote the contributions of the appropriate projectors and qubit state, respectively. The set of Werner states is illustrated as the dashed line, while the set of isotropic states as the dotted line. The set of separable Brauer states constitute the grey rectangle.}
    \label{fig:12detailed_d2_d6}
\end{figure}

\subsection{The results for (1,3)-extendibility}\label{sec:app_res2}

Given the requirements imposed on an extending state $\hat{\rho}_{(1,3)}\in\states{\hil\ot\hil^{\ot 3}}$ the irrep decomposition of the representation it has to be invariant to is the following for $d>7$:
\begin{equation}
\begin{split}
    \left(\hat{\Repppp}^{\mathrm{L}}_1\hat{\Repp}_1 \right)\ot \left( \hat{\Repppp}^{\mathrm{R}}_3 \hat{\Repp}_3\right) \cong \; & \Repppp^{\mathrm{L}}_{\ydiagram{1}} \ot \Repppp^{\mathrm{R}}_{\ydiagram{1,1,1}}\ot \left(\Repp_{\ydiagram{1,1}} \OP \Repp_{\ydiagram{1,1,1,1}} \OP \Repp_{\ydiagram{2,1,1}} \right) \OP \\
    & \Repppp^{\mathrm{L}}_{\ydiagram{1}} \ot \Repppp^{\mathrm{R}}_{\ydiagram{2,1}}\ot \left(\Repp_{0} \OP {\color{orange}\Repp_{\ydiagram{1,1}}}\OP {\color{orange}\Repp_{\ydiagram{1,1}}}\OP {\color{brass}\Repp_{\ydiagram{2}}} \OP {\color{brass}\Repp_{\ydiagram{2}}} \OP \Repp_{\ydiagram{2,1,1}} \OP \Repp_{\ydiagram{2,2}} \OP \Repp_{\ydiagram{3,1}} \right) \OP \\
    & \Repppp^{\mathrm{L}}_{\ydiagram{1}} \ot \Repppp^{\mathrm{R}}_{\ydiagram{3}}\ot \left(\Repp_{0} \OP \Repp_{\ydiagram{1,1}} \OP {\color{brown}\Repp_{\ydiagram{2}}}\OP {\color{brown}\Repp_{\ydiagram{2}}} \OP \Repp_{\ydiagram{3,1}} \OP \Repp_{\ydiagram{4}} \right)\text{,}
\end{split}
\end{equation}
Note that the coloured representations have multiplicity greater than one.

The contributions of the projectors related to multiplicity-one irreps is gathered in Table \ref{tab:projs13}.
\begin{center}
\begin{table}[h!]
\def\arraystretch{1.5} 
\begin{tabular}{|c||c|c|c|c||c|c|||c|}
\hline
$\tr/\dim$ & $\rep_{\ydiagram{1}}^{\times 4}\left(C^{\ualg(d)}\right)$ & $\id_\hil \ot \rep_{\ydiagram{1}}^{\times 3}\left(C^{\ualg(d)}\right)$ & $\repp_{\ydiagram{1}}^{\times 4}\left(C^{\So(d)}\right)$ & $\id_\hil \ot \repp_{\ydiagram{1}}^{\times 3}\left(C^{\So(d)}\right)$ & $\smf_{1,3}$ & $\smb_{1,3}$ & Appearance\\
\hline \hline
$\PROJ_{\ydiagram{1},\ydiagram{1,1,1},\ydiagram{1,1}}$ & $\chi(\ydiagram{2,1,1})$ & $\chi(\ydiagram{1,1,1})$ & $\xi(\ydiagram{1,1})$ & $\xi(\ydiagram{1,1,1})$ & $\frac{1}{3}$ & $\frac{d-2}{3d}$ & $d\geq 3$ \\
\hline
$\PROJ_{\ydiagram{1},\ydiagram{1,1,1},\ydiagram{1,1,1,1}}$ & $\chi\left(\ydiagram{1,1,1,1}\right)$ & $\chi(\ydiagram{1,1,1})$ & $\xi\left(\ydiagram{1,1,1,1}\right)$ & $\xi(\ydiagram{1,1,1})$ & $-1$ & 0 & $d\geq 4$ \\
\hline
$\PROJ_{\ydiagram{1},\ydiagram{1,1,1},\ydiagram{2,1,1}}$ & $\chi(\ydiagram{2,1,1})$ & $\chi(\ydiagram{1,1,1})$ & $\xi(\ydiagram{2,1,1})$ & $\xi(\ydiagram{1,1,1})$ & $\frac{1}{3}$ & 0 & $d\geq 4$ \\
\hline
$\PROJ_{\ydiagram{1},\ydiagram{2,1},0}$ & $\chi(\ydiagram{2,2})$ & $\chi(\ydiagram{2,1})$ & $\xi(0)$ & $\xi(\ydiagram{1})$ & 0 & $\frac{d-1}{3d}$ & $d\geq 2$ \\
\hline
$\PROJ_{\ydiagram{1},\ydiagram{2,1},\ydiagram{2,1,1}}$ & $\chi(\ydiagram{2,1,1})$ & $\chi(\ydiagram{2,1})$ & $\xi(\ydiagram{2,1,1})$ & $\xi(\ydiagram{2,1})$ & $-\frac{2}{3}$ & 0 & $d\geq 4$ \\
\hline
$\PROJ_{\ydiagram{1},\ydiagram{2,1},\ydiagram{2,2}}$ & $\chi(\ydiagram{2,2})$ & $\chi(\ydiagram{2,1})$ & $\xi(\ydiagram{2,2})$ & $\xi(\ydiagram{2,1})$ & 0 & 0 & $d\geq 4$ \\
\hline
$\PROJ_{\ydiagram{1},\ydiagram{2,1},\ydiagram{3,1}}$ & $\chi(\ydiagram{3,1})$ & $\chi(\ydiagram{2,1})$ & $\xi(\ydiagram{3,1})$ & $\xi(\ydiagram{2,1})$ & $\frac{2}{3}$ & 0 & $d\geq 3$ \\
\hline
$\PROJ_{\ydiagram{1},\ydiagram{3},0}$ & $\chi(\ydiagram{4})$ & $\chi(\ydiagram{3})$ & $\xi(0)$ & $\xi(\ydiagram{1})$ & $1$ & $\frac{d+2}{3d}$ & $d\geq 2$ \\
\hline
$\PROJ_{\ydiagram{1},\ydiagram{3},\ydiagram{1,1}}$ & $\chi(\ydiagram{3,1})$ & $\chi(\ydiagram{3})$ & $\xi(\ydiagram{1,1})$ & $\xi(\ydiagram{1})$ & $-\frac{1}{3}$ & 0 & $d\geq 2$ \\
\hline
$\PROJ_{\ydiagram{1},\ydiagram{3},\ydiagram{3,1}}$ & $\chi(\ydiagram{3,1})$ & $\chi(\ydiagram{3})$ & $\xi(\ydiagram{3,1})$ & $\xi(\ydiagram{3})$ & $-\frac{1}{3}$ & 0 & $d\geq 3$ \\
\hline
$\PROJ_{\ydiagram{1},\ydiagram{3},\ydiagram{4}}$ & $\chi(\ydiagram{4})$ & $\chi(\ydiagram{3})$ & $\xi(\ydiagram{4})$ & $\xi(\ydiagram{3})$ & $1$ & 0 & $d\geq 2$ \\
\hline
\end{tabular}
\caption{\small The contributions of the projectors related to multiplicity-one irreps in the $(1,3)$-extendibility case. The table contains the values of the traces of the products of the operators from the first line and first column, divided by the relevant dimension. The trivial $\rep_{\ydiagram{1}}(C^{\ualg(d)})\ot\id_\hil^{\ot 3}$ and $\repp_{\ydiagram{1}}(C^{\So(d)})\ot\id_{\hil}^{\ot 3}$ operators have been omitted for brevity. The last column describes when the given projector appears with its contribution although possibly with different labels.} \label{tab:projs13}
\end{table}
\end{center}

\vspace{-0.3cm}

The ellipse contribution coming from the multiplicity-two irrep $\Repppp^{\mathrm{L}}_{\ydiagram{1}} \ot \Repppp^{\mathrm{R}}_{\ydiagram{3}}\ot {\color{brown}\Repp_{\ydiagram{2}}}$, appearing when $d\geq 2$:
\begin{equation}
\left(\frac{1}{3}+\frac{2}{3}\cos(\delta)\; , \; \frac{d+2}{6d}+\frac{\sqrt{d(d+4)}}{6d}\sin(\delta) + \frac{1}{3d}\cos(\delta)\right)\text{,}
\end{equation}
where $\delta\in[0,2\pi)$.

The ellipse contribution coming from the multiplicity-two irrep $\Repppp^{\mathrm{L}}_{\ydiagram{1}} \ot \Repppp^{\mathrm{R}}_{\ydiagram{2,1}}\ot {\color{orange}\Repp_{\ydiagram{1,1}}}$, appearing when $d\geq 3$:
\begin{equation}
\left(\frac{2}{3}\cos(\delta) \; , \; \frac{d+1}{6d}+\frac{\sqrt{3(d+2)(d-2)}}{12d}\sin(\delta) + \frac{d+4}{12d}\cos(\delta) \right)\text{,}
\end{equation}
where $\delta\in[0,2\pi)$.

The ellipse contribution coming from the multiplicity-two irrep $\Repppp^{\mathrm{L}}_{\ydiagram{1}} \ot \Repppp^{\mathrm{R}}_{\ydiagram{2,1}}\ot {\color{brass}\Repp_{\ydiagram{2}}}$, appearing when $d\geq 3$:
\begin{equation}
\left(\frac{1}{3}+\frac{1}{3}\cos(\delta) \; , \; \frac{d-1}{6d}+\frac{\sqrt{d(d-2)}}{6d}\sin(\delta) + \frac{1}{6d}\cos(\delta) \right)\text{,}
\end{equation}
where $\delta\in[0,2\pi)$.

In the $d=2$ case two extra projectors appear in connection with the multiplicity-two irreps that appear when $d\geq3$. These are collected in Table \ref{tab:projs13_spec}.

\vspace{-0.2cm}

\begin{table}[h!]
\centering
\def\arraystretch{1.5} 
\begin{tabular}{|c||c|c|c|c||c|c|||c|}
\hline
$\tr/\dim$ & $\rep_{\ydiagram{1}}^{\times 4}\left(C^{\ualg(d)}\right)$ & $\id_\hil \ot \rep_{\ydiagram{1}}^{\times 3}\left(C^{\ualg(d)}\right)$ & $\repp_{\ydiagram{1}}^{\times 4}\left(C^{\So(d)}\right)$ & $\id_\hil \ot \repp_{\ydiagram{1}}^{\times 3}\left(C^{\So(d)}\right)$ & $\smf_{1,3}$ & $\smb_{1,3}$ & Appearance\\
\hline \hline
$\PROJ_{\ydiagram{1},\ydiagram{2,1},0^*}$ & $\chi(\ydiagram{3,1})$ & $\chi(\ydiagram{2,1})$ & $\xi(0)$ & $\xi(\ydiagram{1})$ & $\frac{2}{3}$ & $\frac{1}{2}$ & $d=2$ \\
\hline
$\PROJ_{\ydiagram{1},\ydiagram{2,1},\ydiagram{2}}$ & $\chi\left(\ydiagram{3,1}\right)$ & $\chi(\ydiagram{2,1})$ & $\xi\left(\ydiagram{2}\right)$ & $\xi(\ydiagram{1})$ & $\frac{2}{3}$ & $\frac{1}{6}$ & $d=2$ \\
\hline
\end{tabular}
\caption{\small The contributions of the special projectors related to multiplicity-one irreps in the $(1,3)$-extendibility case that appear when $d=2$. The table contains the values of the traces of the product of the operators from the first line and first column, divided by the relevant dimension. The trivial $\rep_{\ydiagram{1}}(C^{\ualg(d)})\ot\id_\hil^{\ot 3}$ and $\repp_{\ydiagram{1}}(C^{\So(d)})\ot\id_{\hil}^{\ot 3}$ operators have been omitted for brevity. The last column describes when the given projector appears with its contribution.} \label{tab:projs13_spec}
\end{table}

Figure \ref{fig:13detailed_d2_d8} illustrates the convex hull of the contributions of the projectors and the ellipses for $d=2,8$.

\vspace{-0.3cm}

\begin{figure}[H]
    \centering
    \includegraphics[width=0.7\linewidth]{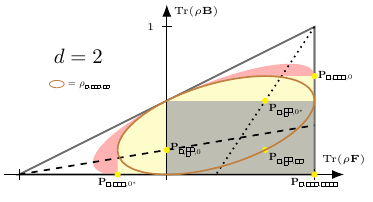}
\end{figure}

\vspace{-1cm}

\begin{figure}[H]
    \centering
    \includegraphics[width=0.7\linewidth]{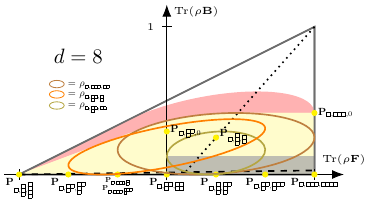}
    \caption{\small The set of Brauer states at $d=2,8$ in the $\tr(\rho\flip) - \tr(\rho\bb)$ parametrisation appearing as the grey triangle. The subset of $(1,2)$-extendible Brauer states appears in light red in the background. The subset of $(1,3)$-extendible Brauer states appears in light yellow. The dark yellow points and differently coloured ellipses denote the contributions of the appropriate projectors and qubit states, respectively. The set of Werner states is illustrated as the dashed line, while the set of isotropic states as the dotted line. The set of separable Brauer states constitute the grey rectangle.}
    \label{fig:13detailed_d2_d8}
\end{figure}

\subsection{The results for (2,2)-extendibility}\label{sec:app_res3}

In this section we present the $(2,2)$-extendibility case in more detail given its special symmetry. From Schur-Weyl duality (see Appendix \ref{sec:app_sw}) we have the following irrep decomposition:
\begin{equation}
    \hat{\Repppp}_4 \hat{\Rep}_4 \cong \left[\Repppp_{\ydiagram{1,1,1,1}}\ot\Rep_{\ydiagram{1,1,1,1}}\right] \OP \left[\Repppp_{\ydiagram{2,1,1}}\ot\Rep_{\ydiagram{2,1,1}} \right] \OP \left[\Repppp_{\ydiagram{2,2}}\ot\Rep_{\ydiagram{2,2}}\right] \OP \left[\Repppp_{\ydiagram{3,1}}\ot\Rep_{\ydiagram{3,1}} \right]\OP \left[\Repppp_{\ydiagram{4}}\ot\Rep_{\ydiagram{4}} \right]\text{.}
\end{equation}

The generators of the permutation symmetries are:
\begin{equation}
    \flip_{12}\text{,} \quad \flip_{34}\text{,} \quad \flip_{13}\flip_{24}\text{,}
\end{equation}
where the last generator is the \textit{faction flip}.
These three operators generate an 8-element subgroup in $\sym_4$, namely the dihedral group $\D_4$. It is well known that $\D_4$ is defined by the following presentation:
\begin{equation}
    \left\langle x,a \; \middle| \; a^4=x^2=e \text{,} \; xax^{-1}=a^{-1}\right\rangle\text{,}
\end{equation}
where $x$ and $a$ are the generators and $e$ denotes the identity. In our case, we can make the following identifications:
\begin{equation}
    x=\flip_{13}\flip_{24}\text{,} \qquad a=\flip_{12}(\flip_{13}\flip_{24})\text{.}
\end{equation}

We also have that $\D_4$ contains $\sym_2\times\sym_2$ as a subgroup with $\flip_{12}$ and $\flip_{34}$ as generators.

The $\D_4$ irreps are labelled as follows: the number denotes the dimension of the representation. There are 4 variants of 1-dimensional representations. The trivial is the simplest one. The ones that carry a star represent the irreps where the elements $\flip_{13}\flip_{24}$ and $\flip_{14}\flip_{23}$ are represented as $-1$. The ones that carry a bar represent the irreps where the elements $\flip_{12}$ and $\flip_{34}$ are represented as $-1$.

The character tables of these groups are described in Table \ref{tab:1}.

\def\arraystretch{1.5} 

\ytableausetup{mathmode, boxsize=\Yt}
\begin{center}
\begin{table}[ht]
\begin{tabular}{|c||c|c|c|c|c|c|c|}
    \hline
    $\sym_4$ & $()$ & \multicolumn{2}{c|}{$(12)$} & \multicolumn{2}{c|}{$(12)(34)$} & $(1,2,3,4)$ & $(1,2,3)$\\
    \hline \hline
    $\ydiagram{4}$ & 1 & \multicolumn{2}{c|}{1} & \multicolumn{2}{c|}{1} & 1 & 1 \\
    \hline
    $\ydiagram{1,1,1,1}$ & 1 & \multicolumn{2}{c|}{-1} & \multicolumn{2}{c|}{1} & -1 & 1 \\
    \hline
    $\ydiagram{2,2}$ & 2 & \multicolumn{2}{c|}{0} & \multicolumn{2}{c|}{2} & 0 & -1 \\
    \hline
    $\ydiagram{3,1}$ & 3 & \multicolumn{2}{c|}{1} & \multicolumn{2}{c|}{-1} & -1 & 0 \\
    \hline
    $\ydiagram{2,1,1}$ & 3 & \multicolumn{2}{c|}{-1} & \multicolumn{2}{c|}{-1} & 1 & 0 \\
    \hline
    \multicolumn{8}{c}{ }\\[-0.5em]
    \hline
    $\D_4$ & $\vphantom{\Big|}\{\id^{\ot 4}_{\hil}\}$ & \multicolumn{2}{c|}{$\{\flip_{12},\flip_{34}\}$} & $\{\flip_{12}\flip_{34}\}$ & $\{\flip_{13}\flip_{24},\flip_{14}\flip_{23}\}$ & $\{\flip_{12}(\flip_{13}\flip_{24}),(\flip_{13}\flip_{24})\flip_{12}\}$ & --- \\
    \hline \hline
    1 & 1 & \multicolumn{2}{c|}{1} & 1 & 1 & 1 & --- \\
    \hline
    $1^{\! *}$ & 1 & \multicolumn{2}{c|}{1} & 1 & -1 & -1 & ---\\
    \hline
    $\bar{1}$ & 1 & \multicolumn{2}{c|}{-1} & 1 & 1 & -1 & ---\\
    \hline
    $\bar{1}^{\! *}$ & 1 & \multicolumn{2}{c|}{-1} & 1 & -1 & 1 & --- \\ 
    \hline
    2 & 2 & \multicolumn{2}{c|}{0} & -2 & 0 & 0 & ---\\
    \hline
    \multicolumn{7}{c}{ }\\[-0.5em]
    \hline
    $\sym_2\times\sym_2$ & $\vphantom{\Big|}\{\id^{\ot 4}_{\hil}\}$ & $\{\flip_{12}\}$ & $\{\flip_{34}\}$ & $\{\flip_{12}\flip_{34}\}$ & --- & --- & ---\\
    \hline
    \hline
    $\ydiagram{2}\ot \ydiagram{2}$ & 1 & 1 & 1 & 1 & --- & --- & --- \\
    \hline
    $\ydiagram{1,1} \ot \ydiagram{2}$ & 1 & -1 & 1 & -1 & --- & --- & --- \\
    \hline
    $\ydiagram{2}\ot \ydiagram{1,1}$ & 1 & 1 & -1 & -1 & --- & --- & --- \\
    \hline
    $\ydiagram{1,1} \ot \ydiagram{1,1}$ & 1 & -1 & -1 & 1 & --- & --- & --- \\
    \hline
\end{tabular}
\caption{\small The character table of $\sym_4$ using the cyclic notation and the character tables of $\D_4$ and $\sym_2\times\sym_2$ using the relevant elements in $\sym_4$ to label the columns. The rows are labelled by the relevant irrep labels. The table is drawn up in a way to reflect which conjugacy class each element belongs to.}
\label{tab:1}
\end{table}
\end{center}

\vspace{-0.5cm}

From the character table, the following restrictions can be deduced from the $\sym_4$ irreps to $\D_4$:
\begin{equation}
\begin{gathered}
    \Repppp_{\ydiagram{4}}\Bigg\rvert_{\D_4}\cong \Repppp_{1} \text{,} \qquad 
    \Repppp_{\ydiagram{1,1,1,1}}\Bigg\rvert_{\D_4}\cong \Repppp_{\bar{1}}\text{,} \qquad
    \Repppp_{\ydiagram{2,2}}\Bigg\rvert_{\D_4}\cong \Repppp_{1}\OP \Repppp_{\bar{1}}\text{,} \\
    \Repppp_{\ydiagram{3,1}}\Bigg\rvert_{\D_4}\cong \Repppp_{1^{\! *}} \OP \Repppp_{2}\text{,} \qquad 
     \Repppp_{\ydiagram{2,1,1}}\Bigg\rvert_{\D_4}\cong\Repppp_{\bar{1}^{\! *}}\OP \Repppp_{2}\text{.}
\end{gathered}
\end{equation}
Furthermore, the restriction of the $\D_4$ irreps to $\sym_2\times\sym_2$ is trivial:
\begin{equation}
\begin{gathered}
    \Repppp_{1}\Big\rvert_{\sym_2\times\sym_2}\cong\Repppp^{\mathrm{L}}_{\ydiagram{2}}\ot\Repppp^{\mathrm{R}}_{\ydiagram{2}}\text{,} \qquad
    \Repppp_{1^{\! *}}\Big\rvert_{\sym_2\times\sym_2}\cong\Repppp^{\mathrm{L}}_{\ydiagram{2}}\ot\Repppp^{\mathrm{R}}_{\ydiagram{2}}\text{,} \qquad
    \Repppp_{\bar{1}}\Big\rvert_{\sym_2\times\sym_2}\cong \Repppp^{\mathrm{L}}_{\ydiagram{1,1}}\ot\Repppp^{\mathrm{R}}_{\ydiagram{1,1}}\text{,} \\
    \Repppp_{\bar{1}^{\! *}}\Big\rvert_{\sym_2\times\sym_2}\cong\Repppp^{\mathrm{L}}_{\ydiagram{1,1}}\ot\Repppp^{\mathrm{R}}_{\ydiagram{1,1}}\text{,} \qquad \Repppp_{2}\Big\rvert_{\sym_2\times\sym_2}\cong(\Repppp^{\mathrm{L}}_{\ydiagram{1,1}}\ot\Repppp^{\mathrm{R}}_{\ydiagram{2}}) \OP (\Repppp^{\mathrm{L}}_{\ydiagram{2}}\ot\Repppp^{\mathrm{R}}_{\ydiagram{1,1}}) \text{.} 
\end{gathered}
\end{equation}
Using the restriction rules found in Appendix \ref{sec:app_rep}, we find that the irrep decomposition an extending state $\hat{\rho}_{(2,2)}\in\states{\hil^{\ot 2}\ot \hil^{\ot 2}}$ has to be invariant to is the following for $d>7$:
\begin{equation}
\begin{split}\label{eq:rep22}
    \left(\hat{\Repppp}_4 \hat{\Rep}_4\right)\Big\rvert_{\D_4\times \Ort(d)} \cong \; & \Repppp_{1}\ot\left[{\color{blue}\Repp_{0}}\OP {\color{green}\Repp_{\ydiagram{2}}} \OP \Repp_{\ydiagram{2,2}} \OP {\color{blue}\Repp_{0}} \OP {\color{green}\Repp_{\ydiagram{2}}} \OP \Repp_{\ydiagram{4}}\right] \OP \\
    &\Repppp_{1^{\! *}}\ot \left[\Repp_{\ydiagram{1,1}} \OP \Repp_{\ydiagram{2}} \OP \Repp_{\ydiagram{3,1}}\right] \OP \\
    &\Repppp_{\bar{1}} \ot \left[\Repp_{\ydiagram{1,1,1,1}} \OP \Repp_{0} \OP \Repp_{\ydiagram{2}} \OP \Repp_{\ydiagram{2,2}}\right] \OP \\
    & \Repppp_{\bar{1}^{\! *}}\ot\left[\Repp_{\ydiagram{1,1}} \OP \Repp_{\ydiagram{2,1,1}} \right] \OP \\
    &\Repppp_{2}\ot\left[{\color{blue-green}\Repp_{\ydiagram{1,1}}} \OP \Repp_{\ydiagram{2,1,1}} \OP {\color{blue-green}\Repp_{\ydiagram{1,1}}} \OP \Repp_{\ydiagram{2}} \OP \Repp_{\ydiagram{3,1}} \right]\text{.}
\end{split}
\end{equation}
Note that the coloured representations have multiplicity greater than one.

The contributions of the projectors related to multiplicity-one irreps is gathered in Table \ref{tab:projs14}.

\begin{center}
\begin{table}[h!]
\def\arraystretch{1.5} 
\begin{tabular}{|c||c|c|c|c||c|c|||c|}
\hline
$\tr/\dim$ & $\rep_{\ydiagram{1}}^{\times 4}\left(C^{\ualg(d)}\right)$ & $\rep_{\ydiagram{1}}^{\times 2,\mathrm{L/R}}\left(C^{\ualg(d)}\right)$ & $\repp_{\ydiagram{1}}^{\times 4}\left(C^{\So(d)}\right)$ & $\repp_{\ydiagram{1}}^{\times 2,\mathrm{L/R}}\left(C^{\So(d)}\right)$ & $\smf_{2,2}$ & $\smb_{2,2}$ & Appearance\\
\hline \hline
$\PROJ_{1,\ydiagram{2,2}}$ & $\chi\left(\ydiagram{2,2}\right)$ & $\chi\left(\ydiagram{2}\right)$ & $\xi(\ydiagram{2,2})$ & $\xi(\ydiagram{2})$ & $-\frac{1}{2}$ & $0$ & $d\geq 4$ \\
\hline
$\PROJ_{1,\ydiagram{4}}$ & $\chi(\ydiagram{4})$ & $\chi\left(\ydiagram{2}\right)$ & $\xi(\ydiagram{4})$ & $\xi(\ydiagram{2})$ & 1 & 0 & $d\geq 2$ \\
\hline
$\PROJ_{1^{\! *},\ydiagram{1,1}}$ & $\chi(\ydiagram{3,1})$ & $\chi\left(\ydiagram{2}\right)$ & $\xi(\ydiagram{1,1})$ & $\xi(\ydiagram{2})$ & 0 & $\frac{d+2}{4d}$ & $d\geq 2$ \\
\hline
$\PROJ_{1^{\! *},\ydiagram{2}}$ & $\chi(\ydiagram{3,1})$ & $\chi\left(\ydiagram{2}\right)$ & $\xi(\ydiagram{2})$ & $\frac{\xi(0)+\xi(\ydiagram{2})}{2}$ & 0 & 0 & $d\geq 2$ \\
\hline
$\PROJ_{1^{\! *},\ydiagram{3,1}}$ & $\chi(\ydiagram{3,1})$ & $\chi\left(\ydiagram{2}\right)$ & $\xi(\ydiagram{3,1})$ & $\xi(\ydiagram{2})$ & 0 & 0 & $d\geq 3$ \\
\hline
$\PROJ_{\bar{1},\ydiagram{1,1,1,1}}$ & $\chi\left(\ydiagram{1,1,1,1}\right)$ & $\chi\left(\ydiagram{1,1}\right)$ & $\xi\left(\ydiagram{1,1,1,1}\right)$ & $\xi(\ydiagram{1,1})$ & $-1$ & 0 & $d\geq 4$ \\
\hline
$\PROJ_{\bar{1},0}$ & $\chi(\ydiagram{2,2})$ & $\chi\left(\ydiagram{1,1}\right)$ & $\xi\left(0\right)$ & $\xi(\ydiagram{1,1})$ & $\frac{1}{2}$ & $\frac{d-1}{2d}$ & $d\geq 2$ \\
\hline
$\PROJ_{\bar{1},\ydiagram{2}}$ & $\chi(\ydiagram{2,2})$ & $\chi\left(\ydiagram{1,1}\right)$ & $\xi\left(\ydiagram{2}\right)$ & $\xi(\ydiagram{1,1})$ & $\frac{1}{2}$ & $\frac{d-2}{4d}$ & $d\geq 3$ \\
\hline
$\PROJ_{\bar{1},\ydiagram{2,2}}$ & $\chi(\ydiagram{2,2})$ & $\chi\left(\ydiagram{1,1}\right)$ & $\xi\left(\ydiagram{2,2}\right)$ & $\xi(\ydiagram{1,1})$ & $\frac{1}{2}$ & 0 & $d\geq 4$ \\
\hline
$\PROJ_{\bar{1}^{\! *},\ydiagram{1,1}}$ & $\chi(\ydiagram{2,1,1})$ & $\chi\left(\ydiagram{1,1}\right)$ & $\xi(\ydiagram{1,1})$ & $\xi(\ydiagram{1,1})$ & 0 & $\frac{d-2}{4d}$ & $d\geq 3$ \\
\hline
$\PROJ_{\bar{1}^{\! *},\ydiagram{2,1,1}}$ & $\chi(\ydiagram{2,1,1})$ & $\chi\left(\ydiagram{1,1}\right)$ & $\xi(\ydiagram{2,1,1})$ & $\xi(\ydiagram{1,1})$ & 0 & 0 & $d\geq 4$ \\
\hline
$\PROJ_{2,\ydiagram{2,1,1}}$ & $\chi(\ydiagram{2,1,1})$ & $\frac{\chi\left(\ydiagram{1,1}\right) + \chi\left(\ydiagram{2}\right)}{2}$ & $\xi\left(\ydiagram{2,1,1}\right)$ & $\frac{\xi(\ydiagram{1,1})+ \xi(\ydiagram{2})}{2}$ & $-\frac{1}{2}$ & 0 & $d\geq 4$ \\
\hline
$\PROJ_{2,\ydiagram{2}}$ & $\chi(\ydiagram{3,1})$ & $\frac{\chi\left(\ydiagram{1,1}\right) + \chi\left(\ydiagram{2}\right)}{2}$ & $\xi\left(\ydiagram{2}\right)$ & $\frac{\xi(\ydiagram{1,1})+ \xi(\ydiagram{2})}{2}$ & $\frac{1}{2}$ & $\frac{1}{4}$ & $d\geq 2$ \\
\hline
$\PROJ_{2,\ydiagram{3,1}}$ & $\chi(\ydiagram{3,1})$ & $\frac{\chi\left(\ydiagram{1,1}\right) + \chi\left(\ydiagram{2}\right)}{2}$ & $\xi\left(\ydiagram{3,1}\right)$ & $\frac{\xi(\ydiagram{1,1})+ \xi(\ydiagram{2})}{2}$ & $\frac{1}{2}$ & 0 & $d\geq 3$ \\
\hline
\end{tabular}
\caption{\small The contributions of the projectors related to multiplicity-one irreps in the $(2,2)$-extendibility case. The table contains the values of the traces of the products of the operators from the first line and first column, divided by the relevant dimension. The notation $\rep_{\ydiagram{1}}^{\times 2, \mathrm{L/R}}$ denotes the representations $\rep_{\ydiagram{1}}^{\times 2}\ot\id_{\hil}^{\ot 2}$ and $\id_{\hil}^{\ot 2}\ot\rep_{\ydiagram{1}}^{\times 2}$, respectively, and similarly for the case of $\repp_{\ydiagram{1}}^{\times 2,\mathrm{L/R}}$. The last column describes when the given projector appears with its contribution although possibly with different labels.} \label{tab:projs14}
\end{table}
\end{center}

\vspace{-0.4cm}

The ellipse contribution coming from the multiplicity-two irrep $\Repppp_{1}\ot {\color{blue}\Repp_{0}}$, appearing when $d\geq 2$:
\begin{equation}
\left(\frac{1}{4}+\frac{3}{4}\cos(\delta) \; , \; \frac{d+1}{4d}+\frac{d+5}{12d}\cos(\delta)+\frac{\sqrt{2(d+2)(d-1)}}{6d}\sin(\delta)\right)\text{,}
\end{equation}
where $\delta\in[0,2\pi)$. Note that this is the same ellipse that appears in the (1,2)-extendibility case.

The ellipse contribution coming from the multiplicity-two irrep $\Repppp_{1}\ot {\color{green}\Repp_{\ydiagram{2}}}$, appearing when $d\geq 3$:
\begin{equation}
\left(\frac{1}{4}+\frac{3}{4}\cos(\delta) \; , \; \frac{d+2}{8d} + \frac{d+10}{24d}\cos(\delta)-\frac{\sqrt{2(d+4)(d-2)}}{12d}\sin(\delta) \right)\text{,}
\end{equation}
where $\delta\in[0,2\pi)$.

The ellipse contribution coming from the multiplicity-two irrep $\Repppp_{2}\ot {\color{blue-green}\Repp_{\ydiagram{1,1}}}$, appearing when $d\geq 3$:
\begin{equation}
\left(\frac{1}{2}\cos(\delta) \; , \; \frac{1}{8}+\frac{1}{4d}\cos(\delta)+\frac{\sqrt{d^2-4}}{8d}\sin(\delta) \right)\text{,}
\end{equation}
where $\delta\in[0,2\pi)$.

In the $d=2$ case two extra projectors appear in connection with the multiplicity-two irreps that appear when $d\geq3$. These are collected in Table \ref{tab:projs22_spec}.

\begin{center}
\begin{table}[h!]
\def\arraystretch{1.5} 
\begin{tabular}{|c||c|c|c|c||c|c|||c|}
\hline
$\tr/\dim$ & $\rep_{\ydiagram{1}}^{\times 4}\left(C^{\ualg(d)}\right)$ & $\rep_{\ydiagram{1}}^{\times 2,\mathrm{L/R}}\left(C^{\ualg(d)}\right)$ & $\repp_{\ydiagram{1}}^{\times 4}\left(C^{\So(d)}\right)$ & $\repp_{\ydiagram{1}}^{\times 2,\mathrm{L/R}}\left(C^{\So(d)}\right)$ & $\smf_{2,2}$ & $\smb_{2,2}$ & Appearance\\
\hline \hline
$\PROJ_{1,\ydiagram{2}}$ & $\chi\left(\ydiagram{4}\right)$ & $\chi\left(\ydiagram{2}\right)$ & $\xi(\ydiagram{2})$ & 
$\frac{\xi(0)+\xi(\ydiagram{2})}{2}$ & $1$ & $\frac{1}{2}$ & $d=2$ \\
\hline
$\PROJ_{2,0^*}$ & $\chi(\ydiagram{3,1})$ & $\frac{\chi\left(\ydiagram{1,1}\right)+\chi\left(\ydiagram{2}\right)}{2}$ & $\xi(0)$ & $\frac{\xi(0)+\xi(0)}{2}$ & $\frac{1}{2}$ & $\frac{1}{4}$ & $d=2$ \\
\hline
\end{tabular}
\caption{\small The contributions of the special projectors related to multiplicity-one irreps in the $(2,2)$-extendibility case that appear when $d=2$. The table contains the values of the traces of the product of the operators from the first line and first column, divided by the relevant dimension. The notation $\rep_{\ydiagram{1}}^{\times 2, \mathrm{L/R}}$ denotes the representations $\rep_{\ydiagram{1}}^{\times 2}\ot\id_{\hil}^{\ot 2}$ and $\id_{\hil}^{\ot 2}\ot\rep_{\ydiagram{1}}^{\times 2}$, respectively, and similarly for the case of $\repp_{\ydiagram{1}}^{\times 2,\mathrm{L/R}}$. The last column describes when the given projector appears.} \label{tab:projs22_spec}
\end{table}
\end{center}

\vspace{-0.5cm}

Figure \ref{fig:22detailed_d2_d8} illustrates the convex hull of the contributions of the projectors and the ellipses for $d=2,8$.

\begin{figure}[H]
    \centering
    \includegraphics[width=0.7\linewidth]{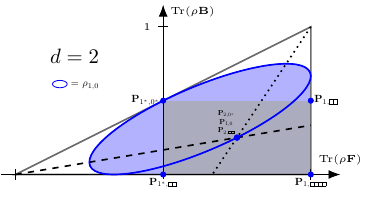}
\end{figure}

\vspace{-1.5cm}

\begin{figure}[H]
    \centering
    \includegraphics[width=0.7\linewidth]{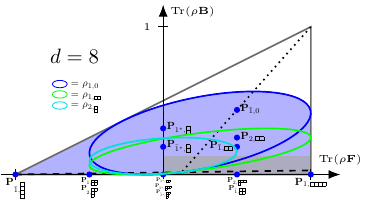}
    \caption{\small The set of Brauer states at $d=2,8$ in the $\tr(\rho\flip) - \tr(\rho\bb)$ parametrisation appearing as the grey triangle. The subset of $(2,2)$-extendible Brauer states appears in light blue in the background and coincides with the subset of $(1,2)$-extendible Brauer states. The dark blue points and differently coloured ellipses denote the contributions of the appropriate projectors and qubit states, respectively. The set of Werner states is illustrated as the dashed line, while the set of isotropic states as the dotted line. The set of separable Brauer states constitute the grey rectangle.}
  \label{fig:22detailed_d2_d8}
\end{figure}

\setcounter{equation}{0}

\section{Derivation of the estimate of the set of $(1,m)$-extendible Brauer states} \label{sec:app_estimate}

In this appendix, we present the derivation of the estimate of the set of $(1,m)$-extendible Brauer states. This is done by finding multiplicity-one irreps in the irrep decomposition of the representation that the $(1,m)$-extending states have to be invariant to. Specifically, we are looking to find multiplicity-one irreps that have extremal contribution to the parameter space in the $\tr(\rho\flip)$ and $\tr(\rho\bb)$ parameters. The convex hull of such points will give our estimate of the set of $(1,m)$-extendible Brauer states.

In the following, we shall denote by $\nu$ any $m$-box and at most $d$-row Young diagram appearing as a label for a $\U(d)$ irrep $\Rep_\nu$ on $\hil^{\ot m}$. Similarly, $\lambda$ denotes any $m+1$-box and at most $d$-row Young diagram appearing as a label in the irrep decomposition of $\Rep_{\ydiagram{1}}\ot\Rep_{\nu}$. On the other hand, $\nu'$ will denote a Young diagram appearing as a label for an $\Ort(d)$ irrep $\Repp_{\nu'}$ such that it appears in the irrep decomposition of the restriction $\Rep_{\nu}\big|_{\Ort(d)}$. Finally, $\lambda'$ will denote a Young diagram that is a label for an $\Ort(d)$ irrep $\Repp_{\lambda'}$ such that it appears in the irrep decomposition of the restriction $\Rep_{\lambda}\big|_{\Ort(d)}$ and also appears in the irrep decomposition of $\Repp_{\ydiagram{1}}\ot\Repp_{\nu'}$. Note that in the following the Young diagrams labelling $\Ort(d)$ irreps might be non-standard but can always be transformed into standard labels. When it does not cause confusion, we sometimes identify the labels with the representations themselves.

Thus, by using the concrete Casimir eigenvalues described in Equations \eqref{eq:chi} and \eqref{eq:xi} and the identity from Equation \eqref{eq:chixi} we have the following for normed projections onto multiplicity-one invariant spaces:
\begin{align}
    \frac{\tr(\PROJ_{\ydiagram{1},\nu,\lambda'}\smf_{1,m})}{\dim(\ydiagram{1},\nu,\lambda')}&=\frac{1}{2m}\big(\chi(\lambda)-\chi(\nu)-d\big)\text{,}\\
    \frac{\tr(\PROJ_{\ydiagram{1},\nu,\lambda'}\smb_{1,m})}{\dim(\ydiagram{1},\nu,\lambda')}&=\frac{1}{2dm}\bigg(\chi(\lambda)-\chi(\nu)-\big(\chi(\lambda')-\chi(\nu') + |\nu'|-|\lambda'|+1\big)\bigg)\text{.}
\end{align}

Trivially, any Young diagram $\nu$ can be written as a collection of equal-length rows that we call blocks:
\begin{equation}
    \nu=[\nu_1,\nu_2,\ldots,\nu_d]=[\tilde{\nu}_1,\tilde{\nu}_1,\ldots,\tilde{\nu}_1,\tilde{\nu}_2,\tilde{\nu}_2,\ldots,\tilde{\nu}_2,\dots]=[l_1\times \tilde{\nu}_1, l_b\times \tilde{\nu}_2,\ldots]\text{,}
\end{equation}
where $\nu_i$ describes the number of boxes in the $i$-th row of $\nu$ with $i\in[d]$ and we have introduced the notation $\tilde{\nu}_\alpha\in\{\nu_i\}_{i=1}^{d}$ where $\alpha\in[p]$, where $p$ is the number of blocks in $\nu$. For these we have that $\tilde{\nu}_1>\tilde{\nu}_2>\ldots$ and $l_1$ is the number of rows containing exactly $\tilde{\nu}_1$ boxes in them, $l_2$ is the number of rows with exactly $\tilde{\nu}_2$ boxes in them and so on.

Pieri's formula described in Appendix \ref{sec:app_rep} tells us that the Young diagram $\lambda$ can only appear if an extra box is appended to $\nu$ at the end of either row $1$, or row $l_1+1$, or row $l_1+l_2+1$, and so on. This is needed for $\lambda$ to be a valid Young diagram. Thus, based on Equation \eqref{eq:chi}, we find that if $\lambda$ came about by appending an extra box to the beginning of the $\delta$-th block of $\nu$, then we have:
\begin{equation}
    \chi(\lambda)-\chi(\nu)=2\tilde{\nu}_\delta+d-2\sum_{\alpha=1}^{\delta-1}l_\alpha\text{.}
\end{equation}

Similarly, the Pieri-type rule for $\Ort(d)$, described in Appendix \ref{sec:app_rep}, states that $\lambda'$ can only appear if an extra box is either appended to $\nu$ in a similar fashion as described before, or if a box is taken away from $\nu$ from the end of the non-zero blocks. In either case, we shall denote by $\delta'$ the number of the block in question. Thus, based on Equation \eqref{eq:chi} we find the following:
\begin{equation}
    \chi(\lambda')-\chi(\nu') + |\nu'| - |\lambda'| + 1=\begin{cases}2\tilde{\nu'}_{\delta'}+d-2\sum_{\alpha=1}^{\delta'-1}l'_\alpha\text{,} &\text{if $|\lambda'|=|\nu'|+1$,}\\ -2\tilde{\nu'}_{\delta'}-d+2+2\sum_{\alpha=1}^{\delta'}l'_\alpha \text{,} & \text{if $|\lambda'|=|\nu'|-1$.}\end{cases}
\end{equation}

Altogether we have:
\begin{align}
    \frac{\tr(\PROJ_{\ydiagram{1},\nu,\lambda'}\smf_{1,m})}{\dim(\ydiagram{1},\nu,\lambda')}&=\frac{\tilde{\nu}_\delta-\sum_{\alpha=1}^{\delta-1}l_\alpha}{m}\text{,}\\
    \frac{\tr(\PROJ_{\ydiagram{1},\nu,\lambda'}\smb_{1,m})}{\dim(\ydiagram{1},\nu,\lambda')}&=\begin{cases}\frac{1}{dm}\left(\tilde{\nu}_\delta - \tilde{\nu'}_{\delta'} + \sum_{\beta=1}^{\delta'-1}l'_\beta - \sum_{\alpha=1}^{\delta-1} l_\alpha\right)\text{,} & \text{if $|\lambda'|=|\nu'|+1$,} \\ \frac{1}{dm}\left(\tilde{\nu}_\delta + \tilde{\nu'}_{\delta'} + d -1 - \sum_{\beta=1}^{\delta'}l'_\beta - \sum_{\alpha=1}^{\delta-1} l_\alpha\right)\text{,} & \text{if $|\lambda'|=|\nu'|-1$.}\end{cases}
\end{align}

The $\tr(\rho\flip)$ parameter is between $-1$ and $1$. In the following, we try to find multiplicity-one irreps such that for given $m$ and $d$ they are extremal in the $\tr(\rho\flip)$ parameter while also trying to extremise the $\tr(\rho\bb)$ parameter. Thus, we are trying to find a 4-vertex polygon to describe the estimate of the set of $(1,m)$-extendible Brauer states. Note that the $\tr(\rho\flip)$ parameter will be minimised if the extra box of $\lambda$ is placed in the last line of $\nu$, which should preferably be empty. On the other hand, the $\tr(\rho\flip)$ parameter will be maximised if the extra box of $\lambda$ is placed in the first line of $\nu$, which should preferably be maximal.

We start with the latter case by noting that the positive extreme $\tr(\rho\flip)=1$ can always be achieved by diagrams $\nu=[m]$ and $\lambda=[m+1]$. Thus, we look for appropriate diagrams $\nu'$ and $\lambda'$ such that the $\tr(\rho\bb)$ parameter is extremised. A very natural ansatz is the following:
\begin{equation}
    \nu=[m]\text{,} \qquad \lambda=[m+1]\text{,} \qquad \nu'=[m]\text{,} \qquad \lambda'=[m+1]\text{.}
\end{equation}
It is trivial to check that this is indeed describing a multiplicity-one irrep. Furthermore, no modifications are needed to the diagrams $\nu'$ and $\lambda'$ as they are already standard labels in any dimension $d$. Thus, the above diagrams contribute the point
\begin{equation}
    (1,0)\text{.}
\end{equation}

For the upper extreme of the $\tr(\rho\bb)$ parameter,  we can choose the following Young diagrams if $m$ is odd:
\begin{equation}
    \nu=[m]\text{,} \qquad \lambda=[m+1]\text{,} \qquad \nu'=[1]\text{,} \qquad \lambda'=[0]\text{.}
\end{equation}
Again, it is trivial to check that the restriction of $\Rep_\nu$ to $\Ort(d)$ contains $\nu'=[1]$ and that the restriction of $\Rep_\lambda$ contains $\lambda'=[0]$ and that this necessarily has multiplicity-one. And, again, no modifications are needed. Thus, the above diagrams contribute the point
\begin{equation}
    \left(1,\frac{m+d-1}{d(m+1)}\right)\text{.}
\end{equation}
However, if $m$ is even then no such multiplicity-one irrep can be found while maintaining $\tr(\rho\flip)=1$, since we find that, for example, the choice of $\lambda'=[1]$ can be linked to two irreps coming from the restriction of $\Rep_\nu$: $\nu'=[2]$ and $\nu'=[0]$. Therefore, the best estimate for a point with $\tr(\rho\flip)=1$ parameter is to use the point we get from the odd $m+1$ parameter.

Now we go on to minimise the $\tr(\rho\flip)$ parameter. If $m<d-1$, then we can choose the Young diagrams
\begin{equation}
    \nu=[1^{m}]\text{,} \qquad \lambda=[1^{m+1}]\text{,} \qquad \nu'=[1^m]\text{,} \qquad \lambda'=[1^{m+1}]\text{,}
\end{equation}
where we have introduced the notation $[1^m]\coloneqq[1,\ldots,1]$ denoting the $m$ tall column of single boxes. Note that these are special Young diagrams in that the restriction of $\U(d)$ irreps labelled by such Young diagrams lead to a single $\Ort(d)$ irrep labelled by the same Young diagram, with the possible need of modification. Thus, them being multiplicity-one is guaranteed. Furthermore, the modification rule is trivial if $m>k$ we simply have $[1^{m}]\to[1^{d-m}]^*$. However, this does not change the fact that this ansatz contributes the point
\begin{equation}
    \left(-1,0\right)\text{.}
\end{equation}

If $m\geq d-1$ and $m-d$ is odd, then a natural ansatz to minimize the $\tr(\rho\flip)$ parameter is
\begin{equation}
    \nu=[m-d+2,1^{d-2}]\text{,} \qquad \lambda=[m-d+2,1^{d-1}]\text{,} \qquad \nu'=[1^{d-1}]\text{,} \qquad \lambda'=[1^{d}]\text{.}
\end{equation}
This case is much less straightforward than the one before, as one needs a careful analysis of the diagrams appearing in the restrictions and their multiplicities. Furthermore, the modification rules also need to be taken into account. However, the reasoning is the same as before: if $m-d$ is odd, then $m-d+2$ is odd as well, leading to a single, multiplicity one $\Ort(d)$ irrep, labelled by $\nu'=[1^{d-1}]$ in the restriction of $\nu=[m-d+2,1^{d-2}]$. Then, from the Pieri-type rule, we find a single $\lambda'=[1^{d}]$. This contributes the point
\begin{equation}
    \left(\frac{1-d}{m},0\right)\text{.}
\end{equation}
However, if $m-d$ is even, then there is no such multiplicity-one irrep. Thus, again, our best estimate for a point with $\tr(\rho\bb)=0$ parameter is to use the point we get from the odd $m+1$ parameter case.

The final step in completing the estimate is to find a multiplicity-one irrep with non-zero $\tr(\rho\bb)$ parameter but a minimal $\tr(\rho\flip)$ parameter. Here we focus on a set-up such that $\lambda'=[0]$ appears with multiplicity-one. Therefore, if $m<2d-2$ and $m$ is odd, we choose the following Young diagrams
\begin{equation}
    \nu=[2^{(m-1)/2},1]\text{,} \qquad \lambda=[2^{(m+1)/2}]\text{,} \qquad \nu'=[1]\text{,}\qquad \lambda'=[0]\text{.}
\end{equation}
From the restriction rules, it is apparent that $\nu'=[1]$ appears with multiplicity one in the restriction of $\nu$. Moreover, $\lambda'=[0]$ also appears in the restriction of $\lambda$ with multiplicity one. Thus, we have the point
\begin{equation}
    \left(\frac{3-m}{2m},\frac{2d-m+1}{2dm}\right)\text{.}
\end{equation}

\vspace{-0.15cm}

Note that the $\tr(\rho\flip)$ parameter is not absolutely minimal, but this is needed for the non-zero $\tr(\rho\bb)$ parameter to exist.
If $m$ is even, then, again, we run into multiplicities with this set-up. Thus, the best estimate for a point with such a $\tr(\rho\bb)$ parameter is to use the point we get from the odd $m+1$ parameter.

If $m\geq 2d-2$ and $m$ is odd, then we use the trick we used before and choose the Young diagrams
\begin{equation}
    \nu=[m+3-2d,2^{d-2},1]\text{,} \qquad \lambda=[m+3-2d,2^{d-1}]\text{,} \qquad \nu'=[1]\text{,}\qquad \lambda'=[0]\text{,}
\end{equation}
contributing the point
\begin{equation}
    \left(\frac{2-d}{m},\frac{1}{dm}\right)\text{.}
\end{equation}
Again, if $m$ is even, the best estimate for such a point is to use the point we get from the odd $m+1$ parameter.

Altogether, we get the following points:
\begin{align}
    \text{For $m\geq 1$:}\qquad &(1,0)\text{,}\\
    \text{For $m\geq 1$:}\qquad &\begin{cases} \text{if $m$ is odd:} & \left(1,\frac{m+d-1}{dm}\right)\text{,}\\ \text{if $m$ is even:} & \left(1,\frac{m+d}{d(m+1)}\right)\text{,}\end{cases}\\
    \text{For $m<d-1$:} \qquad &(-1,0)\text{,}\\
    \text{For $m\geq d-1$:}\qquad &\begin{cases} \text{if $m-d$ is odd:} & \left(\frac{1-d}{m},0\right)\text{,}\\ \text{if $m-d$ is even:} & \left(\frac{1-d}{m+1},0\right)\text{,}\end{cases}\\
    \text{For $m<2d-2$:}\qquad &\begin{cases}\text{if $m$ is odd:} & \left(\frac{3-m}{2m},\frac{2d-m+1}{2dm}\right)\text{,} \\ \text{if $m$ is even:} & \left(\frac{2-m}{2(m+1)},\frac{2d-m}{2d(m+1)}\right)\text{,} \end{cases}\\
    \text{For $m\geq 2d-2$:}\qquad &\begin{cases}\text{if $m$ is odd:} & \left(\frac{2-d}{m},\frac{1}{dm}\right)\text{,}\\ \text{if $m$ is even} & \left(\frac{2-d}{m+1},\frac{1}{d(m+1)}\right)\text{.} \end{cases}
\end{align}

As noted in Section \ref{sec:estimate}, this estimate could certainly be improved. For example, in the $m=2$ case we do not obtain the multiplicity-one irrep related to the Young diagrams:
\begin{equation}
    \nu=[1,1]\text{,} \qquad \lambda=[2,1]\text{,}\qquad \nu'=[1,1]\text{,}\qquad \lambda'=[1]\text{.}
\end{equation}

\vspace{-0.15cm}

Nevertheless, we maintain that the estimate is a good first-order approximation as illustrated by the Figures in \ref{fig:1m_estimate_compared_d2_d10} where we compare the $m=3$ estimate to our exact solution.

\begin{figure}[H]
    \centering
    \begin{subfigure}{.49\textwidth}
      \centering
      \includegraphics[width=\linewidth]{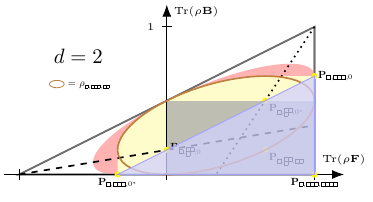}
    \end{subfigure}
    \begin{subfigure}{.49\textwidth}
      \centering
      \includegraphics[width=\linewidth]{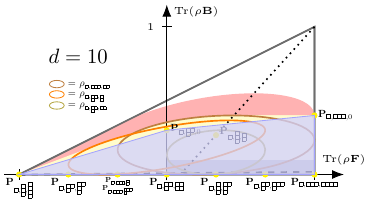}
    \end{subfigure}
    \caption{\small The set of Brauer states at $d=2$ and $d=10$ in the $\tr(\rho\flip) - \tr(\rho\bb)$ parametrisation appearing as the grey triangle. The subset of $(1,2)$-extendible Brauer states appears in red in the background, and the subset of $(1,3)$-extendible Brauer states appears in yellow in the foreground. The estimate of the set of $(1,3)$-extendible Brauer states appears as the blue polygon. The set of Werner states is illustrated as the dashed line, while the set of isotropic states as the dotted line. The set of separable Brauer states constitute the grey rectangle.}
  \label{fig:1m_estimate_compared_d2_d10}
\end{figure}

\setcounter{equation}{0}

\section{Proof of Theorem \ref{th:minmax}}\label{sec:app_proof}

In this appendix, we present the proof of Theorem \ref{th:minmax} that we state again for completeness:
\minmax*

For clarity, in this appendix, we mainly denote $\U(d)$ irrep labels by the Greek letter $\lambda$ and $\Ort(d)$ irrep labels by $\mu$.

\subsection{Lemmas}

In the following, we state and prove a number of lemmas related to $\U(d)$ and $\Ort(d)$ irreps which are needed to prove Theorem \ref{th:minmax}. We often refer to the dominance order of Young diagrams introduced in Reference~\cite{brylawski1973lattice}. Furthermore, we define the sum of Young diagrams as the sum of the corresponding rows of the two Young diagrams.

We start by stating lemmas for $\U(d)$ tensor irreps. There exists a so-called `dual' or `conjugate' Pieri rule, which is easy to prove and is known in the literature, see for example Chapter VI/6.~in Reference \cite{macdonald1998symmetric}.

\begin{lemma}[Dual Pieri rule] \label{lm:dual_pieri}
    Let $\Rep_\lambda$ and $\Rep_{[1^{N}]}$ be tensor irreps of $\U(d)$ labelled by Young diagrams $\lambda$ and $[1^{N}]$ of depth smaller or equal to $d$. Their tensor product decomposes as
    \begin{equation}
    \Rep_{\lambda}\ot\Rep_{[1^{N}]}\cong \bigoplus_{\lambda'} \Rep_{\lambda'}\text{,}
    \end{equation}
    where the summation goes through all Young diagrams $\lambda'$ one gets from $\lambda$ by appending $N$ number of boxes to it, no two in the same row, while keeping it a valid Young diagram.
\end{lemma}

The following is a trivial consequence of the dual Pieri rule:

\begin{corollary}\label{cor:largest_u_irrep_dominance}
    The largest Young diagram $\lambda'$ with respect to the dominance order of Young diagrams appearing as an irrep label in the tensor product $\Rep_{\lambda}\ot\Rep_{[1^N]}$ is $\lambda'=\lambda+[1^N]$.
\end{corollary}

\begin{proof}
    The statement is as follows
    \begin{equation}
        \lambda'=\lambda+[1^N]=[\lambda_1+1,\; \ldots,\; \lambda_N+1,\; \lambda_{N+1},\; \ldots,\; \lambda_d].
    \end{equation}
    
    From Lemma \ref{lm:dual_pieri} we also have that any irrep label $\tilde{\lambda}$ in the tensor product $\Rep_{\lambda}\ot\Rep_{[1^N]}$ comes from appending $N$ number of boxes to $\lambda$, no two in the same row while keeping it a valid Young diagram. The requirement for domination  of such an irrep label $\tilde{\lambda}$ by $\lambda'$ ($\tilde{\lambda}\trianglelefteq \lambda'$) is that for any integer $k\in[d]$ we must have:
    \begin{equation}
        \tilde{\lambda}_1+\ldots+\tilde{\lambda}_k \leq \lambda'_1+\ldots + \lambda'_k \text{.}
    \end{equation}
    
    However, since $\tilde{\lambda}\neq \lambda'$ there must be at least one row in the first $N$ rows of $\tilde{\lambda}$ that is missing an extra box at the end as it has been moved lower in the sequence of rows. Thus, trivially, $\lambda'$ will dominate $\tilde{\lambda}$.
\end{proof}

The following is a similar trivial lemma:

\begin{lemma}\label{lm:larger_u_gives_larger}
    Let $\lambda_1$ and $\lambda_2$ be two $n$-box Young diagrams such that $\lambda_1\triangleleft \lambda_2$ in dominance order. Then we have that $\lambda_1+[1^N]\triangleleft \lambda_2+[1^N]$ for any $N\in[d]$.
\end{lemma}

\begin{proof}
    The proof is trivial since if $\lambda_1\triangleleft \lambda_2$ then we have for all $k\in[d]$:
    \begin{equation}
        \sum_{i=1}^{k} (\lambda_1)_i \leq \sum_{i=1}^{k} (\lambda_2)_i\text{.}
    \end{equation}
    By adding $1$ to both sides in the first $N$ rows we do not change this relationship.
\end{proof}

Using these two lemmas and the one corollary, we can move on to prove an important lemma:

\begin{lemma}\label{lm:largest_u_irrep_big_tensor}
    Let $\lambda$ be an at most $d$-row Young diagram labelling a $\U(d)$ tensor irrep, and let $\{a_i\}_{i=1}^{d}$ be a set of non-negative integers. Let us take the following tensor product of representations:
    \begin{equation}
        \Rep_{\lambda}\ot\left(\bigotimes_{i=1}^{d} \Rep_{[1^{i}]}^{\ot a_{i}}\right)\text{.}
    \end{equation}
    
    The label $\lambda'$ of the $\U(d)$ irrep in this tensor product which is largest in the dominance order of Young diagrams is:
    \begin{equation}
        \lambda'=\lambda+\sum_{i=1}^{d} a_i[1^{i}]\text{.}
    \end{equation}
\end{lemma}

\begin{proof}
    We use proof by contradiction. Let us assume that $\tilde{\lambda}\neq\lambda'$ is an irrep label in the above defined tensor product such that $\lambda'\trianglelefteq \tilde{\lambda}$ and it similarly dominates every other label as well. Without loss of generality, let us assume that $a_{d}\geq 1$ and let us detach the last tensor product in the above equation:
    \begin{equation}
        \left[\Rep_{\lambda}\ot\left(\bigotimes_{i=1}^{d-1} \Rep_{[1^{i}]}^{\ot a_{i}}\right) \ot \Rep_{[1^{d}]}^{\ot (a_{d}-1)}\right] \ot \Rep_{[1^{d}]}\text{.}
    \end{equation}
    
    From Corollary \ref{cor:largest_u_irrep_dominance} we know that if any $\Rep_{\gamma}$ is an irrep in the irrep decomposition of the tensor product in brackets, then the $\U(d)$ irrep in $\Rep_{\gamma}\ot\Rep_{[1^{d}]}$ which has the largest Young diagram label in dominance order is $\gamma+[1^{d}]$. Since $\tilde{\lambda}$ is the largest in the dominance order, whatever Young diagram $\tilde{\gamma}$ it is related to, we must have that $\tilde{\lambda}=\tilde{\gamma}+[1^d]$, otherwise $\tilde{\lambda}$ would not be the largest.
    
    Thus, we have that if $\tilde{\lambda}\neq\lambda'$, then we must have that $\tilde{\lambda}-[1^{d}]\neq \lambda'-[1^{d}]$. Now, assuming that the tensor product has enough factors, we step back one more time and examine the label $\tilde{\delta}$ such that $\Rep_{\tilde{\delta}}\ot \Rep_{[1^{N}]}$ is the tensor product in which $\Rep_{\tilde{\gamma}}$ is an irrep, for some $N\in[d]$. Corollary \ref{cor:largest_u_irrep_dominance} states that if $\tilde{\gamma}\neq \tilde{\delta}+[1^{N}]$, then $\tilde{\delta}+[1^{N}]\triangleright\tilde{\gamma}$, and thus from Lemma \ref{lm:larger_u_gives_larger} we find that $\tilde{\delta}+[1^{N}]+[1^{d}]\triangleright \tilde{\lambda}$ which is a contradiction. This argument can be repeated any number of times leading to:
    \begin{equation}
        \tilde{\lambda}-\sum_{i=1}^{d} a_{i}[1^{i}] \neq \lambda'-\sum_{i=1}^{d} a_i[1^{i}]=\lambda\text{,}
    \end{equation}
    which is a contradiction.
\end{proof}

Now we move on to lemmas related to $\Ort(d)$ irreps. The following is a trivial result that considers two Young diagrams, one of which contains the other:

\begin{lemma}\label{lm:largercas}
    Let $\mu$ and $\mu'$ be at most $k$-row Young diagrams labelling $\Ort(d)$ irreps, where $d=2k$ or $d=2k+1$, such that $\mu_i \leq \mu'_i$ for all $i\in[k]$. Then for the quadratic Casimir eigenvalues for the corresponding $\So(d)$ irreps we have:
    \begin{equation}
        \xi(\mu)\leq \xi(\mu')\text{.}
    \end{equation}
\end{lemma}

\begin{proof}
    Let $\mu$ have $q\leq k$ rows, and $\mu'$ have $q\leq q'\leq k$ rows. Furthermore, let $z_i\coloneqq\mu'_i-\mu_i\geq 0$. For the Casimir eigenvalues we have:
    \begin{equation}
        \xi(\mu)=\sum_{i=1}^{q} 2\mu_i(\mu_i + d - 2i)\text{,}
    \end{equation}
    
    \begin{align}
        \xi(\mu')&=\sum_{i=1}^{q'} 2\mu'_i(\mu'_i+d-2i)= \sum_{i=1}^{q'} 2(\mu_i+z_i)(\mu_i+z_i + d - 2i)\\
        &=\sum_{i=1}^{q} 2\mu_i(\mu_i+d-2i) + \sum_{i=1}^{q'} 2z_i(2\mu_i+z_i+d-2i)\\
        &\geq \xi(\mu)\text{,}
    \end{align}
    where we have used that since $d\geq 2q'$ we have that $d-2i\geq 0$ for all $i$.
\end{proof}

We have the following, trivial corollary:

\begin{corollary}
    If $\lambda$ is an at most $k$-row Young diagram labelling a $\U(d)$ irrep, then the $\Ort(d)$ irrep in the restriction $\Rep_{\lambda}\big|_{\Ort(d)}$ related to the $\So(d)$ irrep(s) with the largest quadratic Casimir eigenvalue is $\Repp_{\lambda}$.
\end{corollary}

\begin{proof}
    Given that $\lambda$ has at most $k$ rows, no modification rules are ever needed when applying the branching rules described in Appendix \ref{sec:app_branch}. The branching rules require that any irrep label in the restriction of $\Rep_{\lambda}$ be contained within $\lambda$. Thus, given that $\mu=\lambda$ is possible by choosing $\kappa=0$, we have that necessarily every other irrep in the irrep decomposition is related to $\So(d)$ irreps with smaller quadratic Casimir eigenvalues.
\end{proof}

We can generalise a previous result of ours to the $\Ort(d)$ case. Note that the lemma is true even if $\mu$ or $\mu'$ does not label an $\Ort(d)$ irrep.

\begin{lemma} \label{lm:ort_dominance}
    Let $\mu$ and $\mu'$ be at most $d$-row, $n$-box Young diagrams labelling $\Ort(d)$ irreps. These may be non-standard labels that need modification. If $\mu\triangleleft\mu'$ with respect to the dominance order of Young diagrams, then we have that $\xi(\mu)<\xi(\mu')$ for the quadratic Casimir eigenvalues of the related $\So(d)$ irreps.
\end{lemma}

\begin{proof}
    The proof is based on a result in \cite{jakab2022extendibility}, which states the same thing for $\U(d)$ irrep labels. We simply use the fact that if $\mu$ and $\mu'$ are valid, but possibly non-standard, labels of $\Ort(d)$ irreps, with $n$ boxes then, according to Equation \eqref{eq:chixi} we have:
    \begin{equation}
        \xi(\mu)=2(\chi(\mu)-n) \text{,} \qquad \xi(\mu')=2(\chi(\mu')-n)\text{,}
    \end{equation}
    where $\chi(\mu)$ is the $\ualg(d)$ quadratic Casimir eigenvalue related to the $\U(d)$ irrep labelled by $\mu$. From this and from the result in \cite{jakab2022extendibility}, the statement follows trivially.
\end{proof}

A dual Pieri-type rule can also be formulated for $\Ort(d)$, see Theorem 5.4.~in Ref.~\cite{sundaram1990orthogonal} or Theorem 4.1.~in Ref.~\cite{okada2016pieri} for a more modern treatment.

\begin{lemma}[Dual Pieri-type rule] \label{lm:dual_pieri_ort}
    Let $\Repp_\mu$ and $\Repp_{[1^{N}]}$ be tensor irreps of $\Ort(d)$ ($d=2k$ or $d=2k+1$) labelled by Young diagrams $\mu$ and $[1^{N}]$ of depth smaller or equal to $k$. Their tensor product decomposes as
    \begin{equation}\label{eq:dual_pieri_ort}
        \Repp_{\mu}\ot\Repp_{[1^{N}]}\cong \bigoplus_{\mu'} \Repp_{\mu'}\text{,}
    \end{equation}
    where the sum is over the multiset of $\mu'$ obtained from $\mu$ by removing $0 \leq i \leq N$ number of boxes each from different rows (while keeping it a Young diagram) and then adding $N-i$ number of boxes each to different rows (while keeping it a Young diagram). Equation \eqref{eq:dual_pieri_ort} extends naturally to the cases with associate irreps.
\end{lemma}

Putting some of the previous lemmas together, we obtain the following lemma:

\begin{lemma}\label{lm:largest_o_irrep_casimir}
    Let $\Repp_{\mu}$ and $\Repp_{[1^{N}]}$ be tensor irreps of $\Ort(d)$ ($d=2k$ or $d=2k+1$) labelled by Young diagrams $\mu$ and $[1^{N}]$ of depth smaller or equal to $k$. The $\Ort(d)$ irrep in the tensor product $\Repp_\mu\ot\Repp_{[1^N]}$ related to the $\So(d)$ irrep(s) with largest quadratic Casimir eigenvalue is labelled by $\mu+[1^N]$.
\end{lemma}

\begin{proof}
    From the dual Pieri-type rule (Lemma \ref{lm:dual_pieri_ort}) we have that for any fixed $i$ such that $0\leq i \leq N$ we may remove $i$ number of boxes from $\mu$, each from different rows. This results in the $\left\{\nu\in\Yng(|\mu|-i,k) \;\middle | \; c^{\mu}_{\nu,[1^{i}]}\neq 0\right\}$ set of Young diagram which are all contained in $\mu$. After the removal we add $N-i$ number of boxes to such a Young diagram $\nu^{(i)}$ each to different rows, which can be described as the labels of the resulting irreps in the $\U(d)$ irrep tensor product $\Rep_{\nu^{(i)}}\ot\Rep_{[1^{N-i}]}$. However, from Corollary \ref{cor:largest_u_irrep_dominance} we have that for any such $\nu^{(i)}$, the Young diagram which is greatest in the dominance order in $\Rep_{\nu^{(i)}}\ot\Rep_{[1^{N-i}]}$ is, $\nu_i+[1^{N-i}]$. Finally, given Lemma \ref{lm:ort_dominance} this means that for any $\nu_i$, the Young diagram related to the $\So(d)$ irrep(s) with largest quadratic Casimir eigenvalue is $\nu_i+[1^{N-i}]$. This is also necessarily contained in $\mu+[1^N]$, which proves, using Lemma \ref{lm:largercas}, that $\mu+[1^N]$ is related to the $\So(d)$ irrep(s) with largest quadratic Casimir eigenvalue.
\end{proof}

Finally, we have our main lemma:
\begin{lemma}\label{lm:largest_o_irrep_big_tensor}
    Let $\mu=$ be an at most $k$-row Young diagram labelling an $\Ort(d)$ irrep where $d=2k$ or $d=2k+1$. Let $\{a_i\}_{i=1}^{k}$ be a set of non-negative integers. Let us take the following tensor product of representations:
    \begin{equation}
        \Repp_{\mu}\ot\left(\bigotimes_{i=1}^{k} \Repp_{[1^{i}]}^{\ot a_{i}}\right)\text{.}
    \end{equation}
    
    The label $\mu'$ of the $\Ort(d)$ irrep in this tensor product related to the $\So(d)$ irrep(s) with the largest quadratic Casimir eigenvalue is:
    \begin{equation}
        \mu'=\mu+\sum_{i=1}^{k} a_i[1^{i}]\text{.}
    \end{equation}
\end{lemma}

\begin{proof}
    The proof is identical to that of Lemma \ref{lm:largest_u_irrep_big_tensor} except that we need to invoke Lemma \ref{lm:largest_o_irrep_casimir}.
\end{proof}

Some additional lemmas are needed for proving the minimal case. We mainly show that the column-type Young diagrams are related to smallest quadratic Casimir eigenvalues and that they cannot be eliminated with the modification rules.

\begin{lemma}\label{lm:append_to_column}
    Let $[1^{N}]$ be an $N$-tall column Young diagram as a (possibly non-standard) label for an $\Ort(d)$ irrep, such that $N\leq d$. Let $\mu'=[1^{N}]+\mu$ be a Young diagram such that $\mu$ is a Young diagram with at most $N$ rows. Let $\mu'$ also be a valid (possibly non-standard) label for an $\Ort(d)$ irrep. Then $\xi([1^{N}])\leq \xi(\mu')$.
\end{lemma}

\begin{proof}
    We can easily show the following:
    \begin{equation}
        \xi(\mu')=\xi([1^{N}]) + 4|\mu| + \xi(\mu),
    \end{equation}
    where $|\mu|$ is the number of boxes in the Young diagram $\mu$. We need to show that $4|\mu|+\xi(\mu)\geq 0$. Even if $\mu$ is itself not a valid label for an $\Ort(d)$ irrep we can still use the identity from Equation \eqref{eq:chixi}:
    \begin{equation}
        \xi(\mu)=2(\chi(\mu)-|\mu|)\text{.}
    \end{equation}
    
    Fortunately, $\mu$ is necessarily a valid label for a $\U(d)$ irrep, thus $\chi(\mu)\geq 0$. Therefore, we find:
    \begin{equation}
        \xi(\mu')= \xi([1^{N}]) + 2|\mu| + 2\chi(\mu)\geq \xi([1^{N}])\text{.}
    \end{equation}
\end{proof}

\begin{lemma}\label{lm:moredepth}
    Let $[1^{N}]$ be an $N$-tall column Young diagram as a (possibly non-standard) label for an $\Ort(d)$ irrep, such that $N\leq d$ and let $z$ be a non-negative integer such that $N+z\leq d$. Let $\mu'=[1^{N+z}]+\mu$ be a Young diagram such that $\mu$ is a Young diagram with $z$ number of boxes. Let $\mu'$ also be a valid (possibly non-standard) label for an $\Ort(d)$ irrep. Then $\xi([1^{N}])\leq \xi(\mu')$.
\end{lemma}

\begin{proof}
    We can easily show the following:
    \begin{equation}
        \xi(\mu')=\xi([1^{N}])+\xi(\mu)+2z(2+d-z-2N)\text{.}
    \end{equation}
    
    Lemma \ref{lm:ort_dominance} states that $\xi(\mu)\geq \xi([1^{z}])$, since the Young diagram $[1^{z}]$ is lowest in dominance order between the Young diagrams of $z$ boxes. Note that even if the Young diagram $\mu$ is not a label for an $\Ort(d)$ irrep the relationship is true. Thus we have:
    \begin{equation}
        \xi(\mu')\geq\xi([1^{N}])+\xi([1^{z}])+2z(2+d-z-2N)=\xi([1^{N}])+2z(2+2d-2z-2N)\geq \xi([1^{N}])\text{,}
    \end{equation}
    where we have used that $N+z\leq d$.
\end{proof}

\begin{lemma}\label{lm:nocancel}
    Let $\mu$ be a Young diagram which is a non-standard label for an $\Ort(d)$ irrep, where $d=2k$ or $d=2k+1$, such that it has at least two columns ($\mu_1=\ell>1$) and its first column has length $k<N\leq d$. Then, after modifications, the resulting diagram also has $\ell$ number of columns.
\end{lemma}

\begin{proof}
    The modification rules described in Appendix \ref{sec:app_mod} state that an $h\coloneqq 2N-d$ long continuous boundary hook has to be removed from $\mu$. This procedure might need to be done repeatedly, but it is enough to examine the first iteration. To be able to lose the last ($\ell$-th) column, we need to remove $N+\ell-1$ boxes from $\mu$. Thus, we need:
    \begin{equation}
        2N-d\overset{!}{=}N+\ell-1 \quad \Longrightarrow \quad N-d\overset{!}{=}\ell-1\geq 1\text{,}
    \end{equation}
    which is impossible since $N\leq d$. If any subsequent modification is needed this argument can be repeated for some $\mu'$.
\end{proof}

\subsection{Proof}

Now we move on to the proof of Theorem \ref{th:minmax}.

\begin{proof}
First we prove the result for the Young diagram related to the $\So(d)$ irrep(s) with minimal quadratic Casimir eigenvalue:

Given a Young diagram $\lambda$ labelling a $\U(d)$ irrep, the $\Ort(d)$ irreps in its restriction are given by the branching rules described in Appendix \ref{sec:app_branch}:
\begin{equation}
        \Rep_{\lambda}\Big\rvert_{\Ort(d)} \cong \bigoplus_{\mu} \left(\sum_{\kappa} c_{2\kappa,\mu}^{\lambda}\right) \Repp_{\mu}\text{,}
\end{equation}
where $\mu$ might be a non-standard label for an $\Ort(d)$ irrep. Given this formula, a natural ansatz to consider is when $\kappa$ is maximal ($\kappa=[x_1,\ldots,x_d]$) and thus $\mu$ has minimal number of boxes. In our case, this means that we consider Young diagrams with $Y\coloneqq\sum_{i=1}^{d}y_i$ number of boxes. The $Y$-tall column Young diagram $[1^{Y}]$ always appears as an irrep when $\kappa$ is maximal and, according to Lemma \ref{lm:ort_dominance}, it has the smallest quadratic Casimir eigenvalue among $Y$-box Young diagrams. Note that while sometimes modification rules might cancel out some $\Ort(d)$ irreps in the restriction of a $\U(d)$ irrep, we can show that it is impossible to cancel out one-column Young diagrams. Cancelling out a one-column Young diagram would require a non-standard, at least two column Young diagram to appear so that the modification rules described in Appendix \ref{sec:app_mod} allow for a modification with negative sign. However, Lemma \ref{lm:nocancel} shows that any multi-column Young diagram will remain multi-column even after modification rules, thus a one-column Young diagram will never be cancelled out.

Therefore, our ansatz surely appears in the restriction. It remains to prove that it has minimal quadratic Casimir eigenvalue. If $\mathrm{depth}(\lambda)\leq k$ then, by Lemmas \ref{lm:largercas} and \ref{lm:ort_dominance} our ansatz necessarily has minimal quadratic Casimir eigenvalue. However, if $\mathrm{depth}(\lambda)>k$ then it might happen that by taking a non-maximal $\kappa$, a Young diagram $\mu$ might be achieved that, after modification, leads to a smaller quadratic Casimir eigenvalue.

However, if $\mu$ is such that its first column remains $Y$ tall, then by Lemma \ref{lm:append_to_column} we have that the corresponding irrep has larger quadratic Casimir eigenvalue than the one corresponding to the ansatz $[1^{Y}]$. Thus, the only remaining possibility is to modify $\kappa$ such that the resulting $\mu$ has depth larger than $Y$. This can only happen if $\mathrm{depth}(\lambda)>Y$, otherwise we are done. If $\mathrm{depth}(\lambda)>Y$, then there must be rows with even number of boxes in them. The Littlewood-Richardson formula states that the only way $c_{2\kappa,\mu}^{\lambda}>0$ for a Young diagram $\mu$ having depth larger than $Y$ is to choose $\kappa$ in such a way that the rows with even number of boxes are not totally filled up. However, given that $c_{2\kappa,\mu}^{\lambda}$ contains $2\kappa$, we necessarily have extra boxes appearing in $\mu$, which, by the Littlewood-Richardson formula must appear at least in the second column of $\mu$. However, Lemma \ref{lm:moredepth} states that the simplest such Young diagrams correspond to larger quadratic Casimir eigenvalues than $[1^{Y}]$, and more complex Young diagrams can all be reduced to these simplest cases. 

Now we move on to prove the result for the Young diagram related to the $\So(d)$ irrep(s) with maximal quadratic Casimir eigenvalue:

We trivially have that the $\U(d)$ irrep $\Rep_{\lambda}$ can be understood as being in the irrep decomposition of the tensor product of the irreps related to its columns. Let
\begin{equation}
\lambda=a_{d}[1^{d}] + a_{d-1}[1^{d-1}] + \ldots + a_{1}[1]\text{,} 
\end{equation}
where $[1^N]$ denotes the $N$-tall one-box column and the addition is meant elementwise, describing the concatenation of the respective columns to the right side of the Young diagram. We have also introduced the non-negative integers $a_i$ that describe the number of exactly $i$-tall columns in the Young diagram $\lambda$.

Note that we have:
\begin{equation}
\lambda_{i}=\sum_{j=i}^{d} a_{j} \qquad \text{and} \qquad a_j=\begin{cases} \lambda_j-\lambda_{j+1}, & \text{if $j> d$,}\\ \lambda_d, & \text{if $j=d$.} \end{cases}
\end{equation}

Illustratively, using our previous example, we have:
\begin{equation}
[5,4,4,2,1]=\ydiagram{5,4,4,2,1}=\; \ydiagram{1,1,1,1,1}\; +\; \ydiagram{1,1,1,1}\; + \; 2\; \ydiagram{1,1,1}\; +\; \ydiagram{1}=[1^{5}] + [1^{4}] + 2[1^{3}] + [1]\text{.}
\end{equation}

Our statement was that we can understand $\Rep_{\lambda}$ as being in the irrep decomposition of the following tensor product:
\begin{equation}\label{eq:column_tensor}
\Rep_\lambda \in \Rep_{[1^{d}]}^{\ot a_{d}} \ot \Rep_{[1^{d-1}]}^{\ot a_{d-1}} \ot \cdots \ot \Rep_{[1]}^{\ot a_{1}}\text{,}
\end{equation}
where we have used a slight abuse of notation.

For the $\Ort(d)$ irreps in the restriction we also have:
\begin{equation}
\Rep_{\lambda}\bigg|_{\Ort(d)}\in \left(\Rep_{[1^{d}]}\bigg|_{\Ort(d)}\right)^{\ot a_{d}} \ot \left(\Rep_{[1^{d-1}]}\bigg|_{\Ort(d)}\right)^{\ot a_{d-1}} \ot \cdots \ot \left(\Rep_{[1]}\bigg|_{\Ort(d)}\right)^{\ot a_{1}}\text{.}
\end{equation}

From the branching rules described in Appendix \ref{sec:app_branch}, we have that for any $\U(d)$ irrep labelled by an $i$-tall single-box column Young diagram, we have a singular $\Ort(d)$ irrep in its restriction:
\begin{equation}
\Rep_{[1^{i}]}\bigg|_{\Ort(d)}\cong \Repp_{[1^{i}]}\cong \begin{cases}\Repp_{[1^{i}]}\text{,} & \text{if $i\leq k$,}\\ \Repp_{[1^{d-i}]^*}\text{,} & \text{if $i>k$.} \end{cases}
\end{equation}

Thus, we trivially have:
\begin{equation}
\Rep_{\lambda}\bigg|_{\Ort(d)}\in \Repp_{[1^{d}]}^{\ot a_d} \ot \Repp_{[1^{d-1}]}^{\ot a_{d-1}} \ot \cdots \ot \Repp_{[1]}^{\ot a_1}\cong \Repp_{0^*}^{\ot a_d}\ot \Repp_{[1]^*}^{\ot a_{d-1}}\ot \cdots \ot \Repp_{[1]}^{\ot a_1}\text{.}
\end{equation}

From Lemma \ref{lm:largest_o_irrep_big_tensor} we find that the $\Ort(d)$ irrep in the above tensor product with the largest quadratic Casimir eigenvalue has the following as its label:
\begin{equation}
\mu=\begin{cases}
a_k [1^k] + (a_{k-1}+a_{k+1})[1^{k-1}] + \ldots + (a_{1}+a_{d-1})[1]\text{,} & \text{if $d=2k$,}\\
(a_k + a_{k+1})[1^k] + (a_{k-1}+a_{k+2})[1^{k-1}] + \ldots + (a_{1}+a_{d-1})[1]\text{,} & \text{if $d=2k+1$.}
\end{cases}
\end{equation}

From this we can calculate that for the even case we have a telescopic sum:
\begin{align}
\mu_i=&a_k+\sum_{j=1}^{k-i} (a_{k-j}+a_{k+j})=\lambda_k-\lambda_{k+1}+\sum_{j=1}^{k-i} (\lambda_{k-j}-\lambda_{k-j+1}+\lambda_{k+j}-\lambda_{k+j+1})\\
=&\lambda_{k}-\lambda_{k+1} + (\lambda_{k-1} - \lambda_{k} + \lambda_{k+1} - \lambda_{k+2}) + (\lambda_{k-2}-\lambda_{k-1}+\lambda_{k+2}-\lambda_{k+3})\nonumber\\
&+ \ldots + \\
&+ (\lambda_{i+1}-\lambda_{i+2} + \lambda_{d-i-1} + \lambda_{d-i}) + (\lambda_{i}-\lambda_{i+1} + \lambda_{d-i} - \lambda_{d-i+1})\nonumber\\
=& \lambda_{i}-\lambda_{d-i+1}\text{,} 
\end{align}
and for the odd case we have a similar telescopic sum:
\begin{align}
\mu_i=&\sum_{j=0}^{k-i} (a_{k-j}+a_{k+1+j})=\sum_{j=0}^{k-i} (\lambda_{k-j}-\lambda_{k-j+1} + \lambda_{k+1+j} - \lambda_{k+1+j+1})\\
=& (\lambda_{k}-\lambda_{k+1}+\lambda_{k+1}-\lambda_{k+2}) + (\lambda_{k-1}-\lambda_{k}+\lambda_{k+2}-\lambda_{k+3})\nonumber\\
& + \ldots +\\
&+ (\lambda_{i+1}-\lambda_{i+2}+\lambda_{d-i-1}-\lambda_{d-i}) + (\lambda_{i}-\lambda_{i+1}+\lambda_{d-i}-\lambda_{d-i+1})\nonumber\\
=&\lambda_{i}-\lambda_{d-i+1}\text{.}
\end{align}

It only remains to prove that the $\Ort(d)$ irrep $\Repp_{\mu}$ is indeed an irrep in the restriction $\Rep_{\lambda}\big|_{\Ort(d)}$ since many other $\U(d)$ irreps appear in the tensor product described in Equation \eqref{eq:column_tensor}. Consider the following: according to Lemma \ref{lm:largest_u_irrep_big_tensor}, $\lambda$ is the largest Young diagram in dominance order in the tensor product of Equation \eqref{eq:column_tensor}.

Now, let us imagine that the above defined Young diagram $\mu$ labels an irrep in the restriction of some other $\U(d)$ irrep than $\Rep_{\lambda}$, say $\Rep_{\tilde{\lambda}}$. By the previous reasoning, $\tilde{\lambda}\triangleleft \lambda$. Now, let us go through the same reasoning for $\Rep_{\tilde{\lambda}}$: we can write it as the tensor product of the irreps labelled by its columns, and we find that the $\tilde{\mu}$ label, found in the irrep decomposition of the tensor product and related to the $\So(d)$ irrep with highest quadratic Casimir eigenvalue, is:
\begin{equation}
\tilde{\mu}=[\tilde{\lambda}_1-\tilde{\lambda}_d,\ldots]\text{.}
\end{equation}

We shall prove that $\xi(\tilde{\mu})< \xi(\mu)$. Given any Young diagram $\lambda$ of at most $d$ rows we can split it into two, at most $k$-row Young diagrams. We define the upper part as $\lambda_{\text{U}}\coloneqq[\lambda_1,\ldots,\lambda_k]$, and the lower part as $\lambda_{\text{L}}\coloneqq[\lambda_{k+1},\ldots,\lambda_{2k}]$ if $d=2k$ and $\lambda_{\text{L}}\coloneqq [\lambda_{k+2},\ldots, \lambda_{2k+1}]$ if $d=2k+1$. Our result says that $\mu=\lambda_{\text{U}}-(\lambda_{\text{L}})^{\mathrm{r}}$, and $\tilde{\mu}=\tilde{\lambda}_{\text{U}}-(\tilde{\lambda}_{\text{L}})^{\mathrm{r}}$, where the $r$ in the upper index means the reversal of the rows of the Young diagram.

Let us examine what $\tilde{\lambda}\triangleleft \lambda$ means for $\tilde{\mu}$ and $\mu$. Given that $\tilde{\lambda}\triangleleft \lambda$, we have that to get to $\tilde{\lambda}$ at least one box from $\lambda$ has been moved to some lower row. Let us focus on this simplest case as the general case can be inferred from repeating the argument. We have three general possibilities and two extra possibilities for the odd dimensional cases:
\begin{enumerate}
\item A box has been moved from a row in $\lambda_{\text{U}}$ to a lower row in $\lambda_{\text{U}}$. In this case we have that $\tilde{\lambda}_{\text{U}}\triangleleft \lambda_{\text{U}}$ and $\tilde{\lambda}_{\text{L}}=\lambda_{\text{L}}$ which leads to $\tilde{\mu}\triangleleft \mu$ given that $\mu=\lambda_{\text{U}}-(\lambda_{\text{L}})^{\mathrm{r}}$, and $\tilde{\mu}=\tilde{\lambda}_{\text{U}}-(\tilde{\lambda}_{\text{L}})^{\mathrm{r}}$.
\item A box has been moved from a row in $\lambda_{\text{L}}$ to a lower row in $\lambda_{\text{L}}$. In this case we have that $\tilde{\lambda}_{\text{L}}\triangleleft \lambda_{\text{L}}$ and $\tilde{\lambda}_{\text{U}}=\lambda_{\text{U}}$. Note that by broadening the definition of dominance order to non-standard Young diagrams we find the following relationships: $-\lambda_{\text{L}}\triangleleft -\tilde{\lambda}_{\text{L}}$ and $(\lambda_{\text{L}})^{\text{r}}\triangleleft (\tilde{\lambda}_{\text{L}})^{\text{r}}$ together leading to $-(\tilde{\lambda}_{\text{L}})^{\text{r}}\triangleleft -(\lambda_{\text{L}})^{\text{r}}$ which leads to $\tilde{\mu}\triangleleft \mu$.
\item A box is moved from $\lambda_{\text{U}}$ to $\lambda_{\text{L}}$. In this case we have that $\lambda_{\text{U}}$ contains $\tilde{\lambda}_{\text{U}}$ and $\tilde{\lambda}_{\text{L}}$ contains $\lambda_{\text{L}}$, leading to the fact that $\mu$ contains $\tilde{\mu}$.
\item In odd dimensions, a box can be moved from $\lambda_{\text{U}}$ to $\lambda_{k+1}$, the row which appears in neither halves. In this case $\lambda_{\text{U}}$ contains $\tilde{\lambda}_{\text{U}}$ while $\tilde{\lambda}_{\text{L}}=\lambda_{\text{L}}$, leading to the fact that $\mu$ contains $\tilde{\mu}$.
\item In odd dimensions, a box can be moved from $\lambda_{k+1}$, the row which appears in neither halves, to $\lambda_{\text{L}}$. In this case $\tilde{\lambda}_{\text{L}}$ contains $\lambda_{\text{L}}$ while $\tilde{\lambda}_{\text{U}}=\lambda_{\text{U}}$, leading to the fact that $\mu$ contains $\tilde{\mu}$. 
\end{enumerate}

Thus, in all 3 cases we have $\xi(\tilde{\mu})< \xi(\mu)$ by either invoking Lemma \ref{lm:largercas} or \ref{lm:ort_dominance}. Thus, we have that the original Young diagram $\lambda$ was largest in dominance order, and that the related Young diagram $\mu$ has the largest related quadratic Casimir eigenvalue which is not found in the decomposition of any $\U(d)$ irrep labelled by a Young diagram $\tilde{\lambda}$ which is lower in dominance order compared to $\lambda$. Therefore, $\Repp_{\mu}$ must be in the restriction of $\Rep_{\lambda}$.
\end{proof}

\setcounter{equation}{0}

\section{The limiting shape for de~Finetti extendibility}\label{sec:app_definetti_infty}

In this appendix, we present, for a given dimension $d$, the set of parameters for Brauer states that are de~Finetti extendible for all $n$.

\subsection{A first look at the limiting shape}

From the de~Finetti theorem, we know that these are states which are the convex combination of product states $\sigma\ot\sigma$. Note that these `de~Finetti states' form a convex set, with the product states on the boundary. Note also that, given a general de~Finetti state, it can be turned into a Brauer state by twirling it with the appropriate group elements without leaving the set of de~Finetti states. One can use the fact that the expectation value of the $\flip$ and $\bb$ operators does not change before or after the twirl, and the fact that taking the expectation value is a linear operation, thus preserving convex sets, to realise that one needs only to calculate the expectation values of product states to find the boundary of the set of parameters describing Brauer states that are de~Finetti extendible for all $n$:
\begin{align}
    f(\sigma)&=\tr(\flip \sigma \ot \sigma)\text{,}\\
    b(\sigma)&=\tr(\bb \sigma \ot \sigma)\text{.}
\end{align}

Given the pairs $f(\sigma),b(\sigma)$, one should be able to find the boundary in question. Let us have the following eigendecomposition for $\sigma$:
\begin{equation}
    \sigma=\sum_{i=1}^{d} p_i \proj{\psi_i}\text{,}
\end{equation}
where we have used Dirac notation to describe the orthogonal projectors onto $\psi_i$ and $\sum_{i=1}^dp_i=1$. A simple calculation leads to the following:
\begin{align}
    f(\sigma)&=\tr(\flip \sigma \ot \sigma)=
    \tr(\sigma^2)=\sum_{i=1}^{d} p_i^2\text{,}\\
    b(\sigma)&=\tr(\bb \sigma \ot \sigma)= \frac{1}{d} \tr(\sigma\sigma^T)= \frac{1}{d}\tr(\sigma\overline{\sigma})=\frac{1}{d}\sum_{i,j=1}^{d} p_ip_j |\inn{\psi_i}{\overline{\psi_j}}|^2\text{,}
\end{align}
where we have defined:
\begin{equation}
    \overline{\sigma}\coloneqq \sum_{i=1}^{d} p_i \proj{\overline{\psi_i}}\text{,}
\end{equation}
where the complex conjugation of the vector $\psi_i$ is with respect to the distinguished basis $\{e_i\}_{i=1}^{d}$. Note that the flip parameter is nothing else but the purity of $\sigma$, and is, in some sense, independent of the constituent vectors of $\sigma$, whereas the $\bb$ parameter very much depends on them.

A naive upper and lower bound can be put on both $f(\sigma)$ and $b(\sigma)$:
\begin{equation}
    f(\sigma)\in\left[\frac{1}{d},1\right]\text{,} \qquad \qquad b(\sigma)\in\left[0,\frac{1}{d}\right]\text{.}
\end{equation}
The lower bound of $f(\sigma)$ is given by the Cauchy-Schwarz inequality and is reached when $p_i=1/d$. The upper bound of $f(\sigma)$ is trivial ($p_i^2\leq p_i$) and is reached when $p_i=\delta_{i,a}$ for some $a\in[d]$. The lower bound for $b(\sigma)$ is trivial as we are summing non-negative quantities and is attained when $p_i=\delta_{i,a}$ for some $a\in[d]$ and $\psi_a \perp \overline{\psi_a}$. The upper bound is again determined by the Cauchy-Schwarz inequality and is attained when $p_i=\delta_{i,a}$ for some $a\in[d]$ and $\psi_a=\overline{\psi_a}$. This implies that as $d\to\infty$, the set of parameters of the $\infty$-de~Finetti-extendible states will be the lower right line segment of the parameter space $(0,0)-(1,0)$.

\subsection{Tighter upper and lower bound to $b(\sigma)$ as a function of $\{p_i\}_{i=1}^{d}$}

A tighter upper and lower bound can be given to $b(\sigma)$ as a function of $\{p_i\}_{i=1}^{d}$ and this is presented in this section. We have that:
\begin{equation}
    b(\sigma)=\frac{1}{d}\sum_{i,j=1}^{d} p_ip_j |\inn{\psi_i}{\overline{\psi_j}}|^2\text{.}
\end{equation}

Let us focus on the sum and realise that in general $\left\{\overline{\psi_i}\right\}_{i=1}^{d}$ could be any orthonormal basis in the Hilbert space. Thus, in general, we have:
\begin{equation}
    \sum_{i,j=1}^{d} p_i p_j |\inn{\psi_i}{\overline{\psi_j}}|^2=\sum_{i,j=1}^{d} p_i p_j |\inn{v_i}{w_j}|^2=\sum_{i,j=1}^{d} p_i p_j A_{ij}\text{,}
\end{equation}
where $\{v_i\}_{i=1}^{d}$ and $\{w_i\}_{i=1}^{d}$ are orthonormal bases and the matrix $A_{ij}$ has been introduced. Notice that $A_{ij}$ is a doubly stochastic matrix, meaning that:
\begin{equation}
    \sum_{i=1}^{d}A_{ij}=\sum_{j=1}^{d} A_{ij}=1\text{.}
\end{equation}
The Birkhoff-von Neumann theorem states that any doubly stochastic matrix can be written as the convex sum of permutation matrices. Let $h\in \sym_d$ be a permutation. Thus:
\begin{equation}
    A=\sum_{h\in\sym_d} r_{h} P^{h}\text{,}
\end{equation}
where $r_{h}$ are non-negative numbers summing to 1, and $P^{h}$ is the permutation operator corresponding to the permutation $h\in\sym_d$, leading to:
\begin{equation}
    A_{ij}=\sum_{h\in\sym_d} r_{h} P^{h}_{ij}=\sum_{h\in\sym_d} r_{h} \delta_{j,h(i)}\text{.}
\end{equation}
Thus, we have
\begin{equation}
    \sum_{i,j=1}^{d} p_i p_j A_{ij} = \sum_{i,j=1}^{d} p_i p_j \sum_{h\in\sym_d} r_{h} \delta_{j,h(i)} = \sum_{h\in\sym_d} r_{h} \sum_{i=1}^{d} p_i p_{h(i)}\text{.}
\end{equation}
The convex sum of numbers can be upper and lower bounded by their maximal and minimal value. Thus, we have:
\begin{equation}
    \min_{h\in\sym_d} \sum_{i=1}^{d} p_i p_{h(i)} \leq\sum_{h\in\sym_d} r_{h} \sum_{i=1}^{d} p_i p_{h(i)} \leq \max_{h\in\sym_d} \sum_{i=1}^{d} p_i p_{h(i)}\text{.}
\end{equation}
Now we can use the rearrangement inequalities: without loss of generality, we can suppose that the numbers $\{p_{i}\}_{i=1}^{d}$ are in a non-decreasing order, that is, $p_1\leq p_2\leq \ldots \leq p_d$\footnote{If not, they can always be rearranged.}. This leads to the following upper and lower bounds:
\begin{equation}
    \frac{1}{d}\sum_{i=1}^{d} p_i p_{d+1-i} \leq b(\sigma) \leq \frac{1}{d}\sum_{i=1}^{d} p_i^2=\frac{1}{d}f\text{,}
\end{equation}
where on the right-hand side the sum is exactly $f/d$.

Note that these upper and lower bounds are reached. For the upper bound, it suffices to choose a basis that is invariant to complex conjugation, that is, $\overline{\psi_i}=\psi_i$. For the lower bound, one has to find a basis that has the following property $\overline{\psi_i}=\psi_{d+1-i}$. This is always possible.
An example if $d$ is even:
\begin{equation}
    \psi_{j}\coloneqq
    \begin{cases}
    \frac{1}{\sqrt{2}}(e_j+ie_{d+1-j})\text{,} &\text{if $j\leq \frac{d}{2}$,} \\ \\ 
    \overline{\psi_{d+1-j}}\text{,} &\text{if $j>\frac{d}{2}$.}
    \end{cases}
\end{equation}
An example if $d$ is odd:
\begin{equation}
    \psi_{j}\coloneqq
    \begin{cases}\
    \frac{1}{\sqrt{2}}(e_j+ie_{d+1-j})\text{,} &\text{if $j\leq \frac{d-1}{2}$,}\\ \\
    e_j\text{,} &\text{if $j=\frac{d-1}{2}+1$,}\\ \\
    \overline{\psi_{d+1-j}}\text{,} &\text{if $j>\frac{d-1}{2}+1$.}
    \end{cases}
\end{equation}
Note that the index $j$ has been used to differentiate between it and the imaginary unit $i$. It is easy to check that these form orthonormal bases.

Using these tight bounds, we are ready to express the maximal and minimal $b$ as functions of $f$ in the next two sections.

\subsection{The maximal $b$ parameter as a function of $f$}

In light of the previous section, it is trivial to find the maximal $b$ parameter as a function of $f$:
\begin{equation}
    b_{\max}(f)=\frac{f}{d}\text{.}
\end{equation}

\subsection{The minimal $b$ parameter as a function of $f$}

Unlike the maximal case, finding the minimal $b$ parameter as a function of $f$ is much more involved. We know that given a non-decreasing set of probabilities $\{p_i\}_{i=1}^{d}$, we have:
\begin{equation}
    b_{\min}(\{p_i\}_{i=1}^{d})=\frac{1}{d}\sum_{i=1}^{d} p_i p_{d+1-i}\text{,}
\end{equation}
while we also have:
\begin{equation}
    f(\{p_i\}_{i=1}^{d})=\sum_{i=1}^{d} p_i^2\text{,}
\end{equation}
with $f\in[1/d,1]$.

There will be two cases depending on whether $d$ is even or odd.

\subsubsection{When $d$ is even:}

Note that $b_{\min}(\{p_i\}_{i=1}^{d})\equiv 0$ is achievable when $p_{d/2}=0$, and consequently, all probabilities with lower indices are also zero. This is the solution when $f\in[2/d,1]$. The solution for the other part of the range is a bit more involved. Let us introduce the following variables:
\begin{equation}
    u_i\coloneqq p_i + p_{d+1-i}\text{,} \qquad i\in[1,\ldots,d/2]\text{.}
\end{equation}
Note that the sum of $u_i$ is 1. Examine the following:
\begin{equation}
    \sum_{i=1}^{d/2} u_i^2 = f(\{p_i\}_{i=1}^{d})+db(\{p_i\}_{i=1}^{d})\text{.}
\end{equation}
Now, let us use the quadratic-arithmetic mean inequality on the left-hand side:
\begin{align}
    \frac{2}{d} \sum_{i=1}^{d/2} u_i &\leq \sqrt{\frac{2}{d}\sum_{i=1}^{d/2} u_i^2}\text{,}\\
    \frac{4}{d^2} &\leq \frac{2}{d}\sum_{i=1}^{d/2} u_i^2 = \frac{2}{d}\left[ f(\{p_i\}_{i=1}^{d}) + d b(\{p_i\}_{i=1}^{d})\right]\text{,}\\
    \frac{2}{d^2} - \frac{f(\{p_i\}_{i=1}^{d})}{d} &\leq b(\{p_i\}_{i=1}^{d})\text{,}
\end{align}
where, in the second step, we have squared the inequality and used that the sum of $u_i$ is 1. Note that, naturally, when $f>2/d$, the left-hand side is zero and the lower bound is no longer tight. For $f\leq 2/d$, equality can be achieved when $u_i=u_j$ for all $i,j$. From the requirement that probabilities sum to 1 we have:
\begin{equation}
    u_i=u_j=\frac{2}{d}\text{.}
\end{equation}
Leading to:
\begin{equation}
    f(\{p_i\}_{i=1}^{d})=\sum_{i=1}^{\frac{d}{2}} \left(2p_i^2-\frac{4}{d}p_i+\frac{4}{d^2}\right)\text{,}
\end{equation}
which is minimal when $p_i=1/d$ for all $i$ and maximal when $p_i=0$ for all $i$, leading to the range $f\in[1/d,2/d]$.
Thus, the solution is:
\begin{equation}
    b_{\min}^{d\text{ even}}(f)=\begin{cases} \frac{2-df}{d^2}\text{,} & \text{if $f\in\left[\frac{1}{d},\frac{2}{d}\right]$,} \\
    0\text{,} & \text{if $f\in\left[\frac{2}{d},1\right]$.} \end{cases}
\end{equation}

\subsubsection{When $d$ is odd:}

Note again that $b_{\min}(\{p_i\}_{i=1}^{d})\equiv 0$ is achievable when $p_{(d+1)/2}=0$, and consequently, all probabilities with lower indices are also zero. This is the solution when $f\in[2/(d-1),1]$.

A similar reasoning as before brings us to the rest of the solution. Let us introduce the following variables:
\begin{align}
    u_i&\coloneqq p_i + p_{d+1-i}\text{,} \qquad i\in[1,\ldots, (d-1)/2]\text{,}\\
    v&\coloneqq p_{(d+1)/2}\text{.}
\end{align}
Note the difference with the even case being that one of the probabilities is unpaired. We have:
\begin{equation}
    2v^2 + \sum_{i=1}^{(d-1)/2} u_i^2= f(\{p_i\}_{i=1}^{d})+db(\{p_i\}_{i=1}^{d})\text{.}
\end{equation}
Using the previous reasoning of the means leads us to:
\begin{equation}\label{eq:lower_odd}
    \frac{2-4v+2dv^2}{d(d-1)}-\frac{f(\{p_i\}_{i=1}^{d})}{d} \leq b(\{p_i\}_{i=1}^{d})\text{.}
\end{equation}
Equality is achieved when $u_i=u_j$ for all $i,j$. From the requirement that probabilities sum to 1 we have:
\begin{equation}
    u_i=u_j=\frac{2(1-v)}{d-1}\text{.}
\end{equation}
Note, however, that the lower bound is not yet manifest and some minimisation is left to be done. Let us imagine that $v$ and $f$ are independent of each other.

First, let us assume that $f(\{p_i\}_{i=1}^{d})$ is fixed. Then the left-hand side of Equation \eqref{eq:lower_odd} is minimal if $v=1/d$ is possible. This is possible if $f\in[1/d,(2d-1)/d^2]$ which is obtained by a similar reasoning as before: the lower limit is attainable when all probabilities are $1/d$ and the upper limit is attainable when all probabilities with a lower index than $(d+1)/2$ are zero.

Second, let us assume that $0\leq v \leq 1/d$ is fixed. Then the left-hand side of Equation \eqref{eq:lower_odd} is minimal when $f(\{p_i\}_{i=1}^{d})$ is maximal, that is, when all probabilities with a smaller index than $(d+1)/2$ are zero, leading to a range of $f$ in $[(2d-1)/d^2, 2/(d-1)]$, and the following expression for $v$ and $b$:
\begin{equation}
    v=\frac{2-\sqrt{(d+1)(df+f-2)}}{d+1}\text{,}
\end{equation}

\begin{equation}
    \frac{v^2}{d}=\frac{\left(2-\sqrt{(d+1)(df+f-2)}\right)^2}{d(d+1)^2}=b(\{p_i\}_{i=1}^{d})\text{.}
\end{equation}

Third, when $f>2/(d-1)$ the optimal solution is already known to be $b_{\min}(\{p_i\}_{i=1}^{d})\equiv 0$.

Given how these ranges in $v$ and $f$ fit together, one concludes that the solution is simply their concatenation on the different ranges:
\begin{equation}
    b_{\min}^{d\text{ odd}}(f)=\begin{cases} \frac{2-df}{d^2}\text{,} & \text{if $f\in\left[\frac{1}{d},\frac{2d-1}{d^2}\right]$,} \\
     \frac{\left(2-\sqrt{(d^2-1)f-2(d-1)}\right)^2}{d(d+1)^2}\text{,} & \text{if $f\in\left[\frac{2d-1}{d^2},\frac{2}{d-1}\right]$,} \vspace{0.2cm} \\
    0\text{,} & \text{if $f\in\left[\frac{2}{d-1},1\right]$.} \end{cases}
\end{equation}

\end{appendices}

\end{document}